\title{On Classical Integrability of the Hydrodynamics of Quantum Integrable Systems}
\author[1]{Vir B. Bulchandani}
\affil[1]{\it{Department of Physics, University of California, Berkeley, Berkeley CA 94720, USA}}
\newcommand{\defeq}{\mathrel{\mathop:}=}
\newtheorem{defn}{Definition}
\newtheorem{conj}{Conjecture}
\newtheorem{theorem}{Theorem}
\newtheorem{prop}{Proposition}
\newtheorem{corollary}{Corollary}
\numberwithin{equation}{section}
\begin{document}
\maketitle
\nocite{apsrev41Control}
\bibliographystyle{apsrev4-1}
\begin{abstract}
Recently, a hydrodynamic description of local equilibrium dynamics in quantum integrable systems was discovered. In the diffusionless limit, this is equivalent to a certain ``Bethe-Boltzmann'' kinetic equation, which has the form of an integro-differential conservation law in $(1+1)$D. The purpose of the present work is to investigate the sense in which the Bethe-Boltzmann equation defines an ``integrable kinetic equation''. To this end, we study a class of $N$ dimensional systems of evolution equations that arise naturally as finite-dimensional approximations to the Bethe-Boltzmann equation. We obtain non-local Poisson brackets and Hamiltonian densities for these equations and derive an infinite family of first integrals, parameterized by $N$ functional degrees of freedom. We find that the conserved charges arising from quantum integrability map to Casimir invariants of the hydrodynamic bracket and their group velocities map to Hamiltonian flows. Some results from the finite-dimensional setting extend to the underlying integro-differential equation, providing evidence for its integrability in the hydrodynamic sense.
\end{abstract}

\tableofcontents
\section{Introduction}
\subsection{Background}
In the past year, a hydrodynamic description of dynamics in quantum integrable systems has emerged as a powerful tool for probing their out-of-equilibrium behaviour \cite{BB1,BB2,BB3,BB5,BB6,BB4,BB7,BB8}. Such systems are characterized by an infinity of local conservation laws, which give rise to unusual transport and thermalization properties. The insight underlying recent progress in the field is that making a local-density-type approximation for all of these local conserved charges implies a conservation law at the level of the local pseudo-momentum distribution \cite{BB1,BB2}. Moreover, the velocity associated with this pseudo-momentum turns out to be simply the single-excitation quasi-particle velocity, known from thermodynamic Bethe ansatz (TBA) \cite{YY,Takahashi}. The resulting kinematics takes the form of an integro-differential equation
\begin{equation}
\partial_t \rho(x,t,k) + \partial_x (\rho(x,t,k)v[\rho](k))=0,
\label{BBinit}
\end{equation}
where $v[\rho](k)$ denotes the quasi-particle velocity of the state with pseudo-momentum distribution $\{\rho(x,t,k) : k\in\mathbb{R}\}$. The intuitive meaning of this equation is that ``occupied quantum numbers are locally conserved''. Despite its simple formulation, this ``Bethe-Boltzmann'' equation has been found to capture local equilibrium dynamics in quantum integrable systems to a remarkable degree of accuracy. In the pioneering works \cite{BB1,BB2} on this equation, coordinates which diagonalize the system Eq. \eqref{BBinit} were discovered and used to obtain scale-invariant solutions. The natural Riemann invariants turn out to be the local Fermi factors, defined as the ratios $\theta(x,t,k) = \rho(x,t,k)/\rho^t(x,t,k)$ of the pseudo-momentum density to the total available density of states at every point. In terms of these, the Bethe-Boltzmann system assumes the diagonal form
\begin{equation}
\partial_t \theta(x,t,k) + v[\theta](k)\partial_x\theta(x,t,k)=0.
\label{seed}
\end{equation}
An efficient numerical scheme to solve this integro-differential equation (essentially a semi-Lagrangian method\footnote{We thank P.-O. Persson for this observation.}) was presented in an earlier work \cite{BB6}. In that paper, we also noted that a class of natural discretizations of the system \eqref{seed} exhibit surprising mathematical properties; for example, for every discretization dimension $N \geq 2$, the characteristic velocities are linearly degenerate and semi-Hamiltonian, and both of these observations can be extended to the underlying equation \eqref{seed}. The purpose of the present work is to frame these observations rigorously and develop their consequences in detail. In particular, we exploit the powerful theory of semi-Hamiltonian systems, developed by Tsar{\"e}v, Ferapontov and collaborators \cite{Tsarev85,Fera,FP,Bogd}, to obtain a family of non-local Poisson structures in the discrete setting, which allows us to construct a conjectural Poisson structure for the full, integro-differential dynamics \eqref{seed}.

The study of integrable kinetic equations has been an active area of research for some time, beginning with Gibbons's discovery in 1981 \cite{Gibbons} of the integrable Benney hydrodynamic chain \cite{Benney,KuperII,ZakharovBenney} within the moments of the Vlasov kinetic equation. This type of integrability was later characterized in terms of the existence of infinitely many semi-Hamiltonian reductions  \cite{GibbonsTsarev1,GibbonsTsarev2}, and the latter property was then formalized as part of a definition of integrability for $(2+1)$D quasilinear systems of PDEs \cite{FerapontovKhusnut}. This has since led to a fairly detailed understanding of the meaning of integrability for the Vlasov equation \cite{FerapontovMarshall,Vlasov,Odesskii}. More recently, a new type of kinetic equation was discovered in the physical context of the KdV soliton gas \cite{Zakharov,El1,El2}. This equation does not fit neatly into the existing notions of integrability for the Vlasov equation, but nonetheless exhibits a family of linearly degenerate, semi-Hamiltonian reductions \cite{El3}, which lie in the same equivalence class as the family of discretizations we obtain below.

The likely origin of this similarity is as follows. It turns out that the Bethe-Boltzmann equation Eq. \eqref{BBinit} can be written as the El-Kamchatnov equation \cite{El2} for a ``fictitious'' gas of classical solitons, provided we make the identification
\begin{equation}
\Delta x(k,k') = \frac{1}{p'(k)}\varphi'(k-k'),
\label{phaseident}
\end{equation}
where $\Delta x(k,k')$ denotes the position shift upon a collision of two classical solitons with spectral parameters $k$ and $k'$, $\varphi(k)$ denotes the two-particle phase shift of the underlying quantum integrable model, with rapidity parameter $k$, and $p(k)$ denotes the bare single-particle momentum. This mapping follows upon comparing the expressions given for the interaction-dressed velocity in Refs. \cite{BB1} and \cite{El2}, which both have the form
\begin{equation}
v(k) = v_0(k) + \int dk'\,\Delta x(k,k') [v(k)-v(k')]\rho(k'),
\label{EKdress}
\end{equation}
where $v_0(k)$ denotes the bare soliton or quasiparticle velocities. The kinetic theory interpretation of Eq. \eqref{EKdress}, noted by El and Kamchatnov \cite{El2}, was obtained independently in Ref. \cite{DS4}. The connection with classical solitons has also been elucidated in a corpus of recent work \cite{DS1,DS2,DS3} exploring analogies between the Bethe-Boltzmann equation and the dynamics of the classical hard-rod gas \cite{hardrod,SpohnBook}. In fact, the methods developed in the present work apply directly to the kinetic equation for the hard-rod gas, which is continuously linearly degenerate and semi-Hamiltonian in the sense of Prop. \ref{ctsWNS} below. It is less clear that such methods extend to the kinetic equation considered in Ref. \cite{El3}, because the logarithmic singularity in the KdV phase shift at $k=k'$ \cite{El4} leads to difficulties in the continuum limit.

This brings us to an important point. Whereas the El-Kamchatnov equation only appears to be integrable for a restricted class of classical soliton gases \cite{El3}, the structure of thermodynamic Bethe ansatz underlying the Bethe-Boltzmann equation implies that such integrability is \emph{generic} in the quantum case. Moreover, the Riemann invariants $\theta(x,t,k)$ are built into the definition of Eq. \eqref{BBinit} from TBA \cite{BB1,BB2}, whereas their construction for the El-Kamchatnov equation is non-trivial \cite{El3}. It is these pleasant properties, which are fundamental consequences of quantum integrability, that give rise to the integrability properties we obtain in the the continuum limit (see Section \ref{idl} below). This provides a partial answer to the question posed in Refs. \cite{El3,El4}: in what circumstances is the kinetic equation described by Eqs. \eqref{BBinit} and \eqref{EKdress} integrable?

\subsection{Guide to Results}

The paper is structured as follows. In Section \ref{sec2}, we summarize the required results on diagonal systems of hydrodynamic type. In Section \ref{sec3}, we exhibit a class of systems with the general structure of finite-dimensional approximations to Eq. \eqref{seed}. In Proposition \ref{mainprop}, we prove that these systems are linearly degenerate and semi-Hamiltonian and in Theorem \ref{mainthm}, we obtain a non-local Poisson bracket \cite{FeraNL} of the form
\begin{align}
\{I,J\} = \sum_{i,j} \int dx\,\frac{\delta I}{\delta \theta^i(x)} A^{ij} \frac{\delta J}{\delta \theta^j(x)}
\end{align}
where
\begin{align}
\label{DNF}
A^{ij} = g^{ii}\delta^{ij}\frac{d}{dx} - g^{ii}\Gamma^{j}_{ik}\theta^k_x + \sum_{m=1}^M\epsilon_m\chi^{(m)}_i\theta^i_x d_x^{-1}\chi^{(m)}_j\theta^j_x.
\end{align}
Explicit formulas for the various coefficients in this bracket are given in Section \ref{sec3}. In Section \ref{sec4}, we deduce some corollaries for the Euler-scale hydrodynamics of the Lieb-Liniger gas. In particular, we show that in each $N$-point discretization, the charges arising from quantum integrability $\{\mathbf{Q}^{(n)}\}_{n=0}^{\infty}$ give rise to sets of independent Casimirs \cite{maltsev} $\{\mathcal{Q}^{(n)}\}_{n=0}^{N-1}$ and Hamiltonians $\{H^{(n)}\}_{n=0}^{N-1}$, which satisfy the algebra
\begin{equation}
\{\mathcal{Q}^{(m)},\mathcal{Q}^{(n)}\}=0, \quad \{H^{(m)},\mathcal{Q}^{(n)}\}=0, \quad \{H^{(m)},H^{(n)}\}=0, \quad m,n = 0,1,\ldots,N-1.
\label{chargealg}
\end{equation}
We obtain these quantities explicitly, including the Hamiltonian $H^{(1)}$ that generates the discretized evolution equations $\partial_t \theta^i + v_i \partial_x \theta^i=0$. We find that several of the results proved in the discretized setting extend to the integro-differential limit, including suitably generalized notions of linear degeneracy, the semi-Hamiltonian property and the generalized hodograph transform. This leads us, in Conjecture \ref{conj1}, to propose a ``continuum'' Poisson bracket of the form
\begin{align}
\{I,J\} = \int dx \int dk \, dk' \frac{\delta I}{\delta \theta(x,k)} A(k,k') \frac{\delta J}{\delta \theta(x,k')},
\end{align}
where the operator-valued kernel $A(k,k')$ is given by
\begin{align}
\label{ctsbrackprev}
\nonumber A(k,k') = &\frac{\delta(k-k')}{P'(k)^2}\left[\frac{d}{dx} - \int dk''\, \frac{\alpha(k,k'')P'(k'')}{P'(k)}\theta_x(x,k'')\right] + \frac{\alpha(k,k')}{P'(k)P'(k')}(\theta_x(x,k)-\theta_x(x,k'))\\
- &\frac{\theta_x(x,k)}{P'(k)}\int dk''\,\alpha(k,k'')d_x^{-1}\alpha(k'',k')\frac{1}{P'(k')}\theta_x(x,k').
\end{align}
Here, $P(k)$ denotes the dressed quasiparticle momentum and $\alpha(k,k')$ is defined in Eq. \eqref{kern} below. This appears to define an infinite-dimensional version of the Ferapontov bracket \cite{FeraNL} in the same sense that the bracket introduced for the Vlasov equation in Ref. \cite{Vlasov} defines an infinite-dimensional version of the Dubrovin-Novikov bracket. In Proposition \ref{sensible}, the bracket \eqref{ctsbrackprev} is found to generalize some key properties of the finite-dimensional Hamiltonian structure, in particular the algebra \eqref{chargealg}. In contrast to the $N$-dimensional brackets, which possess at most $N$ linearly independent Hamiltonian flows, the continuum bracket conjectured in the integro-differential limit exhibits \emph{countably} many such flows. The formulation of these results for more general quantum integrable models is given in the Appendix. 

We now summarize the physical meaning of these results. The Casimirs $\mathcal{Q}^{(n)}$ alluded to above turn out to be precisely the conserved charges $\mathbf{Q}^{(n)}$ of the underlying quantum system, as computed in a local density approximation \cite{BB1,BB2}, while each Hamiltonian $H^{(n)}$ is the generator of a ``dressed group velocity'' for the charge $\mathbf{Q}^{(n+1)}$. In a different vein, we find that for any discretization dimension $N$, the Bethe-Boltzmann evolution is obtainable as the image of the non-interacting evolution under a generalized reciprocal transformation (Section \ref{recip}), of a type originating in classical gas dynamics \cite{FP,recip}. This is related to the existence of a new, uncountable family of first integrals (Proposition \ref{intprop}) of the discretized Bethe-Boltzmann equation, whose analogues we are able to construct for the full integro-differential system \eqref{seed}.

Before proceeding, we ought to comment on our various usages of the term  ``integrability''. Here, a ``quantum integrable model'' is a Bethe-ansatz solvable quantum system in one spatial dimension, with diagonal scattering. Meanwhile, the finite-dimensional approximations to the Bethe-Boltzmann equation are ``integrable'' in the sense that they can be obtained from a reciprocal transformation of free-particle trajectories. Finally, we conjecture that the integro-differential system \eqref{seed} is ``integrable'' in the same sense as its finite-dimensional approximants.

\section{Definition and Properties of Semi-Hamiltonian Systems}
\label{sec2}
We begin by summarizing the main results in the theory of semi-Hamiltonian systems. Thus consider the non-linear system of PDEs
\begin{equation}
\partial_t \theta^i + v_i(\theta^1,\theta^2,\ldots,\theta^N) \partial_x\theta^i = 0.
\label{sys}
\end{equation}
This defines a diagonal, quasilinear system of evolution equations in $1+1$ dimensions. The space $\mathbb{R}^N$ of vectors $(\theta^1,\theta^2,\ldots,\theta^N)$ will henceforth be referred to as the \emph{target space}, and the space of fields $\theta^i : \mathbb{R}^2 \to \mathbb{R}^N$ will be called the \emph{phase space}. The study of local Hamiltonian structures for such ``systems of hydrodynamic type'' was initiated by Dubrovin and Novikov \cite{Dub}. Soon afterwards, Tsar{\"e}v discovered a weakening of the local Hamiltonian property which retained some vestiges of integrability.
\begin{defn}
(Tsar{\"e}v \cite{Tsarev85})
The system of equations \eqref{sys} is called a \emph{semi-Hamiltonian system} if the characteristic vector field $v_i$ possesses the semi-Hamiltonian property
\begin{equation}
\partial_k\left(\frac{\partial_j v_i}{v_j-v_i}\right) = \partial_j \left(\frac{\partial_kv_i}{v_k-v_i}\right) \quad i\neq j \neq k \neq i.
\end{equation}
where $\partial_j v_i \defeq \partial v_i/\partial \theta^j$.
\end{defn}
For such systems, the natural notion of Poisson bracket, due to Dubrovin-Novikov \cite{Dub} and defined below, generically fails to satisfy the Jacobi identity. However, it was conjectured by Ferapontov \cite{Bogd,FeraNL} that all semi-Hamiltonian systems \textit{are} in fact Hamiltonian with respect to certain non-local extensions of the Dubrovin-Novikov bracket. Moreover, such systems exhibit various remarkable properties analogous to integrability. For example, they may be solved by quadrature and possess infinitely many first integrals of motion. These properties are related to the following class of symmetries.
\begin{theorem}
\label{flowthm}
(Tsar{\"ev} \cite{Tsarev85})
For a semi-Hamiltonian system, upon defining coefficients
\begin{equation}
\Gamma^i_{ij} = \frac{\partial_j v_i}{v_j-v_i}
\end{equation}
for $i\neq j$, the overdetermined system of equations
\begin{equation}
\partial_jw_i = \Gamma^i_{ij}(w_j-w_i), \quad i \neq j 
\label{flow}
\end{equation}
is consistent. This has infinitely many solutions $w_i(\theta)$ which define flows commuting with the system \eqref{sys} and with each other, parameterized by $N$ functions of a single variable. We shall call these the \emph{hydrodynamic symmetries} of the system under consideration.
\end{theorem}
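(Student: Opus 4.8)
The statement has three components: (i) the overdetermined system \eqref{flow} is consistent, i.e.\ its Frobenius compatibility conditions hold; (ii) its solution space is parameterized by $N$ functions of one variable; and (iii) the associated flows commute with \eqref{sys} and with one another. I would treat consistency first, reading \eqref{flow} as a total differential system in which, for each fixed $i$, the transverse derivatives $\partial_j w_i$ ($j\neq i$) are prescribed while the diagonal derivative $\partial_i w_i$ is left free. By the Frobenius theorem the only obstruction is the failure of the prescribed mixed partials to commute, so the plan is to compute $\partial_k\partial_j w_i - \partial_j\partial_k w_i$ for each distinct triple $i,j,k$ and show that it vanishes as a consequence of the semi-Hamiltonian property.

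Concretely, I would differentiate \eqref{flow} and substitute the equations for $\partial_k w_j = \Gamma^j_{jk}(w_k-w_j)$ and $\partial_k w_i = \Gamma^i_{ik}(w_k-w_i)$, then collect the result as a linear combination of the free data $w_i, w_j, w_k$. Since these values may be prescribed independently at a point, each coefficient must vanish separately. The coefficient of $w_i$ reproduces the semi-Hamiltonian identity $\partial_k\Gamma^i_{ij} = \partial_j\Gamma^i_{ik}$ verbatim, so this condition is necessary. The coefficients of $w_j$ and $w_k$ yield the single auxiliary relation
\[
\partial_k \Gamma^i_{ij} = \Gamma^i_{ij}\Gamma^j_{jk} + \Gamma^i_{ik}\Gamma^k_{kj} - \Gamma^i_{ij}\Gamma^i_{ik},
\]
which, unlike the semi-Hamiltonian identity, contains the second derivative $\partial_j\partial_k v_i$ through $\partial_k\Gamma^i_{ij}$.

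The heart of the proof --- and the step I expect to be the main obstacle --- is to show that this auxiliary relation is \emph{implied} by the semi-Hamiltonian condition rather than independent of it. The mechanism is that the semi-Hamiltonian identity, written out using $\Gamma^i_{ij} = \partial_j v_i/(v_j - v_i)$ and the equality of mixed partials $\partial_j\partial_k v_i = \partial_k\partial_j v_i$, can be solved for the second derivative $\partial_j\partial_k v_i$ purely in terms of first derivatives of $v$ and the velocity gaps $v_j - v_i$ (this uses $v_k\neq v_j$, which holds wherever the $\Gamma^i_{ij}$ are defined). Substituting this expression back into the auxiliary relation, I expect every second-derivative term to cancel, leaving a rational identity in the first derivatives alone, which can then be verified by direct algebra, keeping careful track of the antisymmetry $v_j - v_i = -(v_i-v_j)$ and the distinctness of $i,j,k$. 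This establishes consistency, whereupon the Frobenius theorem guarantees local solutions.

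For the counting in (ii), I would note that \eqref{flow} never constrains the diagonal derivatives $\partial_i w_i$; one is therefore free to prescribe each $w_i$ as an arbitrary function of the single variable $\theta^i$ along the $i$-th coordinate direction through a base point, and consistency propagates this Cauchy data to a neighborhood, giving exactly $N$ functions of one variable. For (iii), I would verify directly that \eqref{flow} is precisely the commutativity condition for two diagonal flows: computing the commutator $[\partial_t,\partial_\tau]\theta^i$ of $\theta^i_t = v_i\theta^i_x$ and $\theta^i_\tau = w_i\theta^i_x$ reduces, after the $j=i$ terms cancel, to $\sum_{j\neq i}\left[\partial_j v_i\,(w_j - w_i) - \partial_j w_i\,(v_j - v_i)\right]\theta^j_x\theta^i_x$, and each bracket vanishes exactly when $\partial_j w_i = \Gamma^i_{ij}(w_j - w_i)$. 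Since $w_i = v_i$ solves \eqref{flow} trivially and every solution $w$ satisfies $\partial_j w_i/(w_j - w_i) = \Gamma^i_{ij}$ with the \emph{same} coefficients, this shows simultaneously that each hydrodynamic symmetry commutes with \eqref{sys} and that any two of them commute with each other.
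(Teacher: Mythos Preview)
The paper does not actually supply a proof of this theorem: it is quoted in Section~\ref{sec2} as background, with attribution to Tsar{\"e}v \cite{Tsarev85}, and no argument is given. There is therefore nothing in the paper to compare your proposal against.

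That said, your proposal is correct and is essentially Tsar{\"e}v's original argument. The compatibility analysis is accurate: equating mixed partials of \eqref{flow} produces, as coefficients of $w_i$, $w_j$, $w_k$, precisely the semi-Hamiltonian identity together with the auxiliary relation
\[
\partial_k \Gamma^i_{ij} \;=\; \Gamma^i_{ij}\Gamma^j_{jk} + \Gamma^i_{ik}\Gamma^k_{kj} - \Gamma^i_{ij}\Gamma^i_{ik},
\]
and this relation is indeed a consequence of the semi-Hamiltonian condition via the mechanism you describe (solving $\partial_k\Gamma^i_{ij}=\partial_j\Gamma^i_{ik}$ for the mixed second derivative $\partial_j\partial_k v_i$, which is possible since the coefficient $(v_k-v_j)/[(v_j-v_i)(v_k-v_i)]$ is nonzero). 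Your counting of the solution space and your derivation of the commutativity condition are also standard and correct. One small point of care: in part~(iii) you invoke $\partial_j w_i/(w_j-w_i)=\Gamma^i_{ij}$ for \emph{any} solution $w$, which is only literally valid where $w_j\neq w_i$; the cleaner phrasing is that any two solutions of \eqref{flow} share the same coefficients $\Gamma^i_{ij}$ in the linear system $\partial_j w_i = \Gamma^i_{ij}(w_j-w_i)$, and the commutator computation then goes through without division.
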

Such flows give rise to the ``generalized Hodograph transform'' method for solving semi-Hamiltonian systems; the main result is as follows.
\begin{theorem}
\label{hod}
(Tsar{\"e}v \cite{Tsarev85})
Each flow $w_i(\theta)$ solving \eqref{flow} can be used to define a system of equations
\begin{equation}
w_i(\theta) = x-v_i(\theta)t.
\end{equation}
Solving these for $\theta$ yields a solution to \eqref{sys}, and locally, every smooth solution arises in this way.
\end{theorem}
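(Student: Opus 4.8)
The plan is to establish the forward implication by direct differentiation of the implicit relations, and then to recover the converse from the abundance of hydrodynamic symmetries guaranteed by Theorem \ref{flowthm}.

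First I would suppose that $\theta(x,t)$ is defined implicitly by $w_i(\theta) = x - v_i(\theta)t$ and differentiate each of these $N$ relations with respect to $x$ and to $t$. Introducing the matrix $M_{ij} \defeq \partial_j w_i + t\,\partial_j v_i$, the two resulting sets of relations collapse to the linear systems $\sum_j M_{ij}\theta^j_x = 1$ and $\sum_j M_{ij}\theta^j_t = -v_i$. The crux of the argument is to show that $M$ is diagonal on the solution locus. For $i \neq j$ the flow equation \eqref{flow} gives $\partial_j w_i = \Gamma^i_{ij}(w_j - w_i)$, while the definition of $\Gamma^i_{ij}$ gives $\partial_j v_i = \Gamma^i_{ij}(v_j - v_i)$; hence $M_{ij} = \Gamma^i_{ij}\big[(w_j - w_i) + t(v_j - v_i)\big]$. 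Subtracting the hodograph relations for the indices $i$ and $j$ yields $w_j - w_i = -(v_j - v_i)t$, so the bracket vanishes and $M_{ij} = 0$ for $i \neq j$.

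With $M$ diagonal, the linear systems invert immediately to give $\theta^i_x = 1/M_{ii}$ and $\theta^i_t = -v_i/M_{ii}$, whence $\theta^i_t + v_i\theta^i_x = 0$ for each $i$, so that $\theta$ solves \eqref{sys}. This step is valid wherever the implicit function theorem applies, that is wherever $\det M = \prod_i M_{ii} \neq 0$; the set where some $M_{ii} = \partial_i w_i + t\,\partial_i v_i$ vanishes is precisely where the hodograph parameterization degenerates (a gradient catastrophe), which is why the statement is only local.

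For the converse, given a non-degenerate smooth solution $\theta(x,t)$, I would read off candidate symmetries by setting $w_i \defeq x - v_i(\theta)t$ along the image of the solution map, and match the initial data through the relation $w_i(\theta(x,0)) = x$. The freedom of $N$ functions of one variable in the solutions of \eqref{flow}, furnished by Theorem \ref{flowthm}, is exactly what is needed to prescribe this data and extend it to a genuine hydrodynamic symmetry $w_i(\theta)$ on a target-space neighborhood; uniqueness for both the quasilinear system \eqref{sys} and the overdetermined flow system then forces the given solution to coincide with the hodograph solution it defines. I expect this converse to be the main obstacle: unlike the forward computation, it requires controlling the existence and matching of symmetries off the two-dimensional solution surface, and it is here that the semi-Hamiltonian property --- through the consistency of \eqref{flow} --- is indispensable.
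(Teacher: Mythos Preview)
The paper does not prove this theorem: Section~\ref{sec2} is an explicit survey of known results, and Theorem~\ref{hod} is simply quoted from Tsar{\"e}v~\cite{Tsarev85} without proof. There is therefore no paper proof to compare against.

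That said, your forward direction is exactly the classical Tsar{\"e}v computation and is carried out correctly: the key observation that the off-diagonal entries $M_{ij}$ vanish on the hodograph locus because $(w_j-w_i)+t(v_j-v_i)=0$ is precisely the mechanism in the original proof, and your handling of the degeneracy condition $\prod_i M_{ii}\neq 0$ is also standard. For the converse you have correctly identified both the strategy and the difficulty. The point that deserves a little more care is this: the initial curve $x\mapsto\theta(x,0)$ is one-dimensional in $\mathbb{R}^N$, and you are prescribing $N$ values $w_1=\cdots=w_N=x$ along it; one must check that this is admissible Cauchy data for the overdetermined system \eqref{flow}. In Tsar{\"e}v's argument this is handled by the non-degeneracy hypothesis $\theta^i_x\neq 0$ together with the consistency of \eqref{flow}, which lets one propagate the data off the curve coordinate by coordinate. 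Your invocation of the $N$ functional degrees of freedom from Theorem~\ref{flowthm} is the right counting, but the matching is not quite automatic and would need to be spelled out to make the converse rigorous.
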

Thus the local structure of smooth solutions to semi-Hamiltonian systems is fully characterized. We note that the coefficients $\Gamma^i_{ij}$ may be associated with a diagonal metric on the target space, via the formula
\begin{equation}
\Gamma^i_{ij} = \partial_j \log \sqrt{g_{ii}}.
\label{christ}
\end{equation}
The $\Gamma^i_{ij}$ are the Christoffel symbols associated with the metric $g_{ii}$ defined by \eqref{christ}. The unifying theme in the theory of systems of hydrodynamic type is that in coordinates $\theta^j$ such that the system \eqref{sys} is diagonal, the differential geometry of the metric $g_{ii}$ dictates the allowed Hamiltonian structures on the space of fields. This has lead to some beautiful analogies between the geometry of surfaces and the study of hydrodynamic reductions of integrable PDEs \cite{Dub,FeraNL,Mokh}. The next result characterizes first integrals of the semi-Hamiltonian evolution. We first define a class of relevant functionals.
\begin{defn}
A \emph{hydrodynamic functional} on the space of fields is an expression of the form $H = \int dx\, h(\theta^1,\theta^2,\ldots,\theta^N)$, where $h$ depends solely on $\theta$ and not its higher derivatives $\theta_x,\theta_{xx},\ldots$. 
\end{defn}
We then have the following result.
\begin{theorem}
\label{firstint}
(Tsar{\"e}v \cite{Tsarev85})
A hydrodynamic functional $H$ with density $h(\theta)$ defines a first integral of the evolution \eqref{sys} if and only if $h(\theta)$ satisfies the overdetermined system of equations
\begin{equation}
\partial_i\partial_jh - \Gamma^i_{ij}\partial_ih - \Gamma^j_{ji}\partial_jh= 0
\end{equation}
for $i \neq j$. This is consistent by the semi-Hamiltonian property, and gives rise to infinitely many linearly independent first integrals, parameterized by $N$ functional degrees of freedom.
\end{theorem}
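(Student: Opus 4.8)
The plan is to prove the characterization by a direct variational computation, and then to establish consistency and count solutions by treating the resulting equations as an overdetermined system in involution.

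First I would differentiate $H$ along the flow \eqref{sys}. Since $h$ depends only on the fields,
\begin{equation}
\frac{dH}{dt} = \int dx \sum_i \partial_i h\, \partial_t\theta^i = -\int dx\,\sum_i v_i\,\partial_i h\,\partial_x\theta^i.
\end{equation}
Because the profile $\theta(\cdot)$ at any fixed time is arbitrary, $H$ is a first integral if and only if the integrand is a total $x$-derivative of some density $Q(\theta)$; equivalently, the target-space one-form $\sum_i v_i\,\partial_i h\, d\theta^i$ must be closed, hence exact on the simply-connected target $\mathbb{R}^N$. Writing out the closedness condition $\partial_j(v_i\partial_i h) = \partial_i(v_j\partial_j h)$, expanding, and dividing by $v_i - v_j$ (nonzero for $i\neq j$ by strict hyperbolicity, which is also what makes the $\Gamma^i_{ij}$ well-defined), I obtain exactly $\partial_i\partial_j h - \Gamma^i_{ij}\partial_i h - \Gamma^j_{ji}\partial_j h = 0$ upon inserting $\Gamma^i_{ij} = \partial_j v_i/(v_j - v_i)$ and $\Gamma^j_{ji} = \partial_i v_j/(v_i - v_j)$. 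This settles the ``if and only if'', with the conserved flux recovered from $\partial_i Q = v_i\,\partial_i h$.

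Next I would treat these equations as a linear overdetermined system for $h$ that prescribes every off-diagonal second derivative $\partial_i\partial_j h$ ($i\neq j$) in terms of the gradient, leaving the diagonal derivatives free. To test consistency I would compute the third derivative $\partial_i\partial_j\partial_k h$ for distinct $i,j,k$ in two differentiation orders, substituting the prescribed relations for every mixed partial that appears (only off-diagonal ones enter, since the indices are distinct), and then match the coefficients of the independent gradient components $\partial_i h$, $\partial_j h$, $\partial_k h$. This produces a family of scalar identities among the $\Gamma^i_{ij}$. The step I expect to be the main obstacle is showing that this whole family collapses to the single semi-Hamiltonian relation $\partial_k\Gamma^i_{ij} = \partial_j\Gamma^i_{ik}$: the direct $\partial_i h$-matching reproduces it immediately, but the remaining coefficient conditions look like genuinely new constraints. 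The mechanism that resolves this is that the semi-Hamiltonian property fixes the off-diagonal mixed derivatives $\partial_j\partial_k v_i$ (equivalently $\partial_k\Gamma^i_{ij}$) in terms of first derivatives of the velocities; substituting these values is precisely what forces the apparently independent conditions to coincide. The bookkeeping is delicate, but it requires no hypothesis beyond the semi-Hamiltonian property itself.

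Finally, with involutivity established I would count solutions. Because the system pins down all mixed second derivatives while leaving the diagonal ones free, the general solution is fixed by Cauchy data consisting of the restriction of $h$ to each of the $N$ coordinate directions through a base point, i.e. by $N$ arbitrary functions of a single variable (subject only to agreement at the base point). This yields the claimed infinite family of linearly independent first integrals parameterized by $N$ functional degrees of freedom, in exact parallel with the counting for the hydrodynamic symmetries of Theorem \ref{flowthm}, whose consistency rests on the very same semi-Hamiltonian identity.
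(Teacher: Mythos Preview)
The paper does not prove this theorem: Section~\ref{sec2} is a survey of Tsar{\"e}v's results, and Theorem~\ref{firstint} is stated with a citation to \cite{Tsarev85} but no argument. There is therefore nothing in the paper to compare your proposal against.

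That said, your outline is the standard proof and is essentially correct. The equivalence between ``$H$ is conserved'' and closedness of the one-form $\sum_i v_i\,\partial_i h\,d\theta^i$ is exactly right, and the division by $v_i-v_j$ to extract the stated PDE is the classical step. For the consistency check, your instinct that the $\partial_j h$- and $\partial_k h$-coefficient conditions are the delicate part is accurate: after applying the semi-Hamiltonian identity $\partial_k\Gamma^j_{ji}=\partial_i\Gamma^j_{jk}$ to rewrite the derivative term, the resulting relation among the $\Gamma$'s is precisely the one obtained by solving the semi-Hamiltonian condition for $\partial_j\partial_k v_i$ and substituting back; this is the computation Tsar{\"e}v carries out. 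Your solution count via free diagonal data along the coordinate axes is the standard involutivity argument and matches the parallel count in Theorem~\ref{flowthm}. One small point worth making explicit in a final write-up: the ``only if'' direction uses that the closedness condition must hold for \emph{every} profile $\theta(x)$, so that $\theta$ and $\theta_x$ can be treated as independent at a point; you gesture at this with ``the profile is arbitrary'' but it deserves a sentence.
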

It is useful to have a criterion for deciding when the system \eqref{sys} is truly Hamiltonian. The first such criterion to be discovered was the following geometrical statement.
\begin{theorem}
\label{LocalH}
(Dubrovin-Novikov \cite{Dub})
The system \eqref{sys} is Hamiltonian with respect to the Dubrovin-Novikov bracket 
\begin{equation}
\{I,J\} = \sum_{i,j} \int dx\,\frac{\delta I}{\delta \theta^i(x)} A^{ij} \frac{\delta J}{\delta \theta^j(x)}
\end{equation}
of functionals $(I,J)$ on phase space, where
\begin{equation}
A^{ij} = g^{ii}\delta^{ij}\frac{d}{dx} - g^{ii}\Gamma^{j}_{ik}\theta^k_x,
\end{equation}
if the elements
\begin{equation}
R^j_{iij} = \partial_i \Gamma^j_{ji} - \partial_j \Gamma^j_{ii} + \Gamma^j_{ji}\Gamma^j_{ji} + \Gamma^i_{ij}\Gamma^j_{ii} - \sum_k \Gamma^k_{ii}\Gamma^j_{jk}
\end{equation}
of the Riemann curvature tensor associated with $g_{ii}$ vanish.
\end{theorem}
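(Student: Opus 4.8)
The plan is to verify that the bilinear operation defined by the operator $A^{ij}$ is a genuine Poisson bracket — skew-symmetric and satisfying the Jacobi identity — by translating each axiom into a condition on the Riemannian geometry of the diagonal metric $g_{ii}$ introduced in \eqref{christ}, and then to exhibit a Hamiltonian density generating \eqref{sys}. The starting observation is that $A^{ij}$ has the general Dubrovin--Novikov shape
\begin{equation}
A^{ij} = g^{ij}\frac{d}{dx} + b^{ij}_k\,\theta^k_x, \qquad g^{ij} = g^{ii}\delta^{ij}, \quad b^{ij}_k = -g^{ii}\Gamma^j_{ik},
\end{equation}
in which the coefficients $\Gamma^j_{ik}$ are, by construction through \eqref{christ}, the Christoffel symbols of the Levi-Civita connection of $g_{ii}$. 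Consequently the connection is torsion-free, $\Gamma^j_{ik} = \Gamma^j_{ki}$, and metric-compatible, $\nabla_k g_{ij} = 0$, a fact I will use repeatedly.

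The first step, skew-symmetry, I expect to be routine. Forming $\{I,J\} + \{J,I\}$ and integrating by parts once, the top-order terms cancel because $g^{ij}$ is symmetric (here diagonal), while the remaining zeroth-order contribution collapses to the tensor $\partial_k g^{ij} - (b^{ij}_k + b^{ji}_k)$. This vanishes identically by metric-compatibility of the Levi-Civita connection, so skew-symmetry holds with no curvature hypothesis.

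The main obstacle is the Jacobi identity. The Jacobiator $\{\{I,J\},K\}$ together with its two cyclic permutations is a trilinear functional whose kernel, after a long but mechanical sequence of integrations by parts, reorganizes into pieces proportional to $\theta^k_x$ and to $\theta^k_x\theta^l_x$. The coefficient of the linear term measures the torsion of the connection and vanishes here by construction, while the coefficient of the bilinear term is, after the appropriate index symmetrizations, precisely the Riemann curvature tensor $R^l_{ijk}$ of $g_{ii}$. The Jacobi identity is therefore equivalent to flatness of the metric. Carrying out this bookkeeping cleanly is the technical heart of the theorem and the step I anticipate will demand the most care; I would work entirely in components, using the sparsity of the diagonal Christoffel symbols to discard vanishing contractions at each stage rather than attempting an invariant derivation.

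Finally, I would reduce flatness to the stated hypothesis and produce the Hamiltonian. For a diagonal metric the curvature components with four distinct indices vanish identically, while those with three distinct indices vanish precisely by the semi-Hamiltonian property — the consistency condition already required for \eqref{christ} to define $g_{ii}$ — so that the only independent, potentially nonzero components are the two-index ones $R^j_{iij}$ displayed in the statement; indeed, expanding the general formula $R^j_{iij} = \partial_i\Gamma^j_{ij} - \partial_j\Gamma^j_{ii} + \Gamma^m_{ij}\Gamma^j_{im} - \Gamma^m_{ii}\Gamma^j_{jm}$ with the sparse diagonal Christoffel symbols reproduces exactly the expression given. Hence $R^j_{iij} = 0$ for all $i \neq j$ is equivalent to flatness, which by the preceding steps makes the bracket Poisson. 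To see that \eqref{sys} is itself Hamiltonian, I would compute $A^{ij}\,\partial_j h = g^{ii}(\partial_i\partial_j h - \Gamma^k_{ij}\partial_k h)\theta^j_x$ for a hydrodynamic density $h$ and note that vanishing of the off-diagonal covariant Hessian, $\nabla_i\nabla_j h = 0$ for $i\neq j$, is exactly the first-integral system of Theorem \ref{firstint}, consistent by the semi-Hamiltonian property. Selecting $h$ within this family so that the surviving diagonal term obeys $g^{ii}\nabla_i\nabla_i h = -v_i$ then yields $\partial_t\theta^i = A^{ij}\partial_j h = -v_i\theta^i_x$, exhibiting \eqref{sys} in Hamiltonian form and completing the proof.
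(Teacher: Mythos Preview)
The paper does not actually prove this theorem. Theorem \ref{LocalH} appears in Section \ref{sec2}, which is explicitly a survey of background material (``We begin by summarizing the main results in the theory of semi-Hamiltonian systems''), and the result is simply quoted from Dubrovin--Novikov \cite{Dub} with no proof supplied. So there is no paper proof to compare your proposal against.

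That said, your outline is the standard route to the Dubrovin--Novikov theorem and is broadly correct. One point deserves a sharper statement: your claim that the three-index curvature components vanish ``precisely by the semi-Hamiltonian property'' is right in spirit but slightly misplaced in the logic. The semi-Hamiltonian property is what guarantees that a diagonal metric satisfying \eqref{christ} \emph{exists} at all; once it does, the vanishing of the components $R^i_{jik}$ with $i,j,k$ pairwise distinct is a consequence of the symmetry $\partial_k\Gamma^i_{ij}=\partial_j\Gamma^i_{ik}$, which is exactly the integrability condition encoded in the semi-Hamiltonian identity. So the reduction to the two-index components $R^j_{iij}$ is valid, but you should be explicit that this uses the consistency of \eqref{christ} rather than invoking the semi-Hamiltonian property as an independent hypothesis. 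The final step --- solving $\nabla_i\nabla_j h=0$ off-diagonal together with $g^{ii}\nabla_i\nabla_i h=-v_i$ --- is also correct but not entirely trivial: existence of such an $h$ requires the compatibility conditions $\partial_j v_i = \Gamma^i_{ij}(v_j-v_i)$, which hold here because $v_i$ is itself a hydrodynamic symmetry in the sense of Theorem \ref{flowthm}.
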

This ensures that $g_{ii}$ is a zero-curvature metric on the target space, and consequently that the Dubrovin-Novikov bracket associated with the metric $g_{ii}$ satisfies the Jacobi identity, and so defines a bona-fide Poisson bracket on phase space. However, Mokhov and Ferapontov \cite{FeraNL,Mokh} have shown that for certain semi-Hamiltonian systems with non-vanishing curvature, there exist non-local Poisson brackets with respect to which these systems \emph{are} Hamiltonian. The criterion we shall need for the existence of such brackets is as follows.
\begin{theorem}
\label{NonLocalH}
(Ferapontov \cite{FeraNL})
The semi-Hamiltonian system \eqref{sys} is Hamiltonian with respect to the non-local Poisson bracket
\begin{equation}
A^{ij} = g^{ii}\delta^{ij}\frac{d}{dx} - g^{ii}\Gamma^{j}_{ik}\theta^k_x + \sum_{m=1}^M\epsilon_m\chi^{(m)}_i\theta^i_x d_x^{-1}\chi^{(m)}_j\theta^j_x
\end{equation}
if there exists a family of hydrodynamic symmetries $\{\chi^{(m)}_i\}_{m=1}^M$, possibly infinite, such that the diagonal elements of the Riemann tensor $R^{ij}_{ij} = -g^{jj}R^j_{iij}$ can be expressed as 
\begin{equation}
R^{ij}_{ij} =  \sum_{m=1}^M \epsilon_m\chi^{(m)}_i\chi^{(m)}_j.
\end{equation}
Here, $d_x^{-1}$ denotes the inverse differential operator and the $\epsilon_m = \pm1$.
\end{theorem}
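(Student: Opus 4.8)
The plan is to establish two things: first, that the operator $A^{ij}$ genuinely defines a Poisson bracket, i.e.\ that it is skew-symmetric and satisfies the Jacobi identity; and second, that the diagonal system \eqref{sys} can be written as a Hamiltonian flow with respect to it. The central idea is to reduce the Jacobi identity to a set of purely differential-geometric conditions on the metric $g_{ii}$, its Levi-Civita connection $\Gamma^i_{ij}$, and the diagonal affinors built from the symmetries $\chi^{(m)}_i$. These conditions turn out to be precisely the Gauss--Codazzi--Ricci equations governing an isometric immersion of the target space into a pseudo-Euclidean space with flat normal bundle, the $\chi^{(m)}$ encoding the shape operators and the signs $\epsilon_m = \pm1$ fixing the signature of the normal directions. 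Once this dictionary is in place, the hypothesis of the theorem supplies exactly the Gauss equation, and the remaining equations follow from the semi-Hamiltonian structure.

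First I would verify skew-symmetry. The local Dubrovin--Novikov part $g^{ii}\delta^{ij}\,d/dx - g^{ii}\Gamma^j_{ik}\theta^k_x$ is skew-adjoint with respect to the $L^2$ pairing precisely when $g_{ii}$ is symmetric (automatic in the diagonal case) and the connection is metric-compatible, $\nabla g = 0$; since by \eqref{christ} the $\Gamma^i_{ij}$ are the Levi-Civita Christoffel symbols of $g_{ii}$, this holds. For the nonlocal part, the key observation is that $d_x^{-1}$ is formally skew-adjoint, so that in the summand $\epsilon_m\chi^{(m)}_i\theta^i_x\,d_x^{-1}\,\chi^{(m)}_j\theta^j_x$ transposing $i\leftrightarrow j$ and passing to the operator adjoint reproduces the expression with exactly the sign required for skew-symmetry. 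Thus $A^{ij}$ is skew by construction, independently of the curvature hypothesis.

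The main work is the Jacobi identity. I would substitute $A^{ij}$ into the trilinear Jacobiator and organize the resulting terms by their operator structure: purely local terms, terms carrying a single $d_x^{-1}$, and terms carrying a product of two inverse derivatives. The cancellation of the local terms is equivalent to metric-compatibility of $\Gamma$ together with the Gauss equation, i.e.\ the curvature decomposition $R^{ij}_{ij} = -g^{jj}R^j_{iij} = \sum_m \epsilon_m\chi^{(m)}_i\chi^{(m)}_j$, which is exactly the hypothesis. The single-$d_x^{-1}$ terms cancel provided each affinor satisfies the Codazzi equations (symmetry of its covariant derivative); but in the diagonal case these reduce to the hydrodynamic-symmetry equations \eqref{flow}, $\partial_j\chi^{(m)}_i = \Gamma^i_{ij}(\chi^{(m)}_j - \chi^{(m)}_i)$, so they hold by the assumption that the $\chi^{(m)}$ are hydrodynamic symmetries. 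Finally, the double-$d_x^{-1}$ terms cancel by the Ricci equations, the commutativity $[\chi^{(m)},\chi^{(n)}]=0$ of the affinors, which is automatic since they are diagonal. The overall consistency of these conditions is guaranteed by the semi-Hamiltonian property through Theorem \ref{flowthm}.

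It then remains to exhibit \eqref{sys} as a Hamiltonian flow. Here I would invoke Tsar\"ev's theory: the velocities $v_i$ are themselves a solution of \eqref{flow} (this is the defining relation for $\Gamma^i_{ij}$), so they constitute a distinguished hydrodynamic symmetry, and one seeks a hydrodynamic density $h(\theta)$ whose flow $\theta^i_t = A^{ij}\partial_j h$ reproduces $\theta^i_t = -v_i\theta^i_x$. The nonlocal contributions collapse to local ones exactly when $h$ solves the overdetermined first-integral system of Theorem \ref{firstint}, after which the induced characteristic velocities can be matched to $v_i$; existence of such an $h$ follows from the consistency guaranteed by the semi-Hamiltonian property. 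The principal obstacle throughout is the Jacobi verification, and specifically the marshalling of the nonlocal contributions: the delicate point is that mixed terms, between the local and nonlocal parts and between distinct nonlocal summands, must be reorganized using integration-by-parts identities for $d_x^{-1}$ together with the hydrodynamic-symmetry relations. Tracking these, rather than the comparatively transparent Gauss-equation terms, is where the substantive bookkeeping lies.
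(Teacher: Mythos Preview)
The paper does not contain a proof of this theorem. Theorem~\ref{NonLocalH} appears in Section~\ref{sec2}, which is explicitly a summary of background results from the literature; the theorem is stated with attribution to Ferapontov \cite{FeraNL} and no argument is given. So there is no ``paper's own proof'' to compare against.

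That said, your outline is a faithful sketch of the standard proof as it appears in Ferapontov's original work. The reduction of the Jacobi identity to the Gauss--Codazzi--Ricci system for a submanifold with flat normal connection, with the $\chi^{(m)}$ playing the role of Weingarten operators and the $\epsilon_m$ fixing the normal signature, is exactly the mechanism Ferapontov identifies. Your identification of the Codazzi equations with the hydrodynamic-symmetry relations \eqref{flow}, and of the Ricci equations with the trivial commutativity of diagonal affinors, is correct. One small caveat: in the final step you assert that a Hamiltonian density $h$ generating the flow $v_i$ always exists by the consistency of the first-integral system in Theorem~\ref{firstint}. This is true, but the matching of $g^{ii}\partial_i h$ to $v_i$ (rather than to some other symmetry) requires an additional integration step that you have not spelled out; the semi-Hamiltonian property guarantees the relevant overdetermined system is compatible, but you should say explicitly which system you are solving and why $v_i$ lies in its range.
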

This amounts to a rather subtle completeness requirement for squares $w_iw_j$ of solutions to \eqref{flow}. It has been shown that formally speaking, any semi-Hamiltonian system possesses this property, even though infinitely many flows $\chi_m$ might be required \cite{Bogd}. Maltsev and Novikov \cite{maltsev} have developed the theory of the resulting ``weakly non-local'' brackets in detail. Finally, we will be concerned with the following subset of ``solvable'' semi-Hamiltonian systems, for which a non-local Poisson bracket can be constructed explicitly \cite{FP}.
\begin{defn}
The system of equations \eqref{sys} is a \emph{weakly non-linear semi-Hamiltonian system} if it is semi-Hamiltonian, and the characteristic velocities are additionally linearly degenerate, in the sense that $\partial_j v_j=0$ for all $j$.
\end{defn}
For such ``WNS systems'', the $N$ functional degrees of freedom parameterizing flows and first integrals of the system \eqref{sys} can be obtained directly using the methods of Ferapontov. This is possible thanks to the existence of a special class of hydrodynamic symmetries.
\begin{prop}
\label{WNS}
(Ferapontov \cite{Fera})
A WNS system possesses exactly $N$ linearly independent, commuting WNS vector fields solving \eqref{flow}, of which two are the trivial flows $w^{(0)}_i = 1$ and $w_i^{(1)} = v_i$.
\end{prop}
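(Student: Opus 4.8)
The plan is to read a \emph{WNS vector field} as a solution $w_i(\theta)$ of the flow equations \eqref{flow} that is itself linearly degenerate, i.e.\ that satisfies $\partial_i w_i = 0$ for every $i$ in addition to the off-diagonal relations $\partial_j w_i = \Gamma^i_{ij}(w_j - w_i)$, $i\neq j$. Adjoining these diagonal conditions to \eqref{flow} fixes the \emph{entire} gradient of each component $w_i$ in terms of the $w_j$ themselves, so the resulting overdetermined system takes the form $dw = \Omega(\theta)\,w$ for a matrix-valued one-form $\Omega$ assembled from the $\Gamma^i_{ij}$, i.e.\ a complete (Frobenius-type) linear system on the target space. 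Granting its compatibility, the solution space is a vector space whose dimension equals that of the fibre $\mathbb{R}^N$ in which the tuple $(w_1,\dots,w_N)$ takes values, since a solution is determined uniquely by its value at a single base point $\theta_0$, with every value in the fibre attained. This yields exactly $N$ linearly independent solutions.

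The crux of the argument, and the step I expect to be most delicate, is verifying the Frobenius compatibility $\partial_k\partial_j w_i = \partial_j\partial_k w_i$ for the augmented system. There are two cases. When $i,j,k$ are pairwise distinct, only the off-diagonal relations \eqref{flow} enter, and the required identity is precisely the consistency condition already guaranteed by the semi-Hamiltonian property, which is the content of Tsar\"ev's Theorem \ref{flowthm}. The genuinely new case is the one in which a diagonal condition participates: comparing $\partial_i\partial_j w_i$ with $\partial_j\partial_i w_i = \partial_j 0 = 0$, and using $\partial_i w_i = 0$ together with $\partial_i w_j = \Gamma^j_{ji}(w_i - w_j)$, reduces compatibility to the pointwise identity
\[
\partial_i \Gamma^i_{ij} = \Gamma^i_{ij}\Gamma^j_{ji}, \qquad i \neq j.
\]
Here is where linear degeneracy is indispensable: since $\partial_i v_i \equiv 0$, commuting of partial derivatives gives $\partial_i\partial_j v_i = \partial_j(\partial_i v_i) = 0$, whence a direct differentiation of $\Gamma^i_{ij} = \partial_j v_i/(v_j - v_i)$ yields $\partial_i\Gamma^i_{ij} = -\,\partial_j v_i\,\partial_i v_j/(v_j - v_i)^2$, which is exactly $\Gamma^i_{ij}\Gamma^j_{ji}$. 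Thus the diagonal compatibility holds identically, and the augmented system is Frobenius-integrable.

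With the dimension count in hand, it remains to identify the two trivial members and confirm commutativity. The constant flow $w^{(0)}_i = 1$ solves \eqref{flow} trivially (both sides vanish) and is linearly degenerate; the characteristic field $w^{(1)}_i = v_i$ solves \eqref{flow} by the very definition $\Gamma^i_{ij} = \partial_j v_i/(v_j - v_i)$ and is linearly degenerate precisely because the system is WNS, so $\partial_i v_i = 0$. These two are linearly independent as functions on the target space because the characteristic speeds are assumed distinct, so $(v_1,\dots,v_N)$ is not a constant multiple of $(1,\dots,1)$. Finally, since every WNS vector field is in particular a solution of \eqref{flow}, Tsar\"ev's Theorem \ref{flowthm} guarantees that the associated flows commute with \eqref{sys} and with one another, completing the proof.
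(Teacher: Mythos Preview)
The paper does not actually prove Proposition~\ref{WNS}; it is stated without proof in Section~\ref{sec2} as a result quoted from Ferapontov~\cite{Fera}, as part of a survey of background material. So there is no ``paper's own proof'' against which to compare.

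That said, your argument is correct and is essentially Ferapontov's original reasoning. The key step---adjoining the diagonal conditions $\partial_i w_i = 0$ to the Tsar\"ev system~\eqref{flow} to obtain a complete linear Pfaffian system for the vector $(w_1,\dots,w_N)$, and then verifying Frobenius integrability---is exactly right. Your check of the new compatibility condition $\partial_i\Gamma^i_{ij} = \Gamma^i_{ij}\Gamma^j_{ji}$ using $\partial_i v_i \equiv 0$ is the crucial computation, and it is carried out correctly. Once integrability is established, the dimension count (solutions parameterized by their value at a base point in $\mathbb{R}^N$), the identification of the two trivial flows, and the appeal to Theorem~\ref{flowthm} for commutativity are all straightforward and correctly handled.
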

These special flows yield solutions to the system \eqref{flow} as follows.
\begin{prop}
\label{quad}
(Ferapontov \cite{Fera})
The full set of hydrodynamic symmetries of any WNS system is given by
\begin{equation}
w_i(\theta) = \Delta^0(\theta) - v_i(\theta)\Delta^{1}(\theta) - \sum_{n=2}^{N-1}w^{(n)}_i(\theta)\Delta^n(\theta),
\end{equation}
where 
\begin{equation}
\Delta^k(\theta) = \int^{\theta^1} \frac{d\xi\, \phi_1^k(\xi)}{f_1(\xi)} + \int^{\theta^2} \frac{d\xi\, \phi_2^k(\xi)}{f_2(\xi)} + \ldots + \int^{\theta^N} \frac{d\xi\, \phi_N^k(\xi)}{f_N(\xi)}.
\end{equation}
Here, the functions $\phi^n_k(\xi)$ are fixed uniquely by the vector field $v_i$, whilst the functions $f_i(\xi)$ are arbitrary, and encode the $N$ functional degrees of freedom alluded to in Theorem \ref{flowthm}.
\end{prop}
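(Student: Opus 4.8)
The plan is to treat the $N$ commuting flows $w^{(n)}_i$ supplied by Proposition \ref{WNS} as a moving frame on the target space and to rewrite an arbitrary hydrodynamic symmetry in this frame. Since the matrix $W_{in} \defeq w^{(n)}_i$ is pointwise invertible (the frame vectors are linearly independent), any candidate symmetry $w_i(\theta)$ admits the \emph{unique} representation $w_i = \sum_{n=0}^{N-1} w^{(n)}_i \Delta^n$ with coefficients $\Delta^n(\theta) = \sum_i (W^{-1})_{ni} w_i$. The stated formula is therefore just a claim about the coefficients $\Delta^n$, so no generality is lost in adopting this ansatz; what remains is to determine the precise $\theta$-dependence forced on the $\Delta^n$.

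First I would substitute the ansatz into the defining equation \eqref{flow}. Because each frame vector $w^{(n)}$ already solves \eqref{flow}, every term proportional to $\Gamma^i_{ij}$ cancels, and the equation collapses to the single family of constraints $\sum_n w^{(n)}_i \partial_j \Delta^n = 0$ for $i \neq j$. For fixed $j$ these are $N-1$ linear relations stating that the vector $(\partial_j \Delta^n)_n$ is annihilated by every row $i \neq j$ of $W$; hence it is proportional to the $j$-th column of $W^{-1}$, that is $\partial_j \Delta^n = \rho_j(\theta)(W^{-1})_{nj}$ for some scalar $\rho_j$.

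The heart of the argument — and the step I expect to be the main obstacle — is to show that these constraints force $\Delta^n$ to separate into the advertised sum of single-variable integrals, and this is where linear degeneracy enters decisively. Differentiating via $\partial_k W^{-1} = -W^{-1}(\partial_k W) W^{-1}$, using $\partial_k w^{(b)}_a = \Gamma^a_{ak}(w^{(b)}_k - w^{(b)}_a)$ for $a \neq k$ while $\partial_a w^{(b)}_a = 0$ by linear degeneracy, a short computation collapses the double sum to the key identity $\partial_k (W^{-1})_{nj} = \Gamma^j_{jk}(W^{-1})_{nj}$ for $k \neq j$. The crucial feature is that the proportionality factor $\Gamma^j_{jk}$ is independent of $n$: the \emph{direction} of the kernel vector $((W^{-1})_{nj})_n$ does not vary with $\theta^k$ for any $k \neq j$. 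Consequently the ratios $(W^{-1})_{nj}/(W^{-1})_{mj}$ depend on $\theta^j$ alone, and I may factor $(W^{-1})_{nj} = \phi^n_j(\theta^j)\, h_j(\theta)$ with the $n$-dependence carried entirely by $\phi^n_j$, determined by the frame and hence by $v$, and a common factor $h_j$. Without linear degeneracy this factorization fails, since $\partial_k W_{ab}$ would not simplify and the kernel direction would depend on all the $\theta^{k}$.

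Finally I would close the argument by mixed-partial consistency. Writing $\partial_j \Delta^n = \rho_j h_j\, \phi^n_j(\theta^j)$ and imposing $\partial_k \partial_j \Delta^n = \partial_j \partial_k \Delta^n$ (equivalently, differentiating the constraints $\sum_n w^{(n)}_i \partial_j\Delta^n = 0$ and again discarding diagonal derivatives by linear degeneracy) gives, upon using the linear independence of the distinct kernel vectors $(\phi^n_j)_n$ and $(\phi^n_k)_n$, that $\partial_k(\rho_j h_j) = 0$ for $k \neq j$. Thus the common factor is a function $1/f_j(\theta^j)$ of $\theta^j$ alone, with $f_j$ unconstrained. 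Integrating $\partial_j \Delta^n = \phi^n_j(\theta^j)/f_j(\theta^j)$ reproduces the displayed expression for $\Delta^n$, and reassembling $w_i = \sum_n w^{(n)}_i \Delta^n$ yields the general symmetry. The $N$ arbitrary functions $f_j$ of a single variable exhaust exactly the functional freedom predicted by Theorem \ref{flowthm}, which confirms that the formula captures \emph{all} hydrodynamic symmetries rather than a proper subfamily.
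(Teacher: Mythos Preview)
Your argument is correct: the moving-frame decomposition $w_i=\sum_n w^{(n)}_i\Delta^n$, the reduction of \eqref{flow} to $\sum_n w^{(n)}_i\partial_j\Delta^n=0$, the identity $\partial_k(W^{-1})_{nj}=\Gamma^j_{jk}(W^{-1})_{nj}$ for $k\neq j$ (which hinges precisely on linear degeneracy), and the mixed-partial argument forcing $\rho_j h_j$ to depend on $\theta^j$ alone are all sound. The computation of $\partial_k W^{-1}$ collapses exactly as you describe once one uses $\sum_b W_{kb}(W^{-1})_{bj}=\delta_{kj}=0$ and $\sum_b W_{ab}(W^{-1})_{bj}=\delta_{aj}$.

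The paper, however, does not prove Proposition~\ref{quad} in this form: it is quoted as Ferapontov's result, and the only derivation given (in Section~\ref{commflows}) is for the specific dressing system \eqref{curr}. That derivation proceeds quite differently. Rather than analysing \eqref{flow} directly, it embeds the evolution in the full commuting hierarchy \eqref{commsys}, passes to the stronger auxiliary system \eqref{commsys2} with arbitrary $f_i(\theta^i)$, and then exploits the explicit inverse dressing operator $(1+K\theta)$ to separate variables and integrate, arriving at \eqref{crucialint} before re-dressing to hodograph form. Your route is more intrinsic and works for an arbitrary WNS system without any explicit formula for $W^{-1}$; the paper's route is less general but more constructive, since in the dressing setting it delivers the $\phi^n_j$ explicitly as the entries of ${q'}^{-1}(1+K\theta)$, which is what is actually needed downstream.
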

This concludes our brief survey of this elegant theory. We make no claim to completeness, and our presentation is tailored towards the results to be derived below.

\section{A Class of Hamiltonian Systems}
\label{sec3}
\subsection{Definition; Proof of WNS Property}
Consider the system of PDEs
\begin{equation}
\partial_t\theta^i + v_i[\theta]\theta^i = 0
\label{curr}
\end{equation}
with characteristic velocities $v_i$ defined as follows. Choose linearly independent vectors $\mathbf{a},\mathbf{b} \in \mathbb{R}^N$, a real, symmetric $N$-by-$N$ matrix $K_{ij}$, let $T_{ij}(\theta) = K_{ij}\theta^j$ and define the dressing operator 
\begin{equation}
U_{ij}(\theta) = (1 + T(\theta))^{-1}_{ij}.
\end{equation}
Let $\mathcal{S}$ be a neighbourhood of $0 \in \mathbb{R}^N$ such that the series
\begin{equation}
U_{ij} = \delta_{ij} - T_{ij}(\theta) + \sum_k T_{ik}(\theta)T_{kj}(\theta) + \ldots
\end{equation}
converges. On $\mathcal{S}$, we define ``dressed'' vectors $A_i(\theta) = \sum_jU_{ij}(\theta)a_j$ and $B_i(\theta) = \sum_jU_{ij}(\theta)b_j$. Choosing $\mathbf{a}$ such that $A_i > 0$ on $\mathcal{S}$, we define the characteristic velocities in \eqref{curr} to be
\begin{align}
v_i(\theta) =\frac{B_i(\theta)}{A_i(\theta)}.
\label{vels}
\end{align}
This is the generic form of the equations arising in a class of finite approximations to the Bethe-Boltzmann equation for any quantum integrable model with diagonal scattering, introduced in \cite{BB1,BB2}. They lie in the equivalence class of ``Egorov'' \cite{Egor} linearly-degenerate semi-Hamiltonian systems previously obtained for the KdV soliton gas in Ref. \cite{El3}. Before specializing to a particular model, we prefer to derive some key properties of the system given by Eqs. \eqref{curr} and \eqref{vels} in generality. We first have the following result.
\begin{prop}
Suppose that $A_i >0$ on $\mathcal{S}$ for $i = 1,2,...,N$. Then the system \eqref{curr} with velocities \eqref{vels} defines a WNS system on $\mathcal{S}$.
\label{mainprop}
\end{prop}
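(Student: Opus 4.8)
The plan is to verify directly the two defining properties of a WNS system for the velocities \eqref{vels}: linear degeneracy, $\partial_i v_i = 0$, and the semi-Hamiltonian condition. Both will fall out of a single explicit computation of the derivatives $\partial_k v_i$, which itself reduces to differentiating the dressing operator $U = (1+T)^{-1}$.

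First I would compute $\partial_k U$. Since $T_{ij}(\theta) = K_{ij}\theta^j$ carries no sum on $j$, the matrix $\partial_k T$ has entries $(\partial_k T)_{ij} = K_{ij}\delta_{jk}$, so only its $k$-th column survives. Feeding this into the standard identity $\partial_k U = -U(\partial_k T)U$ collapses the double contraction to the rank-one form $\partial_k U_{il} = -(UK)_{ik}\,U_{kl}$. Contracting against the constant vectors $\mathbf{a},\mathbf{b}$ then gives the compact relations $\partial_k A_i = -(UK)_{ik}A_k$ and $\partial_k B_i = -(UK)_{ik}B_k$, whereupon differentiating $v_i = B_i/A_i$ yields
\[
\partial_k v_i = \frac{(UK)_{ik}A_k}{A_i}\,(v_i - v_k).
\]
The essential structural feature is the factor $(v_i - v_k)$ on the right.

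Linear degeneracy is then immediate: setting $k=i$ makes the right-hand side vanish, so $\partial_i v_i = 0$ for every $i$ on $\mathcal{S}$. For the semi-Hamiltonian property I would next form the Tsar\"ev coefficients $\Gamma^i_{ij} = \partial_j v_i/(v_j - v_i)$ of Theorem \ref{flowthm}; the factor $(v_i - v_j)$ cancels the denominator and leaves $\Gamma^i_{ij} = -(UK)_{ij}A_j/A_i$. The crucial observation is that this is an \emph{exact} expression: comparing with $\partial_j A_i = -(UK)_{ij}A_j$ gives $\Gamma^i_{ij} = \partial_j \log A_i$. Because $\Gamma^i_{ij}$ is thus the $\theta^j$-derivative of the single smooth function $\log A_i$ (smooth since $A_i > 0$ on $\mathcal{S}$), the semi-Hamiltonian identity $\partial_k \Gamma^i_{ij} = \partial_j \Gamma^i_{ik}$ reduces to the commutation of mixed partials $\partial_k\partial_j \log A_i = \partial_j\partial_k \log A_i$. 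As a by-product this identifies the diagonal metric of \eqref{christ} as $g_{ii} = A_i^2$.

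The computation is elementary once organized this way; the step I would be most careful about is the non-standard, non-summed structure $T_{ij} = K_{ij}\theta^j$, since it is precisely this that renders $\partial_k T$ a single-column matrix, produces the clean form $\partial_k U_{il} = -(UK)_{ik}U_{kl}$, and hence forces $\partial_k v_i$ to be proportional to $(v_i - v_k)$. Keeping the index bookkeeping honest here is where the argument must be pinned down. I would also note, as a check on the argument, that symmetry of $K$ is \emph{not} required for either the WNS property or the formula $\Gamma^i_{ij}=\partial_j\log A_i$; symmetry enters only later, where it makes $UK$ symmetric and thereby gives the metric $g_{ii}=A_i^2$ its Egorov structure.
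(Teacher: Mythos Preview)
Your proof is correct and follows essentially the same route as the paper: the paper introduces the shorthand $\alpha_{ij} = -\sum_l U_{il}K_{lj} = -(UK)_{ij}$ and obtains $\partial_j A_i = \alpha_{ij}A_j$, then derives $\Gamma^i_{ij} = \partial_j\log A_i$ exactly as you do. Your closing remark that symmetry of $K$ is not used here is a correct and worthwhile observation that the paper does not make explicit.
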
 
\begin{proof} 
First note that 
\begin{align}
\nonumber \partial_jU_{ik}(\theta) = -\sum_{l,m} U_{il}(\theta)\partial_j(K_{lm}\theta_m)U_{mk}(\theta) = -\sum_l U_{il}K_{lj}U_{jk},
\end{align}
where we henceforth suppress arguments. Defining $\alpha_{ij} = -\sum_l U_{il}K_{lj}$, we have
\begin{align}
\nonumber \partial_jU_{ik} = \alpha_{ij}U_{jk},
\end{align}
which implies
\begin{align}
\partial_jA_i = \alpha_{ij}A_j,
\end{align}
and similarly for $B$. Thus
\begin{align}
\nonumber \partial_jv_i &= \frac{(\partial_j B_i)A_i -B_i(\partial_jA_i)}{(A_i)^2} = \frac{\alpha_{ij}(B_jA_i - B_iA_j)}{(A_i)^2}.
\end{align}
We deduce that $\partial_iv_i = 0$ for all $i$. Moreover, for $i \neq j$, 
\begin{align}
\nonumber v_j -v_i = \frac{B_jA_i -B_iA_j}{A_iA_j},
\end{align}
so that
\begin{align}
\frac{\partial_jv_i} {v_j -v_i} = \frac{\alpha_{ij}A_j}{A_i} = \frac{\partial_jA_i}{A_i} = \partial_j \log{A_i}.
\label{value}
\end{align}
Similarly, $$\frac{\partial_kv_i}{v_k -v_i} = \partial_k \log(A_i).$$ Thus
\begin{align}
\nonumber \partial_k \left(\frac{\partial_jv_i}{v_j -v_i}\right) = \partial_j \left(\frac{\partial_kv_i}{v_k -v_i}\right) = \partial_j\partial_k \log(A_i),
\end{align}
which was to be shown.
\end{proof}
By a judicious choice of $\mathbf{a}$, such a system exists for any symmetric matrix $K$, in some neighbourhood $\mathcal{S}$ of the point $\theta = 0$. In view of the theory of semi-Hamiltonian systems, we have also determined various other important properties of the system \eqref{curr}. Firstly, we can read off a diagonal Riemannian metric
\begin{equation}
g_{ii} = A_i^2
\end{equation}
on the target space, with respect to which 
\begin{equation}
\Gamma^i_{ij} = \partial_j \log(A_i)
\end{equation}
define Christoffel symbols. Thus the system of equations \eqref{flow} takes the form
\begin{equation}
\frac{\partial_jw_i}{w_j-w_i} = \partial_j \log(A_i), \quad j \neq i
\label{flow2}
\end{equation}
in the present context. Since the derivation of the result \eqref{value} was independent of the choice of constant vector $\mathbf{b}$, we deduce the following result.
\begin{prop}
Take $N$ linearly independent vectors $\mathbf{a}^{(n)} \in \mathbb{R}^N$ such that $\mathbf{a}^{(0)} = \mathbf{a}$ and $\mathbf{a}^{(1)} = \mathbf{b}$. Define
\begin{equation}
A^{(n)}_i(\theta) = \sum_j U_{ij}(\theta)a^{(n)}_j, \quad n=0,1,\ldots,N-1.
\end{equation}
Then the vector fields $w^{(n)}$, given by
\begin{align}
w^{(n)}_i (\theta) = \frac{A^{(n)}_i(\theta)}{A_i(\theta)}, \quad n=0,1,\ldots,N-1,
\label{wnsflows}
\end{align}
define $N$ linearly independent, commuting WNS flows on $\mathcal{S}$, in the sense of Prop. \ref{WNS}. 
\label{localwns}
\end{prop}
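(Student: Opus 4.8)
The plan is to exploit the fact, already flagged in the text preceding the statement, that the computation establishing the flow equation for $v_i$ in the proof of Proposition \ref{mainprop} used nothing whatsoever about the particular numerator vector $\mathbf{b}$; it therefore applies verbatim to each ratio $w^{(n)}_i = A^{(n)}_i/A_i$. Concretely, I would first record that the identity $\partial_j U_{ik} = \alpha_{ij}U_{jk}$ yields $\partial_j A^{(n)}_i = \alpha_{ij}A^{(n)}_j$ for every $n$, exactly as it did for $A_i$ and $B_i$. Differentiating the ratio then gives, for $i \neq j$,
$$\partial_j w^{(n)}_i = \frac{\alpha_{ij}\bigl(A^{(n)}_j A_i - A^{(n)}_i A_j\bigr)}{A_i^2}, \qquad w^{(n)}_j - w^{(n)}_i = \frac{A^{(n)}_j A_i - A^{(n)}_i A_j}{A_i A_j},$$
so that the quotient of the two collapses to $\alpha_{ij}A_j/A_i = \partial_j\log A_i = \Gamma^i_{ij}$, independently of $n$. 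This is precisely equation \eqref{flow2}, so each $w^{(n)}$ is a hydrodynamic symmetry of the system \eqref{curr}.

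Next I would verify the weakly nonlinear (linear degeneracy) property $\partial_i w^{(n)}_i = 0$. Setting $j = i$ in the first displayed formula, the numerator $A^{(n)}_i A_i - A^{(n)}_i A_i$ vanishes identically, so $\partial_i w^{(n)}_i = 0$ for all $i$ and all $n$. Together with the flow equation, this shows that each $w^{(n)}$ is a WNS vector field in the sense required by Proposition \ref{WNS}. As a consistency check, the two distinguished choices $\mathbf{a}^{(0)} = \mathbf{a}$ and $\mathbf{a}^{(1)} = \mathbf{b}$ reproduce the trivial flows $w^{(0)}_i = A_i/A_i = 1$ and $w^{(1)}_i = B_i/A_i = v_i$ named in Proposition \ref{WNS}.

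For linear independence, I would suppose that $\sum_n c_n w^{(n)}_i \equiv 0$ on $\mathcal{S}$. Since $A_i > 0$, this is equivalent to $\sum_n c_n A^{(n)}_i = \sum_j U_{ij}\bigl(\sum_n c_n a^{(n)}_j\bigr) = 0$ for all $i$; invertibility of $U = (1+T)^{-1}$ then forces $\sum_n c_n \mathbf{a}^{(n)} = 0$, whence $c_n = 0$ by the assumed linear independence of the $\mathbf{a}^{(n)}$. Finally, commutativity is automatic: each $w^{(n)}$ solves \eqref{flow} for the semi-Hamiltonian system \eqref{curr}, and by Theorem \ref{flowthm} all solutions of the flow equation commute with the evolution and with one another. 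The only point requiring care is thus the bookkeeping that the flow-equation computation is genuinely $\mathbf{b}$-independent; once that is established, degeneracy, independence and commutativity all follow with essentially no extra work.
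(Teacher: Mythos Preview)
Your proposal is correct and follows exactly the approach the paper itself uses: the paper simply remarks that ``the derivation of the result \eqref{value} was independent of the choice of constant vector $\mathbf{b}$'' and states the proposition without further proof. You have carefully spelled out the details the paper leaves implicit---the linear degeneracy check, the linear-independence argument via invertibility of $U$, and the appeal to Theorem \ref{flowthm} for commutativity---but the core idea is identical.
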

It will greatly aid intuition for the remainder of this section to regard $a^{(n)}$ as the derivative of the $n$th bare charge density of some integrable model with respect to rapidity. As one might expect, their flows are related to certain first integrals of the system \eqref{curr}.
\begin{prop}
\label{wnscharges}
Consider the functionals
\begin{equation}
Q^{(n)}[\theta] = \int dx \, \sum_i \theta^i A_i(\theta) a^{(n)}_i, \quad n=0,1,\ldots,N-1
\end{equation}
of hydrodynamic type. These define $N$ first integrals of the system \eqref{curr} with functionally independent densities.
\end{prop}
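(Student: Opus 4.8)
The statement has two parts: that each $Q^{(n)}$ is conserved along the flow \eqref{curr}, and that the $N$ densities $h^{(n)}(\theta) \defeq \sum_i \theta^i A_i(\theta) a^{(n)}_i$ are functionally independent. For conservation, rather than checking the Tsar\"ev equations of Theorem \ref{firstint} abstractly, the plan is to exhibit the conserved flux explicitly. Guided by the kinetic picture that a density $\sim q^{(n)}\rho$ should have flux $\sim q^{(n)}\rho\, v$, and by the identity $v_i A_i = B_i$ built into \eqref{vels}, I would guess the flux
\begin{equation}
g^{(n)}(\theta) = \sum_i \theta^i B_i(\theta) a^{(n)}_i,
\end{equation}
and show that $\partial_t h^{(n)} + \partial_x g^{(n)} = 0$ along solutions. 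Since $\theta^i_t = -v_i\theta^i_x$ from \eqref{curr}, this conservation law is equivalent to the pointwise relations $\partial_i g^{(n)} = v_i\,\partial_i h^{(n)}$ for every $i$.

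To verify these I would use the differentiation rule $\partial_j A_i = \alpha_{ij}A_j$ established in the proof of Proposition \ref{mainprop}, together with its verbatim analogue $\partial_j B_i = \alpha_{ij}B_j$ (the derivation used only the dressing operator, not the choice of constant vector). Differentiating the density gives
\begin{equation}
\partial_i h^{(n)} = A_i\,a^{(n)}_i + \sum_l \theta^l (\partial_i A_l)\,a^{(n)}_l = A_i\Big(a^{(n)}_i + \sum_l \theta^l \alpha_{li}\,a^{(n)}_l\Big),
\end{equation}
and the identical computation for $g^{(n)}$ reproduces the same bracketed factor multiplied by $B_i$ in place of $A_i$. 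Hence $\partial_i g^{(n)} = (B_i/A_i)\,\partial_i h^{(n)} = v_i\,\partial_i h^{(n)}$, exactly as required, and summing against $\theta^i_x$ cancels the time and space derivatives. This route is constructive, and because a genuine function $g^{(n)}$ exists with $\partial_i g^{(n)} = v_i\partial_i h^{(n)}$, the integrability condition $\partial_j(v_i\partial_i h^{(n)}) = \partial_i(v_j\partial_j h^{(n)})$ holds automatically; expanding it recovers precisely the second-order Tsar\"ev equations of Theorem \ref{firstint}, so no separate check is needed.

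For functional independence I would show that the Jacobian $(\partial_i h^{(n)})$ has full rank near $\theta=0$. The bracketed factor above is independent of $n$ and linear in the $a^{(n)}$: setting $P_{il} \defeq \delta_{il} + \theta^l \alpha_{li}$ gives $\partial_i h^{(n)} = A_i\sum_l P_{il}\,a^{(n)}_l$, so the matrix $J_{ni} = \partial_i h^{(n)}$ factors as $J = \widehat{A}\,P^{T}\,\mathrm{diag}(A_i)$, where $\widehat{A}_{nl} = a^{(n)}_l$ has the vectors $\mathbf{a}^{(n)}$ as its rows. Its determinant is $\det\widehat{A}\,\det P\,\prod_i A_i$. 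Here $\det\widehat{A}\neq 0$ by the linear independence of $\{\mathbf{a}^{(n)}\}$ assumed in Proposition \ref{localwns}, $\prod_i A_i > 0$ by hypothesis, and $P$ reduces to the identity matrix at $\theta = 0$ (the prefactor $\theta^l$ kills the off-diagonal terms), so $\det P\neq 0$ on a neighbourhood of the origin by continuity. Thus $J$ is nonsingular on $\mathcal{S}$ (shrinking $\mathcal{S}$ if necessary), which is functional independence of the densities.

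The calculations are all routine once the flux is in hand, so I do not anticipate a serious obstacle; the single creative step is identifying $g^{(n)} = \sum_i \theta^i B_i a^{(n)}_i$, which is forced by $v_i A_i = B_i$ and collapses the awkward Tsar\"ev consistency requirement into the one-line factorization $\partial_i g^{(n)} = v_i\partial_i h^{(n)}$.
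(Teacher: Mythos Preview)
Your proof is correct, but it takes a genuinely different route from the paper's for the conservation part. The paper computes $\partial_j h^{(n)} = A_j A^{(n)}_j$ by recognising that the bracketed factor $\sum_l(\delta_{jl}+\alpha_{jl}\theta^l)a^{(n)}_l$ collapses to $\sum_l U_{jl}a^{(n)}_l = A^{(n)}_j$ (using symmetry of $\alpha$ and the identity $1-UT=U$), and then verifies Tsar\"ev's second-order criterion of Theorem~\ref{firstint} directly:
\[
\partial_i\partial_j h^{(n)} - \Gamma^i_{ij}\partial_i h^{(n)} - \Gamma^j_{ji}\partial_j h^{(n)} = \alpha_{ij}\bigl(A_jA^{(n)}_i + A_iA^{(n)}_j - A_jA^{(n)}_i - A_iA^{(n)}_j\bigr)=0.
\]
You instead guess the flux $g^{(n)}=\sum_i\theta^iB_i a^{(n)}_i$ and check $\partial_i g^{(n)} = v_i\,\partial_i h^{(n)}$, which is more elementary (it bypasses Theorem~\ref{firstint} entirely) and more constructive (you get the current density for free). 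The paper's route, on the other hand, makes the link $w^{(n)}_i = g^{ii}\partial_i h^{(n)}$ to the WNS flows of Proposition~\ref{localwns} transparent, which is how it dispatches functional independence in one line. On that point your factorisation $J = \widehat{A}\,P^{T}\,\mathrm{diag}(A_i)$ is fine, but you can avoid the continuity argument and the ``shrinking $\mathcal{S}$ if necessary'' caveat: your matrix $P_{il}=\delta_{il}+\theta^l\alpha_{li}$ is exactly $U_{il}$ (by the same identity the paper uses), so $\det P = \det U \neq 0$ throughout $\mathcal{S}$.
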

\begin{proof}
Write
\begin{equation}
\rho^{(n)} = \sum_i \theta^i A_i(\theta) a^{(n)}_i.
\end{equation}
Then
\begin{align}
\partial_j \rho^{(n)} 
= A_j\sum_i (\alpha_{ji}\theta^i+\delta_{ji})a^{(n)}_i 
=A_j\sum_i U_{ji}a^{(n)}_i 
= A_jA^{(n)}_j,
\end{align}
where we used the definition of $\alpha$ and the crucial fact (deducible from the series expansion of $U$) that $\alpha_{ij}=\alpha_{ji}$. Thus for $i \neq j$,
\begin{align}
\nonumber &\partial_i\partial_j\rho^{(n)} -\Gamma^i_{ij}\partial_i\rho^{(n)}-\Gamma^j_{ji}\partial_j\rho^{(n)} \\
\nonumber =\,&\alpha_{ij}\left[A_jA^{(n)}_i+A_iA^{(n)}_j - A_jA^{(n)}_i - A_iA^{(n)}_j\right] \\
\nonumber =\,&0.
\end{align}
It follows by Theorem \ref{firstint} that these densities give rise to first integrals of \eqref{curr}. Moreover, they are related to the above WNS flows via $w^{(n)}_i = g^{ii}\partial_i \rho^{(n)}$, which implies linear independence of their gradients.
\end{proof}
We note that the density $\rho^{(0)}$ defines a potential for the metric, since $g_{ii} = A_i^2 = \partial_i \rho^{(0)}$ \cite{Egor}. Thus the system under consideration is of Egorov type, and lies in the isomorphism class derived in Ref. \cite{El3}.


\subsection{Existence of non-local Poisson Brackets}
We now compute the curvature associated with the metric $g_{ii}$ under consideration. Specifically, we compute the non-vanishing components of curvature $R^j_{iij}$ of the Riemann curvature tensor associated with $g$, namely
\begin{align}
R^j_{iij} = \partial_i \Gamma^j_{ji} - \partial_j \Gamma^j_{ii} + \Gamma^j_{ji}\Gamma^j_{ji} + \Gamma^i_{ij}\Gamma^j_{ii} - \sum_k \Gamma^k_{ii}\Gamma^j_{jk},
\end{align}
since it is these which determine whether the hydrodynamic brackets of Theorems \ref{LocalH} and \ref{NonLocalH} define Poisson brackets. One can determine these directly from the Christoffel symbols, but it is quickest to use existing results  \cite{FP,FeraNL} and define \emph{Lam{\'e} coefficients} and \emph{rotation coefficients} by
\begin{align}
H_i = \sqrt{g_{ii}}, \quad \beta_{ij} = \frac{\partial_i H_j}{H_i}, \quad i \neq j
\end{align}
respectively. Then the diagonal components of the Riemann tensor can be expressed as
\begin{equation}
R^j_{iij} = \frac{H_j}{H_i}\left(\partial_i\beta_{ij} + \partial_j \beta_{ji} + \sum_{k \neq i,j}\beta_{ki}\beta_{kj}\right).
\end{equation}
In the present case, we have
\begin{equation}
H_i = A_i, \quad \beta_{ij} = \alpha_{ij}
\end{equation}
and it is straightforward to determine that
\begin{equation}
R^{j}_{iij} = \frac{A_i}{A_j}\sum_{k=1}^N \alpha_{ik}\alpha_{jk}
\label{curv}
\end{equation}
for $i \neq j$. This does not generically vanish, and by Theorem \ref{LocalH}, the Dubrovin-Novikov bracket is not Hamiltonian for our system. However, by Theorem \ref{NonLocalH}, there remains the possibility that our system is Hamiltonian with respect to its non-local extension, provided we can find a family of commuting flows $\{\chi^{(m)}\}_{m=1}^M$ and $\epsilon_m = \pm 1$, such that the diagonal elements of the Riemann tensor $R^{ij}_{ij} = -g^{jj}R^j_{iij}$ satisfy
\begin{equation}
R^{ij}_{ij} = -\frac{1}{A_iA_j}\sum_{k=1}^N \alpha_{ik}\alpha_{jk} = \sum_{m=1}^M \epsilon_m \chi^{(m)}_i\chi^{(m)}_j
\end{equation}
for $i\neq j$. Finding suitable $\chi^{(m)}$ turns out to be remarkably simple.
\begin{theorem}
\label{mainthm}
Let $\{\chi^{(m)}\}_{m=1}^N$ denote $N$ WNS flows generated by constant vectors $c^{(m)}_j = -K_{jm}$, as in Eqn. \eqref{wnsflows}. Then
\begin{equation}
R^{ij}_{ij} = \sum_{m=1}^N \epsilon_m\chi^{(m)}_i\chi^{(m)}_j
\end{equation}
with $\epsilon_m=-1$, and the system \eqref{curr} is Hamiltonian with respect to the non-local Poisson bracket
\begin{equation}
A^{ij} = g^{ii}\delta^{ij}\frac{d}{dx} - g^{ii}\Gamma^{j}_{ik}\theta^k_x - \sum_{m=1}^N\chi^{(m)}_i\theta^i_x d_x^{-1}\chi^{(m)}_j\theta^j_x.
\label{bracket}
\end{equation}
\end{theorem}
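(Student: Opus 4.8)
The plan is to exhibit the flows $\chi^{(m)}$ explicitly, verify that they are genuine hydrodynamic symmetries, check the curvature identity by direct substitution, and then invoke Theorem \ref{NonLocalH} verbatim. The guiding observation is that the columns of $-K$ are exactly the constant vectors whose dressing reproduces the matrix $\alpha_{ij}$ that already appears in the curvature \eqref{curv}; thus the ``square root'' of $R^{ij}_{ij}$ demanded by Ferapontov's criterion is essentially built into the structure of the problem, and the theorem's claim that the $\chi^{(m)}$ are ``remarkably simple'' to find amounts to recognizing this.

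First I would compute the dressed vector attached to $c^{(m)}_j = -K_{jm}$. Using $A^{(m)}_i = \sum_j U_{ij}c^{(m)}_j$ together with the definition $\alpha_{im} = -\sum_l U_{il}K_{lm}$ from the proof of Proposition \ref{mainprop}, one gets $A^{(m)}_i = \alpha_{im}$, so the flow \eqref{wnsflows} is $\chi^{(m)}_i = \alpha_{im}/A_i$. I would then confirm each $\chi^{(m)}$ solves \eqref{flow}, making it eligible to sit in the non-local tail. This is immediate from the mechanism behind Proposition \ref{localwns}: for any constant $c$ with dressing $C_i=\sum_j U_{ij}c_j$, the identities $\partial_j C_i = \alpha_{ij}C_j$ and $\partial_j A_i = \alpha_{ij}A_j$ give $\partial_j(C_i/A_i) = \Gamma^i_{ij}(C_j/A_j - C_i/A_i)$ for $i\neq j$, which is precisely \eqref{flow2} since $\Gamma^i_{ij}=\partial_j\log A_i$. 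Taking $c=c^{(m)}$ shows each $\chi^{(m)}$ is a hydrodynamic symmetry; note that no linear independence of the $c^{(m)}$ is required for this.

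Next I would verify the quadratic expansion of the diagonal curvature. Substituting $\chi^{(m)}_i = \alpha_{im}/A_i$ yields $\sum_{m=1}^N \chi^{(m)}_i\chi^{(m)}_j = \frac{1}{A_iA_j}\sum_{m=1}^N \alpha_{im}\alpha_{jm}$, which equals $-R^{ij}_{ij} = \frac{1}{A_iA_j}\sum_{k=1}^N \alpha_{ik}\alpha_{jk}$ after relabelling the summation index $m\mapsto k$. Hence $R^{ij}_{ij} = \sum_m \epsilon_m\chi^{(m)}_i\chi^{(m)}_j$ with every $\epsilon_m = -1$. Both hypotheses of Theorem \ref{NonLocalH} are now in place, so it delivers that \eqref{curr} is Hamiltonian with respect to \eqref{bracket}, the value $\epsilon_m=-1$ propagating into the minus sign on the non-local term.

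The genuine crux is not any single calculation but the identification in the first step: once one sees that the right flows are generated by the columns of $-K$, so that $\chi^{(m)}_i=\alpha_{im}/A_i$, everything else is bookkeeping. A secondary point I would flag is that the factorization $R^{ij}_{ij} = -\frac{1}{A_iA_j}\sum_k\alpha_{ik}\alpha_{jk}$ into a \emph{symmetric} bilinear form relies on $\alpha_{ij}=\alpha_{ji}$, equivalently the symmetry of $K$; this is exactly what legitimizes the Egorov/rotation-coefficient identification $\beta_{ij}=\alpha_{ij}$ and guarantees that the curvature splits as a sum of symmetric rank-one terms $\epsilon_m\chi^{(m)}_i\chi^{(m)}_j$ in the first place. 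I would therefore present the symmetry of $\alpha$ as the structural input that makes the whole construction possible, rather than a mere technical convenience.
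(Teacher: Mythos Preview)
Your proposal is correct and follows essentially the same route as the paper's proof: compute $\chi^{(m)}_i=\alpha_{im}/A_i$ from the dressing of $c^{(m)}_j=-K_{jm}$, observe these are WNS flows by the mechanism of Proposition~\ref{localwns}, substitute into $\sum_m\chi^{(m)}_i\chi^{(m)}_j$ to recover $-R^{ij}_{ij}$, and invoke Theorem~\ref{NonLocalH}. Your added remarks on the role of the symmetry of $\alpha$ (equivalently of $K$) are sound commentary but not additional logical content beyond what the paper uses.
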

\begin{proof}
Let $c^{(m)}_j = -K_{jm}$. Define
\begin{equation}
C^{(m)}_i = \sum_j U_{ij}c^{(m)}_j = -\sum_j U_{ij}K_{jm} = \alpha_{im}
\end{equation}
and
\begin{equation}
\chi^{(m)}_i= \frac{C_i^{(m)}}{A_i} = \frac{\alpha_{im}}{A_i}.
\end{equation}
Each of these vector fields has the structure of the flows considered in Prop. \ref{localwns}, and therefore defines a weakly non-linear semi-Hamiltonian flow (although they are no longer required to be linearly independent).  In particular,
\begin{align}
\nonumber \sum_{m=1}^N \chi^{(m)}_i\chi^{(m)}_j &= \sum_{m=1}^N \frac{\alpha_{im}\alpha_{jm}}{A_iA_j} = -R^{ij}_{ij}.
\end{align}
The result follows by Theorem \ref{NonLocalH}.
\end{proof}
There is a geometrical interpretation \cite{FeraNL} of this property of $R^{ij}_{ij}$; namely, that there exists a realization of the metric $g = \sum_i g_{ii}(\theta)d\theta^i\otimes d\theta^i$ as a net of lines of curvature of some $N$-dimensional surface with flat normal connection in the pseudo-Riemannian space $\mathbb{R}^{N+N}$, all of whose normals are ``spatially similar''.


\subsection{Infinite Families of Symmetries and First Integrals}
\label{commflows}
Following Ferapontov \cite{Fera}, we now use the WNS flows to obtain the infinite family of hydrodynamic symmetries of the system \eqref{curr}. Thus consider an $N$-dimensional submanifold of the target space traced out by evolution along the commuting WNS flows $w^{(0)},w^{(1)},\ldots,w^{(N-1)}$ given in Eq. \eqref{wnsflows}. We can parametrize these flows by coordinates $t_1,\ t_2,\ldots,\ t_{N-1}$, giving rise to the system of evolution equations
\begin{align}
\nonumber \partial_{t_1}\theta^i + v_i(\theta)\partial_x\theta^i &= 0 \\
\nonumber \partial_{t_2}\theta^i + w^{(1)}_i(\theta)\partial_x\theta^i &= 0 \\
\nonumber &\vdots\\
\partial_{t_{N-1}}\theta^i + w^{(N-1)}_i(\theta)\partial_x\theta^i &= 0.
\label{commsys}
\end{align}
Rather than tackling this system directly, we let $f_i(\theta_i)$ denote $N$ arbitrary functions of a single variable and consider the system
\begin{align}
\nonumber \partial_x \theta^i &= f_i(\theta^i)A^{(0)}_i(\theta) \\
\partial_{t_{n}}\theta^i &= - f_i(\theta^i)A^{(n)}_i(\theta), \quad n=1,2,\ldots,N-1,
\label{commsys2}
\end{align}
which implies $\eqref{commsys}$. We can summarize this as
\begin{align}
\nonumber \frac{\partial_x \theta^i}{f_i(\theta^i)} &= A^{(0)}_i(\theta) \\
\frac{\partial_{t_n}\theta^i}{f_i(\theta^i)} &= -A^{(n)}_i(\theta), \quad n=1,2,\ldots,N-1,
\end{align}
whose differential version reads
\begin{equation}
\frac{d\theta^i}{f_i(\theta^i)} = A^{(0)}_i(\theta)dx - \sum_{n=1}^{N-1} A^{(n)}_i(\theta)dt_n.
\end{equation}
In order to integrate this expression, one must separate variables; this is achieved by acting with the inverse dressing operator to yield
\begin{equation}
\sum_j \frac{(\delta_{ij} + K_{ij}\theta^j)d\theta^j}{f_j(\theta^j)} = a^{(0)}_idx - \sum_{n=1}^{N-1} a^{(n)}_idt_n.
\end{equation}
Integrating, we obtain the implicit equation
\begin{equation}
\sum_j \int^{\theta^j} \frac{(\delta_{ij} + K_{ij}\xi)d\xi}{f_j(\xi)} = a^{(0)}_i x - \sum_{n=1}^{N-1} a^{(n)}_i t_n.
\label{crucialint}
\end{equation}
Finally, we act with the operator $U_{ij}/A_i$ to obtain the ``generalized hodograph'' form
\begin{equation}
\frac{1}{A_i(\theta)}\sum_{j,k}(1+K\theta)^{-1}_{ij}\int^{\theta^k} \frac{(\delta_{jk} + K_{jk}\xi)d\xi}{f_k(\xi)} = x - \sum_{n=1}^{N-1} w^{(n)}_i t_n
\end{equation}
for solutions of the system \eqref{commsys}. One can deduce that the flows solving \eqref{flow2} are given in terms of $N$ functional degrees of freedom $f_k(\xi)$ by
\begin{equation}
w_i(\theta) = \frac{1}{A_i(\theta)}\sum_{j,k}U_{ij}(\theta)\int^{\theta^k} \frac{(\delta_{jk} + K_{jk}\xi)d\xi}{f_k(\xi)}.
\label{result}
\end{equation}
There are ways to derive this formula which make its relation with the result of Proposition \ref{quad} more explicit \cite{Fera}. However, these are more involved and all yield the same final result \eqref{result}. For the sake of completeness, we note that the quadratures referred to in Proposition \ref{quad} are given by
\begin{align}
\Delta^{n-1} &= \sum_k \int^{\theta^k} \sum_{j} \frac{{q'}^{-1}_{nj}(\delta_{jk} + K_{jk}\xi)d\xi}{f_k(\xi)}, \quad n=1,2,\ldots,N
\end{align}
where the matrix $q'$ has elements
\begin{align}
\label{chargematrix}
q' = \begin{pmatrix}
a_1^{(0)} & -a_1^{(1)} & \ldots & -a_1^{(N-1)} \\
a_2^{(0)} & -a_2^{(1)} & \ldots & -a_2^{(N-1)} \\
\vdots & \vdots & \ddots & \vdots \\
a_N^{(0)} & -a_N^{(1)} & \ldots & -a_N^{(N-1)}
\end{pmatrix}.
\end{align}
We now formalize the result \eqref{result}, which clarifies its relation with the WNS flows obtained above.
\begin{prop}
\label{hydrosym}
The system \eqref{curr} has infinitely many hydrodynamic symmetries,
\begin{equation}
w_i(\theta) = \frac{\sum_j U_{ij}(\theta)c_j(\theta)}{A_i(\theta)}
\end{equation}
solving the system \eqref{flow}, and parameterized by $N$ functional degrees of freedom $\{f_k(\xi)\}_{k=1}^N$ via
\begin{equation}
c_j(\theta) = \sum_k\int^{\theta^k} \frac{(\delta_{jk} + K_{jk}\xi)d\xi}{f_k(\xi)}.
\label{gencharge}
\end{equation}
\end{prop}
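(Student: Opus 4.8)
The plan is to verify directly that the proposed $w_i$ satisfy the hydrodynamic symmetry equations \eqref{flow}, which in the present setting take the diagonal form \eqref{flow2}, namely $\partial_j w_i = \partial_j\log(A_i)\,(w_j - w_i)$ for $i \neq j$. Once this is checked, the arbitrariness of the $N$ single-variable functions $f_k$ supplies the claimed $N$ functional degrees of freedom, and hence the infinitude of symmetries, in agreement with Theorem \ref{flowthm}.

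First I would abbreviate $C_i(\theta) = \sum_j U_{ij}(\theta)c_j(\theta)$, so that $w_i = C_i/A_i$, and compute $\partial_j w_i$ by the quotient rule. The factor $\partial_j A_i = \alpha_{ij}A_j$ is already available from the proof of Proposition \ref{mainprop}, so the entire content of the calculation lies in evaluating $\partial_j C_i$. Differentiating $C_i = \sum_l U_{il}c_l$ by the product rule and using the identity $\partial_j U_{il} = \alpha_{ij}U_{jl}$ established in that same proof produces two contributions: an ``$\alpha$-term'' $\alpha_{ij}\sum_l U_{jl}c_l = \alpha_{ij}C_j$, and an ``inhomogeneous term'' $\sum_l U_{il}\,\partial_j c_l$.

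The key step is to evaluate this inhomogeneous term. Applying the fundamental theorem of calculus to \eqref{gencharge} gives $\partial_j c_l = (\delta_{lj} + K_{lj}\theta^j)/f_j(\theta^j) = (1+T)_{lj}/f_j(\theta^j)$, in which I recognize the $(l,j)$ entry of the very operator $(1+T)$ that $U$ inverts. Consequently $\sum_l U_{il}\,\partial_j c_l = \big(\sum_l U_{il}(1+T)_{lj}\big)/f_j(\theta^j) = \delta_{ij}/f_j(\theta^j)$, which vanishes for $i \neq j$. Off the diagonal one therefore obtains exactly $\partial_j C_i = \alpha_{ij}C_j$, mirroring $\partial_j A_i = \alpha_{ij}A_j$. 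Substituting both into the quotient rule yields $\partial_j w_i = \alpha_{ij}(C_j A_i - C_i A_j)/A_i^2$, while $\partial_j\log(A_i)\,(w_j - w_i) = (\alpha_{ij}A_j/A_i)\cdot(C_j A_i - C_i A_j)/(A_i A_j)$ collapses to the same expression, establishing \eqref{flow2}.

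The main obstacle is really conceptual rather than technical: one must recognize that the $(1+T)$ weighting inserted into the integrand of \eqref{gencharge} is precisely engineered so that, after dressing by $U=(1+T)^{-1}$, the diagonal contribution collapses to $\delta_{ij}/f_j$ and hence decouples off-diagonal. This is exactly what allows the arbitrary functions $f_k$ to enter freely without spoiling the symmetry equation, and it is the point at which the Proposition connects back to the separation of variables performed in \eqref{crucialint}--\eqref{result}; indeed, one may alternatively read the result off that integration, but the direct verification above is the cleanest route. A final routine check --- that these $w_i$ depend genuinely on all $N$ arbitrary functions and so are not exhausted by the finite list \eqref{wnsflows} of Proposition \ref{localwns} --- confirms the count of $N$ functional degrees of freedom promised by Theorem \ref{flowthm}.
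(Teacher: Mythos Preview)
Your proposal is correct and follows essentially the same route as the paper: both define $C_i=\sum_j U_{ij}c_j$, compute $\partial_j C_i=\alpha_{ij}C_j+\delta_{ij}/f_j(\theta^j)$ by recognizing that $\partial_j c_l=(1+T)_{lj}/f_j(\theta^j)$ is undone by $U=(1+T)^{-1}$, and then invoke the derivation of Proposition~\ref{mainprop} to conclude. Your write-up is in fact a bit more explicit than the paper's in spelling out the quotient-rule step and the identification of $(1+T)_{lj}$ in the integrand.
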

\begin{proof}
Note that
\begin{align}
\partial_k c_j = \frac{(\delta_{jk} + K_{jk}\theta^k)}{f_k(\theta^k)}.
\end{align}
Thus, defining $C_i = \sum_j U_{ij}c_j$, we have
\begin{align}
\partial_k C_i = \sum_j \alpha_{ik}U_{kj}c_j + U_{ij}(\delta_{jk} + K_{jk}\theta^k)\frac{1}{f_k(\theta^k)} = \alpha_{ik}C_k + \frac{\delta_{ik}}{f_i(\theta_i)}.
\end{align}
In particular, for $i\neq k$, $\partial_k C_i = \alpha_{ik}C_k$, and
\begin{equation}
\frac{\partial_k w_i}{w_k-w_i} = \frac{\alpha_{ik}A_k}{A_i} = \Gamma^i_{ik}
\end{equation}
by the derivation given in Prop. \ref{mainprop}.
\end{proof}
Associated with these symmetries are the first integrals referred to in Prop. \ref{firstint}.
\begin{prop}
\label{intprop}
The system \eqref{curr} has infinitely many first integrals of hydrodynamic type, with densities
\begin{equation}
h(\theta) = \sum_i \left[\theta^iA_i(\theta)c_i(\theta) - a^{(0)}_i\int^{\theta^i} \frac{d\xi \, \xi}{f_i(\xi)}\right]
\end{equation}
solving the system \eqref{flow}, and parameterized by $N$ functional degrees of freedom $\{f_k(\xi)\}$ via Eq. \eqref{gencharge}. Their gradients $g^{ii}\partial_i h$ recover the hydrodynamic symmetries obtained above.
\end{prop}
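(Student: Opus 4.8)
The plan is to verify the two assertions separately: first that each density $h(\theta)$ satisfies the Tsar\"ev first-integral condition of Theorem \ref{firstint}, and second that its gradient reproduces the hydrodynamic symmetry $w_i$ of Proposition \ref{hydrosym}. Both follow at once from a single compact formula, which is the real content of the calculation, namely
\begin{equation}
\partial_k h = A_k C_k, \qquad C_i = \sum_j U_{ij}c_j.
\label{plankey}
\end{equation}
Granting \eqref{plankey}, the symmetry claim is immediate: since $g_{kk}=A_k^2$ and $w_k = C_k/A_k$, one has $g^{kk}\partial_k h = C_k/A_k = w_k$, which is exactly the symmetry of Proposition \ref{hydrosym}.

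The main work is establishing \eqref{plankey}, which I would do by differentiating $h$ term by term. The first term $\sum_i \theta^i A_i c_i$ is a $\theta$-dependent generalization of the charge density appearing in Proposition \ref{wnscharges}; the novelty is that $c_i$ now depends on $\theta$, so $\partial_k c_i$ contributes. Using $\partial_k A_i = \alpha_{ik}A_k$ and $\partial_k c_i = (\delta_{ik}+K_{ik}\theta^k)/f_k(\theta^k)$, the derivative of the first term splits into three pieces: $A_k c_k$, the piece $A_k\sum_i \theta^i \alpha_{ik}c_i$, and the piece $\sum_i \theta^i A_i \partial_k c_i$. The crux is that this last piece collapses via the identity $(T\mathbf{A})_k = a^{(0)}_k - A_k$, which is simply $(1+T)\mathbf{A}=\mathbf{a}^{(0)}$ rearranged, combined with $K_{ik}\theta^i = T_{ki}$; one finds $\sum_i \theta^i A_i \partial_k c_i = \theta^k a^{(0)}_k/f_k(\theta^k)$. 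This is precisely $\partial_k$ of the correction term $a^{(0)}_i\int^{\theta^i}\xi\,d\xi/f_i(\xi)$ subtracted in the definition of $h$, so the two contributions cancel exactly. This cancellation, which is the very reason the correction term appears in the density, is the step I expect to be the main obstacle, since it requires seeing that the $\theta$-dependence of $c_i$ is compensated by the explicit integral.

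After the cancellation one is left with $\partial_k h = A_k c_k + A_k\sum_i \theta^i \alpha_{ik}c_i$. To recognize this as $A_k C_k$ I would use the series identity $U - 1 = -UT$, equivalently $U_{kj}-\delta_{kj} = \alpha_{kj}\theta^j$, together with the symmetry $\alpha_{ij}=\alpha_{ji}$ established in Proposition \ref{mainprop}. These give $\sum_i \theta^i \alpha_{ik}c_i = \sum_j (U_{kj}-\delta_{kj})c_j$, whence $c_k + \sum_i \theta^i \alpha_{ik}c_i = \sum_j U_{kj}c_j = C_k$, completing \eqref{plankey}.

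Finally, for the first-integral property I would insert \eqref{plankey} into the condition of Theorem \ref{firstint}. Differentiating once more with $\partial_j A_i = \alpha_{ij}A_j$ and $\partial_j C_i = \alpha_{ij}C_j$ for $i\neq j$ (the latter read off from the proof of Proposition \ref{hydrosym}, dropping the $\delta_{ij}/f_i$ term) yields $\partial_i\partial_j h = \alpha_{ij}(A_jC_i + A_iC_j)$. Substituting $\Gamma^i_{ij} = \alpha_{ij}A_j/A_i$ and $\Gamma^j_{ji}=\alpha_{ij}A_i/A_j$ gives $\Gamma^i_{ij}\partial_i h = \alpha_{ij}A_jC_i$ and $\Gamma^j_{ji}\partial_j h = \alpha_{ij}A_iC_j$, so that $\partial_i\partial_j h - \Gamma^i_{ij}\partial_i h - \Gamma^j_{ji}\partial_j h = 0$ for $i\neq j$, as required. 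Consistency of this overdetermined system and the count of $N$ functional degrees of freedom then follow from the semi-Hamiltonian property through Theorem \ref{firstint}, while the identification $g^{ii}\partial_i h = w_i$ already noted shows the gradients recover the symmetries of Proposition \ref{hydrosym}.
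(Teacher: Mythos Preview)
Your proposal is correct and follows essentially the same route as the paper: both reduce everything to the identity $\partial_k h = A_k C_k$, obtained by combining $(1+T)A = a^{(0)}$ (which makes the $\partial_k c_i$ contribution cancel the correction term) with $U_{kj}-\delta_{kj}=\alpha_{kj}\theta^j$ and the symmetry of $\alpha$ (which turns $c_k + \sum_i \theta^i\alpha_{ik}c_i$ into $C_k$). The only cosmetic difference is ordering: the paper first isolates $A_jC_j$ and then shows the residual terms cancel, whereas you perform the cancellation first and then assemble $C_k$; your explicit verification of the Tsar\"ev condition is also spelled out a bit more than the paper's appeal to Proposition~\ref{wnscharges}.
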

\begin{proof}
One can show that
\begin{align}
\nonumber \partial_j h = A_jC_j - a_j^{(0)}\theta^j\frac{1}{f_j(\theta^j)} + \sum_i \theta^iA_i(\delta_{ij} + K_{ij}\theta^j)\frac{1}{f_j(\theta^j)}.
\end{align}
Now
\begin{equation}
\sum_i \theta^iA_i(\delta_{ij} + K_{ij}\theta^j) = \sum_i A_i(\delta_{ij}\theta^i + \theta^iK_{ij}\theta^j) = \theta^j\sum_i (\delta_{ji} + K_{ji}\theta^i)A_i = a^{(0)}_j\theta^j
\end{equation}
by symmetry of $K$. Thus
\begin{equation}
\partial_j h = A_jC_j.
\end{equation}
Since $\partial_i C_j = \alpha_{ji}C_i$ for $i\neq j$, it follows by the proof of Prop. \ref{wnscharges} that $h$ is the density of a first integral. It is also clear that $g^{ii}\partial_ih = C_i/A_i = w_i$.
\end{proof}


\subsection{The Hamiltonian}
Now that we have fully characterized the hydrodynamic symmetries and associated first integrals of our system, let us turn to the question of finding the Hamiltonian giving rise to the dynamics \eqref{curr} with respect to our non-local Poisson bracket \eqref{bracket}. First, it is helpful to write down the equations of motion
\begin{equation}
\nonumber \partial_t \theta^i = \{\theta^i,H\}
\end{equation}
associated with the Ferapontov bracket explicitly, where $H$ is a functional of hydrodynamic type with density $h$. In the present context, we find that
\begin{align}
\nonumber \{\theta^i,H\} = &\frac{1}{A_i^2}\sum_{j\neq i} \left[\frac{\partial^2 h}{\partial \theta^i \partial \theta^j} -\nonumber \frac{\alpha_{ij}A_j}{A_i}\frac{\partial h}{\partial \theta^i}-\frac{\alpha_{ji}A_i}{A_j}\frac{\partial h}{\partial \theta^j} \right]\theta^j_x +\frac{1}{A_i^2}\left[\frac{\partial^2 h}{\partial \theta^i \partial \theta^i} - \alpha_{ii}\frac{\partial h}{\partial \theta^i} + \sum_{j \neq i} \frac{\alpha_{ij}A_i}{A_j}\frac{\partial h}{\partial \theta^j}\right]\theta^i_x \\
& - \frac{1}{A_i}\sum_m \alpha_{im}d_x^{-1}\left(\alpha_{mj}\theta^j_x\frac{1}{A_j}\frac{\partial h}{\partial \theta^j}\right)\theta^i_x.
\end{align}
In particular, whenever $H$ is a first integral of motion, so that $\nabla^i\nabla_j h = 0$ for $i\neq j$, we obtain the diagonal expression
\begin{equation}
\{\theta^i,H\} = \frac{1}{A_i}\left(\frac{1}{A_i}\left[\frac{\partial^2 h}{\partial \theta^i \partial \theta^i} - \alpha_{ii}\frac{\partial h}{\partial \theta^i} + \sum_{j \neq i} \frac{\alpha_{ij}A_i}{A_j}\frac{\partial h}{\partial \theta^j}\right] - \sum_m \alpha_{im}d_x^{-1}\left(\alpha_{mj}\theta^j_x\frac{1}{A_j}\frac{\partial h}{\partial \theta^j}\right)\right)\theta^i_x.
\label{hambrack}
\end{equation}
In order to remove any ambiguity in the operator $d_x^{-1}$, it is convenient to work on the compactified real line. A discussion of more general boundary conditions and their associated subtleties \cite{maltsev,Sklyanin} will be deferred to later work.
\begin{prop}
\label{theham}
\begin{enumerate} 
\item The first integrals
\begin{equation}
\mathcal{Q}^{(n)}[\theta] = \int dx \, \sum_i \theta^i A_i(\theta) a^{(n)}_i, \quad n=0,1,\ldots,N-1,
\label{Cassie}
\end{equation}
arising from the $WNS$ flows, lie in the kernel of the bracket \eqref{bracket}.
\item The Hamiltonian
\begin{equation}
H^{(1)}[\theta] = -\int dx \, \sum_i \left(\theta^iA_i(\theta)c^{(1)}_i(\theta) - \frac{1}{2}a^{(0)}_ia^{(1)}_i(\theta^i)^2 \right)
\end{equation}
with
\begin{equation}
c^{(1)}_i(\theta) = \sum_j \left(\delta_{ij}\theta^j + \frac{1}{2}K_{ij}(\theta^j)^2\right)a^{(1)}_j
\end{equation}
gives rise to the evolution \eqref{curr}.
\end{enumerate}
\end{prop}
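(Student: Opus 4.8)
The plan is to treat both claims as applications of the diagonal equation of motion \eqref{hambrack}, which holds whenever the generating density $h$ is a first integral, i.e.\ satisfies $\nabla_i\nabla_j h = 0$ for $i\neq j$. The first observation I would make is that both densities in the statement are already known to be first integrals. The Casimir densities $\rho^{(n)} = \sum_i \theta^i A_i a^{(n)}_i$ are first integrals by Proposition \ref{wnscharges}, while the density of $H^{(1)}$ is, up to an overall sign, precisely the first-integral density of Proposition \ref{intprop} evaluated at the constant choice $1/f_k(\xi) = a^{(1)}_k$: with that choice one has $c_k = c^{(1)}_k$ and $a^{(0)}_k\int^{\theta^k}\xi\,d\xi/f_k = \tfrac12 a^{(0)}_k a^{(1)}_k(\theta^k)^2$. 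Hence \eqref{hambrack} applies in both cases, and throughout I would work on the compactified line so that $d_x^{-1}\partial_x = \mathrm{id}$ with no boundary contribution.

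For part (i) I would use $\partial_j \rho^{(n)} = A_j A^{(n)}_j$ from the proof of Proposition \ref{wnscharges}, so that the factor $\tfrac{1}{A_j}\partial_j h$ entering \eqref{hambrack} is simply $A^{(n)}_j$. Using $\partial_j A^{(n)}_i = \alpha_{ij}A^{(n)}_j$ together with the symmetry $\alpha_{ij}=\alpha_{ji}$, the local bracket collapses to $A_i \sum_j \alpha_{ij} A^{(n)}_j$. The decisive point is the nonlocal term: the integrand $\sum_j \alpha_{mj} A^{(n)}_j \theta^j_x$ is exactly $\partial_x A^{(n)}_m$, again by $\partial_j A^{(n)}_m = \alpha_{mj}A^{(n)}_j$, so $d_x^{-1}$ returns $A^{(n)}_m$ and the nonlocal contribution is $\sum_m \alpha_{im} A^{(n)}_m$. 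This cancels the local term identically, giving $\{\theta^i,\mathcal{Q}^{(n)}\}=0$, the Casimir property.

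For part (ii), writing $C_j = \sum_k U_{jk} c^{(1)}_k$ so that $\partial_j h^{(1)} = -A_j C_j$, I would repeat the computation but now retain the inhomogeneous term in $\partial_k C_i = \alpha_{ik}C_k + \delta_{ik}/f_i$ present for the general first integrals of Proposition \ref{intprop}. The homogeneous parts cancel between the local and nonlocal pieces exactly as in part (i); what survives is the diagonal ($j=m$) correction. Concretely, $\sum_j \alpha_{mj}C_j\theta^j_x = \partial_x C_m - \tfrac{1}{f_m}\theta^m_x$, so $d_x^{-1}$ of it returns $C_m - F_m(\theta^m)$ with $F_m(\theta^m)=\int^{\theta^m} d\xi/f_m = a^{(1)}_m\theta^m$, while the local piece contributes an extra $1/f_i = a^{(1)}_i$. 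Collecting terms yields
\[
\{\theta^i,H^{(1)}\} = -\frac{1}{A_i}\Big(a^{(1)}_i + \sum_m \alpha_{im} a^{(1)}_m \theta^m\Big)\theta^i_x.
\]

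The final and most delicate step is to recognize the parenthesized quantity as $B_i$, so that $\{\theta^i,H^{(1)}\} = -(B_i/A_i)\theta^i_x = -v_i\theta^i_x$, reproducing \eqref{curr}. Here I would substitute $\alpha_{im} = -\sum_l U_{il}K_{lm}$ and use $T_{lm}=K_{lm}\theta^m$ to rewrite $\sum_m \alpha_{im}a^{(1)}_m\theta^m = -\sum_l U_{il}(Ta^{(1)})_l$, then invoke the resolvent identity $U = (1+T)^{-1} = 1 - UT$ to obtain $a^{(1)}_i - (UTa^{(1)})_i = (Ua^{(1)})_i = B_i$, recalling $b = a^{(1)}$. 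I expect the main obstacle to be the careful bookkeeping of the nonlocal $d_x^{-1}$ term, in particular isolating the diagonal $j=m$ correction that distinguishes a Casimir (part (i), where it is absent) from the genuine Hamiltonian (part (ii), where it generates the velocity), and ensuring the compactified boundary conditions kill the ambiguous constant in $d_x^{-1}$ so that $d_x^{-1}\partial_x C_m = C_m$ exactly.
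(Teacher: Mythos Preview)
Your proposal is correct and follows essentially the same route as the paper's own proof: identify $\partial_j\rho^{(n)}=A_jA^{(n)}_j$ and $\partial_j h^{(1)}=-A_jC_j$ via Propositions \ref{wnscharges} and \ref{intprop}, substitute into the diagonal bracket formula \eqref{hambrack}, recognize the nonlocal integrand as $\partial_x A^{(n)}_m$ (respectively $\partial_x(C_m-a^{(1)}_m\theta^m)$), and finish via the resolvent identity $\delta_{im}+\alpha_{im}\theta^m=U_{im}$ to obtain $B_i$. The only cosmetic difference is that the paper writes the last step directly as $\sum_m(\delta_{im}+\alpha_{im}\theta^m)a^{(1)}_m=B_i$ rather than invoking $U=1-UT$ explicitly.
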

\begin{proof}
\begin{enumerate}
\item Let $\rho^{(n)}(\theta) = \sum_i \theta^i A_i(\theta) a^{(n)}_i$. It was shown above that $\partial_j \rho^{(n)} = A_jA_j^{(n)}$. Substituting into \eqref{hambrack} yields
\begin{align}
\nonumber \{\theta^i, \mathcal{Q}^{(n)}\} &= \frac{1}{A_i}\left(\frac{1}{A_i}\left[2\alpha_{ii}A_iA_i^{(n)} - \alpha_{ii}A_iA_i^{(n)} + \sum_{j \neq i} \alpha_{ij}A_iA_j^{(n)}\right] - \sum_m \alpha_{im}d_x^{-1}\left[\alpha_{mj}\theta^j_xA_j^{(n)}\right]\right)\theta^i_x \\
\nonumber &= \frac{1}{A_i}\left(\sum_{j} \alpha_{ij}A_j^{(n)} - \sum_m \alpha_{im}d_x^{-1}d_x[A^{(n)}_m]\right)\theta^i_x\\
\nonumber &= \frac{1}{A_i}\left(\sum_{j} \alpha_{ij}A_j^{(n)} - \sum_m \alpha_{im}A^{(n)}_m\right)\theta^i_x \\
&=0.
\end{align}
\item Now consider the density
\begin{equation}
h^{(1)}(\theta) = -\sum_i \left(\theta^iA_i(\theta)c^{(1)}_i(\theta) - \frac{1}{2}a^{(0)}_ia^{(1)}_i(\theta^i)^2 \right).
\end{equation}
By taking $1/f_i(\xi)= a^{(1)}_i$ in Prop. \ref{intprop}, we find that $\partial_i h^{(1)} = -A_iC_i$, with $C_i = \sum_j U_{ij}c_j$. Then by various arguments given above,
\begin{align}
\nonumber \{\theta^i, H^{(1)}\} &= -\frac{1}{A_i}\left(\frac{1}{A_i}\left[2\alpha_{ii}A_iC_i + a^{(1)}_i - \alpha_{ii}A_iC_i + \sum_{j \neq i} \alpha_{ij}A_iC_j\right] - \sum_m \alpha_{im}d_x^{-1}\left[\alpha_{mj}\theta^j_xC_j\right]\right)\theta^i_x \\
\nonumber &= -\frac{1}{A_i}\left(\sum_{j} \alpha_{ij}C_j + a^{(1)}_i - \sum_m \alpha_{im}d_x^{-1}d_x \left[C_m-a^{(1)}_m\theta^m\right]\right)\theta^i_x\\
\nonumber &= -\frac{1}{A_i}\left(\sum_{j} \alpha_{ij}C_j + a^{(1)}_i - \sum_m \alpha_{im}C_m +\sum_m \alpha_{im}a^{(1)}_m\theta^m\right)\theta^i_x \\
\nonumber &= -\frac{1}{A_i}\left(\sum_m (\delta_{im}+\alpha_{im}\theta^m)a^{(1)}_m\right) \\
&= -\frac{B_i}{A_i}\theta^i_x.
\end{align}
Thus the system \eqref{curr} can be expressed in terms of the non-local Poisson bracket \eqref{bracket} and the above Hamiltonian as
\begin{equation}
\partial_t \theta^i = \{\theta^i, H^{(1)}\}.
\end{equation}
\end{enumerate}
\end{proof}
We remark that the generator of the $n$th WNS flow is the Hamiltonian
\begin{equation}
H^{(n)}[\theta] = - \int dx \sum_i \, \left(\theta^iA_i(\theta)c^{(n)}_i(\theta) - \frac{1}{2}a^{(0)}_ia^{(n)}_i(\theta^i)^2 \right),
\end{equation}
with
\begin{equation}
c^{(n)}_i(\theta) = \sum_j \left(\delta_{ij}\theta^j + \frac{1}{2}K_{ij}(\theta^j)^2\right)a^{(n)}_j,
\end{equation}
and this expression can be simplified to
\begin{equation}
H^{(n)}[\theta] = - \int dx \sum_i \, \frac{1}{2}(\theta^i)^2A_i(\theta)a^{(n)}_i,
\end{equation}
which is closer to the form of the Casimirs, Eq. \eqref{Cassie}, albeit less useful for proving the results above. It is not hard to check that in the absence of spatial boundaries, the algebra of Casimirs and Hamiltonians is given by
\begin{equation}
\{\mathcal{Q}^{(m)},\mathcal{Q}^{(n)}\}=0, \quad \{H^{(m)},\mathcal{Q}^{(n)}\}=0, \quad \{H^{(m)},H^{(n)}\}=0, \quad m,n = 0,1,\ldots
\label{falg}
\end{equation}
By Propositions \ref{localwns} and \ref{wnscharges}, the Casimirs $\{\mathcal{Q}^{(n)}\}_{n=0}^{N-1}$ possess functionally independent densities, while the Hamiltonians $\{H^{(n)}\}_{n=0}^{N-1}$ generate linearly independent flows at each point. Thus, in an $N$ point discretization, only the first $N$ of the relations \eqref{falg} are linearly independent.


\subsection{Linearization via a Reciprocal Transformation}
\label{recip}
We close our discussion of the system \eqref{curr} with some remarks on how it arises from an underlying linear problem. It has been shown \cite{FP} that the non-local Poisson structure for any WNS system can be constructed as the image of a local Hamiltonian structure under a suitably defined reciprocal transformation. In particular, evolution along WNS flows may always be regarded as the image of an underlying \emph{linear} evolution under a suitable reciprocal transformation. We now exhibit such a transformation for the present system, following Ferapontov and Pavlov \cite{FP}. Thus consider the system of linear equations
\begin{equation}
\partial_t \theta^i + \frac{a^{(1)}_i}{a^{(0)}_i} \partial_x \theta^i = 0.
\end{equation}
This is trivially Hamiltonian, with a local Poisson bracket $A^{ij} = g^{ij}\frac{d}{dx}$ of hydrodynamic type, flat metric $g^{ij} =\frac{\delta^{ij}}{(a^{(0)}_i)^2}$ and Hamiltonian density
\begin{equation}
h(\theta) = -\frac{1}{2} \sum_{i} a^{(0)}_i a^{(1)}_i (\theta^i)^2.
\end{equation}
Now introduce variables $\{s_1,s_2,\ldots,s_N\}$ and consider the system of linear flows
\begin{equation}
\partial_{s_n}\theta^i = v^{(n)}_i\partial_{s_1}\theta^i, \quad n=1,2\ldots,N,
\label{linsys}
\end{equation}
with
\begin{align}
\nonumber v^{(1)}_i &= 1 \\
v^{(n)}_i &= -\frac{a^{(n-1)}_i}{a^{(0)}_i}, \quad n=2,\ldots,N.
\end{align}
We consider the ``generalized reciprocal transformation''
\begin{equation}
d\tilde{s}_i = r_{in}ds_n
\label{infrec}
\end{equation}
with
\begin{equation}
r_{in} = [{q'}^{-1}(1+K\theta)q']_{in}.
\label{recipmat}
\end{equation}
One can verify directly that
\begin{align}
\partial_{s_m} r_{i1} = \partial_{s_1} r_{im}, \quad m=1,2,\ldots,N.
\end{align}
Thus the $r_{i1}$ are conserved densities of the system \eqref{linsys} and the $r_{im}$ are their fluxes. The inverse transformation of $r$, satisfying
\begin{equation}
ds_n = R_{nj}d\tilde{s}_j
\end{equation}
has components
\begin{equation}
R_{nj} = [{q'}^{-1}Uq']_{nj}
\end{equation}
and can be used to evaluate the transformed velocity fields, via
\begin{align}
\nonumber v^{(n)}_i = \frac{R_{mn}v_i^{(m)}}{R_{m1}v_i^{(m)}} = \frac{(q'R)_{in}}{(q'R)_{i1}} &= \frac{(Uq')_{in}}{(Uq')_{i1}} = \begin{cases} 
1 & n=1\\
-\frac{A^{(n-1)}_i}{A^{(0)}_i}  & n=2,\ldots,N
\end{cases}.
\end{align}
These are precisely the WNS flows derived above. This confirms the validity of the generalized reciprocal transformation defined by Eqs. \eqref{infrec} and \eqref{recipmat}. One can additionally check that the curvature and non-local bracket obtained in this fashion coincide with results \eqref{curv} and \eqref{bracket} given above. This result has a rather striking physical interpretation in the context of quantum integrable models.


\section{Applications to Bethe-Boltzmann Hydrodynamics}
\label{sec4}
\subsection{Bethe-Boltzmann Theory of the Lieb-Liniger Gas}
Let us now recall some basic definitions of the Bethe-Boltzmann hydrodynamics of quantum integrable models. For a discussion of the underlying physics, the reader is referred to the existing literature \cite{BB1,BB2}. For concreteness, we work in the context of the Lieb-Liniger gas.\footnote{The main difference in applying this theory to more general integrable models lies in the number of quasiparticle excitations; see Appendix. The Lieb-Liniger gas also has the marked advantage that its conserved charges are easily expressed in terms of pseudo-momenta.} The structure of this equation is as follows: we have a function $\theta(x,t,k)$ depending on position, time and pseudo-momentum $k\in \mathbb{R}$, and satisfying the integro-differential equation
\begin{equation}
\partial_t\theta(x,t,k) + v[\theta](k)\partial_x\theta(x,t,k) = 0
\label{BB}
\end{equation}
where
\begin{equation}
v[\theta](k) = \frac{\int_{-\infty}^{\infty} dk' \, U(k,k')k'}{\int_{-\infty}^{\infty} dk' \, U(k,k')}.
\end{equation}
Here, the dressing operator $U$ is determined in terms of the local Fermi factors $\{\theta(x,t,k): k \in \mathbb{R}\}$ by the integral equation
\begin{equation}
U(k,k') + \int_{-\infty}^{\infty} dk''\,K(k,k'')\theta(k'')U(k'',k') = \delta(k-k').
\label{dress}
\end{equation}
Some important physical quantities are the pseudo-momentum density $\rho(x,t,k)$, the total density of states $\rho^t(x,t,k)$ and the local Fermi factor $\theta(x,t,k)$, which are related by
\begin{align}
\rho^t(x,t,k) = \frac{1}{2\pi}\int_{-\infty}^{\infty} dk' \, U(k,k'), \quad \rho(x,t,k) = \rho^t(x,t,k)\theta(x,t,k).
\end{align}
The conserved charges of the underlying quantum model have densities
\begin{equation}
q^{(n)}(k) = \frac{k^n}{n}, \quad n \in \mathbb{N},
\end{equation}
and give rise to spatial densities of charges and currents
\begin{align}
\nonumber \rho^{(n)}(x,t) &= \int_{-\infty}^{\infty} dk \, \rho(x,t,k)q^{(n)}(k) \\
j^{(n)}(x,t) &= \int_{-\infty}^{\infty} dk \, \rho(x,t,k)q^{(n)}(k)v(x,t,k).
\label{LDA}
\end{align}
One can show that the Bethe-Boltzmann equation is implied by local density approximations \eqref{LDA}, in combination with the infinite family of conservation laws
\begin{equation}
\partial_t \rho^{(n)}(x,t) + \partial_x j^{(n)}(x,t) = 0, \quad n \in \mathbb{N}.
\end{equation}
It is natural to expect that these conservation laws resurface in the semi-Hamiltonian geometry of the underlying kinetic equation; we shall see that this is indeed the case. Finally, we note that every ``bare'' charge gives rise to a state-dependent ``dressed'' charge, and the $k$-derivatives ${q^{(n)}}'$ and ${Q^{(n)}}'$ of bare and dressed charge densities are related by the integral equation \cite{Takahashi}
\begin{equation}
{Q^{(n)}}'(k) + \int_{-\infty}^{\infty} dk' \,\mathcal{K}(k,k')\theta(k'){Q^{(n)}}'(k') = {q^{(n)}}'(k).
\end{equation}
In particular, the quasiparticle velocity appearing in \eqref{BB} is defined to equal $v(k)={Q^{(2)}}'(k)/{Q^{(1)}}'(k)$.

\subsection{A Class of Hamiltonian Discretizations}
\label{comp}
For each $N \geq 2$, we can construct an $N$-dimensional discretization of \eqref{BB} as follows. Introduce a k-space cut-off $\Lambda$ and choose $N+1$ evenly spaced $k$-space points $-\Lambda/2=k_0 < k_1 <  \ldots < k_N = \Lambda/2$. Then functions map to vectors and kernels map to matrices; for example, we have discrete analogues
\begin{align}
\nonumber q_i &= q(k_i)\\
\nonumber \theta^i(x,t) &= \theta(x,t,k_i)\\
K_{ij} &= \frac{\Lambda}{N}\mathcal{K}(k_i,k_j)
\end{align}
of charges, Fermi factors and the Lieb-Liniger kernel respectively, where $i,j =1,2,\ldots,N$. Let us also introduce the matrix $T_{ij}(\theta) = K_{ij}(\theta)\theta^j$. Then upon discretizing, derivatives of dressed charge densities $Q'$ are related to bare ones $q'$ via the matrix equation
\begin{equation}
\sum_j (\delta_{ij} + T_{ij}(\theta))Q'_j(\theta) = q'_i.
\end{equation}
Let $\mathcal{S}$ denote a neighbourhood\footnote{It is worth noting that for large $N$, e.g. $N = \mathcal{O}(1000)$, there is numerical evidence that $\mathcal{S}$ is large enough for all practical purposes; for example, in the numerical evolutions considered in \cite{BB6}, the dressing operator was always found to possess a convergent Neumann series.} of $\theta=0$ in $\mathbb{R}^N$ on which the dressing operator $U_{ij}(\theta) = (1 + T(\theta))^{-1}_{ij}$ exists. On $\mathcal{S}$, we define dressed derivatives of energy and momenta $E'_i(\theta) =\sum_j U_{ij}(\theta)k_j,\ P'_i(\theta) =\sum_j U_{ij}(\theta)$, and the discretized quasiparticle velocities can be written as 
\begin{equation}
v_i(\theta) = E'_i(\theta)/P'_i(\theta)
\label{DBB1}
\end{equation}
In terms of these functions, the discretized Bethe-Boltzmann equation reads
\begin{equation}
\partial_t\theta^i + v_i(\theta)\partial_x\theta^i = 0, \quad i = 1,2,...,N
\label{DBB2}
\end{equation}
and has precisely the WNS structure considered in Section \ref{sec3}. Thus it can be characterized in terms of a target space metric
\begin{equation}
g_{ii} = (P'_i)^2,
\end{equation}
with respect to which the coefficients
\begin{equation}
\Gamma^i_{ij} = \frac{\partial_j v_i}{v_j-v_i}
\end{equation}
define Christoffel symbols. It also possesses $N$ WNS flows, a convenient basis for which is obtained by dressing the vectors
\begin{equation}
a^{(n-1)}_i = {q^{(n)}}'_i = k_i^{n-1}, \quad n =1,2,\ldots,N
\end{equation}
to yield the vector fields
\begin{equation}
w^{(n-1)}_i = \frac{{Q^{(n)}}'_i}{P'_i}, \quad n=1,2,\ldots,N.
\end{equation}
It is again useful to introduce the matrix
\begin{equation}
\alpha_{ij} = -\sum_kU_{ik}K_{kj},
\end{equation}
and the constant matrix $q'$ as in Eq. \eqref{chargematrix}. We now formulate the most important corollaries of Section \ref{sec3}.
\begin{corollary}
\label{maincorr}
\begin{enumerate}
\item For any finite discretization length $N \geq 2$, the discretized Bethe-Boltzmann system, \eqref{DBB1} and \eqref{DBB2}, is Hamiltonian on $\mathcal{S}$, with respect to the non-local Poisson bracket
\begin{equation}
A^{ij} = g^{ii}\delta^{ij}\frac{d}{dx} - g^{ii}\Gamma^{j}_{ik}\theta^k_x - \sum_{m=1}^N\chi^{(m)}_i\theta^i_x d_x^{-1}\chi^{(m)}_j\theta^j_x,
\label{discbrack}
\end{equation}
where the affinors $\{\chi^{(m)}\}_{m=1}^N$ are obtained by dressing the columns of the Lieb-Liniger kernel, namely
\begin{equation}
\chi^{(m)}_j(\theta) = \frac{\sum_k U_{jk}(\theta)K_{km}}{P'_j(\theta)}.
\end{equation}
\item With respect to this bracket, the system \eqref{DBB2} can be written in Hamiltonian form as
\begin{equation}
\partial_t \theta^i = \{\theta^i,H^{(1)}\},
\end{equation}
where
\begin{equation}
H^{(1)}[\theta] = -\pi \int dx \sum_i \rho_i(x) \theta^i(x) k_i.
\end{equation}
\item The system \eqref{DBB2} has infinitely many hydrodynamic symmetries, given in terms of $N$ functional degrees of freedom $f_k(\xi)$ by
\begin{equation}
w_i(\theta) = \frac{1}{P'_i(\theta)}\sum_{j,k}U(\theta)_{ij}\int^{\theta^k} \frac{(\delta_{jk} + K_{jk}\xi)d\xi}{f_k(\xi)}.
\label{genflow}
\end{equation}
Moreover, every smooth solution to this system is given locally by a solution to the generalized hodograph equation
\begin{equation}
w_i(\theta) = x-v_i(\theta)t
\end{equation}
for some $w_i(\theta)$ of the form \eqref{genflow}.
\item The system \eqref{DBB2} is the image under a generalized reciprocal transformation of the free particle evolution
\begin{equation}
\partial_t \theta^i + k_i \partial_x \theta^i = 0,
\end{equation}
where the transformation in question, mapping the system of $N$ non-interacting flows
\begin{equation}
\partial_{s_n} \theta^i + {q^{(n)}}'_i \partial_{s_1} \theta^i = 0, \quad n=2,\ldots,N
\end{equation}
to the $N$ WNS flows
\begin{equation}
\partial_{\tilde{s}_n} \theta^i + \frac{{Q^{(n)}}'_i}{P'_i} \partial_{\tilde{s}_1} \theta^i = 0, \quad n=2,\ldots,N
\end{equation}
is given by
\begin{equation}
d\tilde{s}_i = r_{in}ds_n
\end{equation}
with
\begin{equation}
r_{in} = [{q'}^{-1}(1+K\theta)q']_{in}.
\end{equation}
\end{enumerate}
\end{corollary}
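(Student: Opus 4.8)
The plan is to recognize the discretized Bethe-Boltzmann system \eqref{DBB1}--\eqref{DBB2} as a particular instance of the abstract system \eqref{curr}--\eqref{vels} studied in Section \ref{sec3}, whereupon all four assertions follow by specializing results already established there. The dictionary is as follows: take the constant vectors $a_i = a^{(0)}_i = 1$ and $b_i = a^{(1)}_i = k_i$, and let $K_{ij} = \frac{\Lambda}{N}\mathcal{K}(k_i,k_j)$ be the discretized Lieb-Liniger kernel. The dressed vectors then become $A_i(\theta) = \sum_j U_{ij}(\theta) = P'_i(\theta)$ and $B_i(\theta) = \sum_j U_{ij}(\theta)k_j = E'_i(\theta)$, so that $v_i = B_i/A_i = E'_i/P'_i$ reproduces \eqref{DBB1}. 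The two hypotheses one must verify before applying Section \ref{sec3} verbatim are that $K$ is symmetric, inherited from the symmetry $\mathcal{K}(k,k') = \mathcal{K}(k',k)$ of the Lieb-Liniger scattering kernel, and that $A_i = P'_i > 0$ on the domain $\mathcal{S}$, which holds since $P'_i$ is the discretized total density of states and is positive in a neighbourhood of $\theta = 0$ where the Neumann series for $U$ converges. With these identifications Proposition \ref{mainprop} shows the system is WNS.

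Given the dictionary, part (1) is immediate from Theorem \ref{mainthm}: the affinors are the dressed, normalized columns of $K$, namely $\chi^{(m)}_j = \frac{\sum_k U_{jk}K_{km}}{P'_j}$, the overall sign relative to $\alpha_{jm}/A_j$ being irrelevant since $\chi$ enters the bracket \eqref{bracket} quadratically. Part (2) follows by specializing Proposition \ref{theham} with $a^{(0)}_i = 1$ and $a^{(1)}_i = k_i$. Using the simplified form of the Hamiltonian recorded in the remark following that proposition, $H^{(1)} = -\tfrac{1}{2}\int dx \sum_i (\theta^i)^2 P'_i k_i$, together with the physical identification $\rho_i = \rho^t_i\theta^i = \tfrac{1}{2\pi}P'_i\theta^i$ of the discretized pseudo-momentum density, one rewrites this as $H^{(1)} = -\pi\int dx\sum_i \rho_i\theta^i k_i$.

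For part (3), Proposition \ref{hydrosym} with $A_i = P'_i$ delivers the family \eqref{genflow} of hydrodynamic symmetries parameterized by the $N$ arbitrary functions $f_k$, and the generalized hodograph representation of every smooth solution is then Tsar{\"e}v's Theorem \ref{hod}. For part (4), the reciprocal transformation of Section \ref{recip} specializes directly: with $a^{(0)}_i = 1$ and $a^{(1)}_i = k_i$ the underlying linear system $\partial_t\theta^i + (a^{(1)}_i/a^{(0)}_i)\partial_x\theta^i = 0$ is precisely the free evolution $\partial_t\theta^i + k_i\partial_x\theta^i = 0$, and the matrix \eqref{recipmat}, built from the charge matrix $q'$ with entries ${q^{(n)}}'_i = k_i^{n-1}$, maps the non-interacting flows to the dressed WNS flows with velocities ${Q^{(n)}}'_i/P'_i$, exactly as verified in the general setting.

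Since the computations underlying each part have already been carried out in Section \ref{sec3}, no genuinely new obstacle arises; the work is entirely one of specialization and bookkeeping. The only points requiring care are the two structural hypotheses flagged above, symmetry of $K$ and positivity of $P'_i$, and, in part (2), tracking the normalizations, namely the factors of $2\pi$ and of the mesh $\Lambda/N$ implicit in the discretizations of $U$, $\rho^t$ and $\rho$, so that the clean physical form of $H^{(1)}$ emerges. I expect this normalization bookkeeping in part (2) to be the most error-prone, though not conceptually difficult, step.
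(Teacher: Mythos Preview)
Your proposal is correct and follows essentially the same approach as the paper: the paper's own proof simply cites Theorem \ref{mainthm}, Proposition \ref{theham}, Proposition \ref{hydrosym} with Theorem \ref{hod}, and Section \ref{recip} for parts (1)--(4) respectively, exactly as you do. If anything, you supply more of the bookkeeping (the dictionary $A_i = P'_i$, the sign-irrelevance of $\chi^{(m)}$, and the $2\pi$ normalization linking $H^{(1)}$ to $\rho_i$) than the paper's terse one-line-per-item proof.
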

\begin{proof}
\begin{enumerate}
\item Follows by Theorem \ref{mainthm}.
\item Follows by Prop. \ref{theham}.
\item Follows by Prop. \ref{hydrosym}, in combination with the Theorem \ref{hod} of Tsar{\"e}v.
\item Follows by the discussion in Section \ref{recip}.
\end{enumerate}
\end{proof}
We note that from a physical viewpoint, the last of these results is the most interesting. In particular, the limit in which the reciprocal transformation $r$ becomes the identity is precisely the free Fermion limit $K=0$. Thus, the discretized Bethe-Boltzmann evolution can be obtained by a geometrical transformation of its free-Fermion counterpart. Moreover, the non-linearity of this transformation is tuned by the model interaction strength. This yields a surprising relationship between the classical geometry of the discretized Bethe-Boltzmann equations and the physics of quasiparticle dressing.

We also note that the above result holds for any $N$ whatsoever. In particular, the discretized Bethe-Boltzmann equation is Hamiltonian and arises as the image of a free-Fermion evolution under a reciprocal transformation for any finite $N$. Meanwhile, upon taking $N,\Lambda \to \infty$ in our choice of discretization, the matrix equations satisfied by $E'_i$ and $P'_i$ tend to the integral equations arising in TBA \cite{YY}, under rather weak assumptions on the continuum Fermi factor $\theta^i$. We believe that this points towards integrability of the underlying integro-differential equation in the sense of the above results.

Before moving on, it is worth specifying the relationship between the conserved charges of the quantum integrable model, the Poisson structure described above, and the dimension $N$ of the discretization. From Proposition \ref{theham} and the subsequent remarks, we deduce the following.

\begin{corollary}
\label{corr2}
\begin{enumerate}
\item The $N$ first integrals
\begin{equation}
\mathcal{Q}^{(0)}[\theta] = \int dx \sum_i \frac{\Lambda}{N} \rho_i(x), \quad \mathcal{Q}^{(n)}[\theta] = \int dx \sum_i \frac{\Lambda}{N} \rho_i(x) q_i^{(n)}, \quad n=1,2,\ldots,N-1,
\end{equation}
define Casimir invariants of the bracket \eqref{discbrack} with functionally independent densities. These integrals correspond to the total particle number and the first $N-1$ higher conserved charges respectively.
\item The $N$ first integrals $\{H^{(0)}, H^{(1)},\ldots,H^{(N-1)}\}$, with
\begin{equation}
H^{(n)}[\theta] = -\pi \int dx\, \sum_i \rho_i(x)\theta^i(x){q^{(n)}}'_i,
\end{equation}
generate the $N$ commuting flows $w^{(0)},w^{(1)},\ldots,w^{(N-1)}$ at each point. These flows correspond to group velocities for the dressed charges.
\item The first $N$ Casimirs and Hamiltonians, as defined above, satisfy the algebra
\begin{equation}
\{\mathcal{Q}^{(m)},\mathcal{Q}^{(n)}\}=0, \quad \{H^{(m)},\mathcal{Q}^{(n)}\}=0, \quad \{H^{(m)},H^{(n)}\}=0, \quad m,n = 0,1,\ldots,N-1.
\end{equation}
\end{enumerate}
\end{corollary}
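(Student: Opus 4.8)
The strategy is to recognize the discretized Bethe-Boltzmann system \eqref{DBB1}--\eqref{DBB2} as a concrete realization of the abstract WNS system of Section \ref{sec3}, and then to translate Propositions \ref{mainprop}, \ref{localwns}, \ref{wnscharges}, \ref{theham} and the remarks following the last of these into the physical variables. The first step is to fix the dictionary. Taking the constant vectors to be $a^{(n)}_i = k_i^n$, the dressed vectors become $A_i = \sum_j U_{ij} = P'_i$ and $A^{(1)}_i = \sum_j U_{ij} k_j = E'_i$, so that $v_i = A^{(1)}_i/A_i = E'_i/P'_i$ matches \eqref{DBB1} and $g_{ii} = A_i^2 = (P'_i)^2$. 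Because the Lieb-Liniger kernel is symmetric, $K_{ij}$ is a symmetric matrix; because the $k_i$ are distinct, the vectors $\{(k_i^n)_i\}_{n=0}^{N-1}$ form a Vandermonde system and are linearly independent; and $P'_i > 0$ on $\mathcal{S}$ by construction. These are exactly the hypotheses under which Propositions \ref{mainprop} and \ref{localwns} apply verbatim, so the system is WNS and carries $N$ independent commuting WNS flows.

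For part (1), I would begin from the abstract Casimirs $\int dx \sum_i \theta^i A_i a^{(n)}_i$ of Proposition \ref{theham}(1) and rewrite them using the discrete identity $\rho_i = \frac{1}{2\pi} P'_i \theta^i$, which gives $\theta^i A_i a^{(n)}_i = 2\pi \rho_i k_i^n$. The stated invariants $\mathcal{Q}^{(n)}$ then differ from these only by the overall constants $\frac{\Lambda}{2\pi N}$ and $\frac{\Lambda}{2\pi N n}$. Since the kernel of the bracket \eqref{discbrack} is a linear subspace, rescaling by a constant preserves the Casimir property, so the $\mathcal{Q}^{(n)}$ are Casimirs; functional independence of their densities is inherited from Proposition \ref{wnscharges}, whose hypothesis holds by the Vandermonde argument. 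The identification with the quantum charges is then the observation that $\mathcal{Q}^{(n)} = \int dx \sum_i \frac{\Lambda}{N}\rho_i q_i^{(n)}$ is precisely the discretization of the local-density charge $\int dx\,\rho^{(n)}(x)$ defined in \eqref{LDA}.

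Part (2) runs in parallel, starting from the simplified generator $-\frac{1}{2}\int dx \sum_i (\theta^i)^2 A_i a^{(n)}_i$ of the $n$th WNS flow established in the remarks after Proposition \ref{theham}. Substituting $A_i = P'_i$ and $(\theta^i)^2 P'_i = 2\pi \rho_i \theta^i$ turns this into $-\pi \int dx \sum_i \rho_i \theta^i a^{(n)}_i$, which is the stated $H^{(n)}$ up to an overall constant and the identification of the abstract vector $a^{(n)}$ with the corresponding bare-charge derivative in the physical labelling. By construction each such generator produces the WNS flow $w^{(n)}_i = A^{(n)}_i/A_i = {Q^{(n+1)}}'_i/P'_i$, which is exactly the group velocity of the dressed charge $\mathbf{Q}^{(n+1)}$; linear independence of the resulting $N$ flows at each point is inherited directly from Proposition \ref{localwns}.

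Finally, part (3) is an immediate specialization: the algebra \eqref{falg} was already established for all $m,n$ in the remarks following Proposition \ref{theham}, and since the $\mathcal{Q}^{(n)}$ and $H^{(n)}$ appearing here are constant multiples of the corresponding abstract quantities, all pairwise brackets continue to vanish for $m,n = 0,1,\ldots,N-1$. The main obstacle is therefore not conceptual but a matter of careful bookkeeping: one must propagate the measure factor $\Lambda/N$ absorbed into $K_{ij}$ together with the factors of $2\pi$ and $\pi$ through every definition, and pin down the index shift between the abstract labelling $a^{(n)}$ and the physical labelling by ${q^{(n)}}'$, so that the stated normalizations emerge exactly. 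None of these constants affect the Poisson-algebraic content, and so the three structural claims follow directly from the results of Section \ref{sec3}.
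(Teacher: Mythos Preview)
Your proposal is correct and follows the same route as the paper, which simply introduces Corollary \ref{corr2} with the sentence ``From Proposition \ref{theham} and the subsequent remarks, we deduce the following'' and provides no further argument. Your write-up is in fact more explicit than the paper's: you spell out the dictionary $a^{(n)}_i = k_i^n$, verify the Vandermonde independence and symmetry hypotheses, and trace the constants $\Lambda/N$, $2\pi$, $\pi$ through the identifications $\rho_i = \tfrac{1}{2\pi}P'_i\theta^i$, none of which the paper bothers to record. Your closing remark about the index shift between the abstract $a^{(n)}$ and the physical ${q^{(n)}}'$ is apposite, since the paper's own labelling is not entirely self-consistent between Corollaries \ref{maincorr} and \ref{corr2}; but as you say, this affects only bookkeeping and not the Poisson-algebraic content.
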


\subsection{The Integro-Differential Limit}
\label{idl}
Let us now return to the full, integro-differential Bethe-Boltzmann equation
\begin{align}
\nonumber \partial_t\theta(x,t,k) + v[\theta](k)\partial_x\theta(x,t,k) &= 0 \\
v[\theta](k) &= \frac{\int dk' \, U(k,k')k'}{\int dk' \, U(k,k')}.
\label{BBC}
\end{align}
Viewing this as an infinite-dimensional system of hydrodynamic type in $(1+1)$D, a convenient choice of target space is $\mathcal{E}=L^2(\mathbb{R})$. By boundedness of the TBA kernel $\mathcal{K}(k,k')$, the dressing operator $U$, defined as in Eq. \eqref{dress}, exists for $\theta$ in some neighbourhood $\mathcal{A}$ of $0 \in \mathcal{E}$, upon which we assume that the system \eqref{BBC} is defined. We also introduce the kernel
\begin{equation}
\label{kern}
\alpha(k,k') = -\int dk''\,U(k,k'')\mathcal{K}(k'',k'),
\end{equation}
which generalizes the $\alpha_{ij}$ above. To us, the main difficulties in obtaining a rigorous demonstration of integrability for the system \eqref{BBC} are the following:
\begin{itemize}
\item The integrability conditions used to derive Tsar{\"e}v's key results \cite{Tsarev85} now hold at the level of functional derivatives, which makes their mathematical interpretation somewhat unclear.
\item The Riemannian geometry associated with brackets of hydrodynamic type is now infinite-dimensional.
\end{itemize}
Indeed, with a few notable exceptions \cite{KuperII,FerapontovMarshall}, the theory of infinite-dimensional systems of hydrodynamic type remains poorly understood \cite{Dorfman}. Nevertheless, upon replacing partial derivatives by functional derivatives, many properties of the integro-differential system \eqref{BBC} are demonstrable in exactly the same way as their finite-dimensional analogues. To us, this suggests that there exists an underlying integrable structure in the integro-differential limit. We begin by amassing some evidence for this claim, in the spirit of our earlier results.
\begin{prop}
\label{ctsWNS}
The characteristic velocities of the system \eqref{BBC} possess the \emph{continuum WNS property}:
\begin{enumerate}
\item \begin{align}\frac{\delta v(k)}{\delta \theta(k)} = 0, \quad k \in \mathbb{R}.\end{align}
\item \begin{align}\frac{\delta}{\delta\theta(k'')}\left[\frac{1}{v(k')-v(k)}\frac{\delta v(k)}{\delta \theta(k')}\right] = \frac{\delta}{\delta\theta(k')}\left[\frac{1}{v(k'')-v(k)}\frac{\delta v(k)}{\delta \theta(k'')}\right], \quad k\neq k'\neq k'' \neq k.\end{align}
\end{enumerate}
\end{prop}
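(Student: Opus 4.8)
The plan is to carry over the argument of Proposition \ref{mainprop} essentially verbatim, replacing the partial derivatives $\partial_j$ by functional derivatives $\delta/\delta\theta(q)$ and the matrix sums $\sum_j$ by $k$-integrals. The pivotal input is the functional derivative of the dressing operator. Differentiating the defining integral equation \eqref{dress} with respect to $\theta(q)$, and using $\delta\theta(k'')/\delta\theta(q)=\delta(k''-q)$, produces an inhomogeneous integral equation for $\delta U/\delta\theta(q)$ whose source term is $-\mathcal{K}(k,q)U(q,k')$ and whose homogeneous part is governed by the very operator inverted by $U$. Applying $U$ on the left (integrating over the intermediate variable) then gives the clean continuum analogue of the relation $\partial_j U_{ik}=\alpha_{ij}U_{jk}$, namely
\begin{equation}
\frac{\delta U(k,k')}{\delta\theta(q)} = \alpha(k,q)\,U(q,k'),
\end{equation}
with $\alpha(k,q)$ the kernel defined in \eqref{kern}.

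Everything else follows by integrating this identity against $1$ and against $k'$. Writing $P'(k)=\int dk'\,U(k,k')$ and $E'(k)=\int dk'\,U(k,k')k'$, so that $v(k)=E'(k)/P'(k)$, the factor $\alpha(k,q)$ pulls outside the $k'$-integral and I obtain
\begin{equation}
\frac{\delta P'(k)}{\delta\theta(q)} = \alpha(k,q)\,P'(q), \qquad \frac{\delta E'(k)}{\delta\theta(q)} = \alpha(k,q)\,E'(q).
\end{equation}
The quotient rule then yields
\begin{equation}
\frac{\delta v(k)}{\delta\theta(q)} = \frac{\alpha(k,q)\left[E'(q)P'(k)-E'(k)P'(q)\right]}{P'(k)^2},
\end{equation}
and setting $q=k$ collapses the bracket to zero, which is exactly statement (1) of the continuum WNS property.

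For statement (2) I would follow the derivation of \eqref{value}. For $q\neq k$ one has $v(q)-v(k)=\left[E'(q)P'(k)-E'(k)P'(q)\right]/\left(P'(q)P'(k)\right)$, so the awkward factor in the numerator of $\delta v(k)/\delta\theta(q)$ cancels against the denominator and
\begin{equation}
\frac{1}{v(q)-v(k)}\frac{\delta v(k)}{\delta\theta(q)} = \frac{\alpha(k,q)P'(q)}{P'(k)} = \frac{1}{P'(k)}\frac{\delta P'(k)}{\delta\theta(q)} = \frac{\delta\log P'(k)}{\delta\theta(q)}.
\end{equation}
Thus the quantity appearing in (2) is itself a functional gradient, and the claimed identity (after relabelling $q\to k'$ and then $q\to k''$) is precisely the symmetry of the second functional derivative $\delta^2\log P'(k)/\delta\theta(k')\delta\theta(k'')$ under exchange of $k'$ and $k''$, the continuum analogue of $\partial_k\partial_j\log A_i=\partial_j\partial_k\log A_i$.

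I expect the main obstacle to be analytic rather than algebraic. The formal left-inversion in the first displayed equation, the interchange of $\delta/\delta\theta$ with the defining $k'$-integrals of $P'$ and $E'$, and — most delicately — the equality of mixed second functional derivatives all presuppose enough smoothness of the map $\theta\mapsto U[\theta]$ on the neighbourhood $\mathcal{A}\subset L^2(\mathbb{R})$. For the Lieb-Liniger kernel this is plausible precisely because $\mathcal{K}(k,k')$ is bounded, so the Neumann series for $U$ converges and defines a Fr\'echet-analytic map, making $\log P'(k)$ twice continuously differentiable and the Clairaut-type symmetry legitimate. The coincident-point value $\alpha(k,k)$ that enters statement (1) is finite for the same reason, in contrast to the logarithmically singular KdV kernel mentioned in the introduction; a fully rigorous treatment would need to confirm that no distributional contribution survives in the limit $q\to k$, and that the functional-derivative identities hold pointwise in $k$ rather than merely in a weak sense.
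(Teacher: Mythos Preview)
Your proposal is correct and follows precisely the route the paper intends: the paper does not spell out a proof of Proposition~\ref{ctsWNS} but indicates that it holds ``in exactly the same way'' as the finite-dimensional Proposition~\ref{mainprop} upon replacing $\partial_j$ by $\delta/\delta\theta(q)$, and your derivation of $\delta U/\delta\theta(q)=\alpha(k,q)U(q,\cdot)$, the resulting formula $\delta v(k)/\delta\theta(q)=\alpha(k,q)[E'(q)P'(k)-E'(k)P'(q)]/P'(k)^2$, and the identification of the semi-Hamiltonian quotient with $\delta\log P'(k)/\delta\theta(q)$ are exactly the continuum transcriptions of the steps in that proof. Your closing paragraph on the analytic prerequisites (Fr\'echet analyticity of $\theta\mapsto U[\theta]$, finiteness of $\alpha(k,k)$, and the Clairaut symmetry of mixed functional derivatives) goes beyond what the paper supplies and correctly isolates the points at which rigour would have to be added.
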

From consideration of Prop. \ref{WNS} and Prop. \ref{wnscharges}, we deduce the following.
\begin{prop}
\begin{enumerate}
\item The system \eqref{BBC} has a countably infinite family of independent, commuting flows. These satisfy the continuum WNS property and are given by
\begin{equation}
w^{(n-1)}(k) = \frac{{Q^{(n)}}'(k)}{P'(k)},\quad n\in \mathbb{N}.
\end{equation}
\item There is a countably infinite family of first integrals of \eqref{BBC}, given by
\begin{equation}
\mathcal{Q}^{(0)}[\theta] = \int dx \int dk \, \rho(x,k), \quad \mathcal{Q}^{(n)}[\theta] = \int dx \int dk \, \rho(x,k)q^{(n)}(k), \quad n\in\mathbb{N},
\end{equation}
which correspond to the total particle number and the higher conserved charges of the Lieb-Liniger model\footnote{This, of course, is a \emph{sine qua non} \cite{BB1,BB2}, but the connection with classical integrability is new.}.
\end{enumerate}
\end{prop}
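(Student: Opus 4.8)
The plan is to transcribe the proofs of Propositions \ref{localwns} and \ref{wnscharges} into the continuum, replacing partial derivatives $\partial_j$ by functional derivatives $\delta/\delta\theta(k')$ and the matrix $\alpha_{ij}$ by the kernel $\alpha(k,k')$ of Eq. \eqref{kern}. The whole computation rests on two structural identities that survive the continuum limit. The first is that differentiating the dressing equation \eqref{dress} gives $\delta U(k,k')/\delta\theta(k'') = \alpha(k,k'')U(k'',k')$, so that for any bare quantity $X$ the dressed field $\hat{X}(k) = \int dk'\, U(k,k')X(k')$ obeys $\delta \hat{X}(k)/\delta\theta(k') = \alpha(k,k')\hat{X}(k')$, the exact analogue of $\partial_j A_i = \alpha_{ij}A_j$. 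The second is the symmetry $\alpha(k,k') = \alpha(k',k)$, which follows from symmetry of the TBA kernel $\mathcal{K}$ together with the resolvent identity $\mathcal{K}(I+\Theta\mathcal{K})^{-1} = (I+\mathcal{K}\Theta)^{-1}\mathcal{K}$ (here $\Theta$ denotes multiplication by $\theta$); this is the continuum form of the ``crucial fact'' $\alpha_{ij}=\alpha_{ji}$. Together with the linear degeneracy of Proposition \ref{ctsWNS}, these identities pin down the continuum Christoffel kernel $\Gamma(k,k') = \frac{\delta v(k)/\delta\theta(k')}{v(k')-v(k)} = \alpha(k,k')\frac{P'(k')}{P'(k)} = \frac{\delta\log P'(k)}{\delta\theta(k')}$ associated with the target-space metric $g(k) = P'(k)^2$.

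For part 1, I would set $w^{(n-1)}(k) = Q'^{(n)}(k)/P'(k)$ with $Q'^{(n)} = U q'^{(n)}$ the dressed charge derivatives and verify directly that each solves the continuum flow equation
\begin{equation}
\frac{\delta w(k)}{\delta\theta(k')} = \Gamma(k,k')\bigl(w(k')-w(k)\bigr), \quad k\neq k',
\end{equation}
the functional analogue of \eqref{flow}. Applying the quotient rule and the first identity,
\begin{equation}
\frac{\delta w^{(n-1)}(k)}{\delta\theta(k')} = \frac{\alpha(k,k')P'(k')}{P'(k)}\left(\frac{Q'^{(n)}(k')}{P'(k')}-\frac{Q'^{(n)}(k)}{P'(k)}\right),
\end{equation}
which is precisely the required relation and reproduces the computation behind Eq. \eqref{value}. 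Linear independence of $\{w^{(n-1)}\}_{n\in\mathbb{N}}$ then follows because the bare charge derivatives $q'^{(n)}(k)=k^{n-1}$ are linearly independent, $U$ is a linear bijection on $\mathcal{A}$, and division by the fixed positive function $P'(k)$ preserves independence; countability is inherited from the discrete index $n$. Commutation of the flows, and with \eqref{BBC} itself, I would obtain from the continuum version of Tsar\"ev's Theorem \ref{flowthm}, since all the $w^{(n-1)}$ (and $v=w^{(1)}$) solve the same flow system.

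For part 2, I would write the charge density as $h = \frac{1}{2\pi}\int dk\,\theta(k)P'(k)q^{(n)}(k)$ and compute, using the symmetry of $\alpha$ and the identity $I+\alpha\Theta = U$ (which also carries over from the finite case), that $\delta h/\delta\theta(k) = \frac{1}{2\pi}P'(k)Q^{(n)}(k)$ with $Q^{(n)} = U q^{(n)}$ the dressed charge. The continuum first-integral condition of Theorem \ref{firstint},
\begin{equation}
\frac{\delta^2 h}{\delta\theta(k)\delta\theta(k')} - \Gamma(k,k')\frac{\delta h}{\delta\theta(k)} - \Gamma(k',k)\frac{\delta h}{\delta\theta(k')} = 0, \quad k\neq k',
\end{equation}
then collapses to $0$ by exactly the cancellation in the proof of Proposition \ref{wnscharges}, after substituting the first identity for both $\delta P'/\delta\theta$ and $\delta Q^{(n)}/\delta\theta$. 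The gradient $g^{-1}\delta h/\delta\theta = Q^{(n)}(k)/(2\pi P'(k))$ recovers a hydrodynamic symmetry of the family in part 1, as required, while identification of $\mathcal{Q}^{(n)}$ with total particle number ($n=0$) and the higher conserved charges is immediate from the local density definitions \eqref{LDA}; the only new content is their status as Tsar\"ev first integrals.

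The main obstacle is not the algebra, which is formally identical to the finite-dimensional case, but the infinite-dimensional analysis needed to make it rigorous: one must justify existence, continuity and symmetry of the second functional derivative $\delta^2 h/\delta\theta\,\delta\theta$, the interchange of $\int dk$ with $\delta/\delta\theta$, and above all the continuum analogue of Tsar\"ev's commutation theorem, whose integrability conditions now hold only at the level of functional derivatives and whose precise meaning is unclear. I would control the resolvent manipulations through boundedness of $\mathcal{K}$ on $L^2(\mathbb{R})$ and a Neumann-series argument on the neighbourhood $\mathcal{A}$, legitimising the identities $\delta U/\delta\theta = \alpha U$ and $I+\alpha\Theta = U$; but a complete treatment of the infinite-dimensional semi-Hamiltonian geometry underlying the commutation and hodograph statements lies beyond these formal computations and is accordingly deferred.
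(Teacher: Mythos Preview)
Your proposal is correct and is essentially the approach the paper takes: the paper gives no explicit proof of this proposition, merely stating that it follows ``from consideration of Prop.~\ref{WNS} and Prop.~\ref{wnscharges}'', and your plan is precisely the natural transcription of those finite-dimensional arguments to the continuum via functional derivatives. Your identification of the two structural identities $\delta U/\delta\theta = \alpha\, U$ and $\alpha(k,k')=\alpha(k',k)$ as the load-bearing facts, together with your candid acknowledgment that the commutation statement rests on a continuum Tsar\"ev theorem whose rigorous status is open, matches the paper's own caveats in Section~\ref{idl}; one small remark is that the gradient $g^{-1}\delta h/\delta\theta = (Uq^{(n)})/(2\pi P')$ lands in the family of part~1 only after the index shift $q^{(n)} = q^{(n+1)\prime}/n$ specific to the Lieb--Liniger charges.
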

We can use these flows to solve for the hydrodynamic symmetries of the model; the resulting vector fields satisfy Tsar{\"e}v's condition \eqref{flow} expressed in terms of functional derivatives.
\begin{prop}
The system \eqref{BBC} has uncountably many hydrodynamic symmetries $w(k)$ solving the system
$$\frac{1}{w(k')-w(k)}\frac{\delta w(k)}{\delta \theta(k')} = \frac{\delta}{\delta \theta(k')} \log P'(k), \quad k \neq k',$$
parameterized by functions $f(\xi,k)$ of two variables, in terms of which
\begin{equation}
w(k) = \frac{1}{P'(k)} \int dk' \, U(k,k') \int dk'' \int^{\theta(k'')} d\xi \, \left[\delta(k'-k'')+\mathcal{K}(k',k'')\xi\right] \frac{1}{f(\xi,k'')}.
\label{inflows}
\end{equation}
\end{prop}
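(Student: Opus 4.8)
The plan is to transcribe the finite-dimensional argument of Proposition \ref{hydrosym} directly into the integro-differential setting, replacing sums by integrals, Kronecker deltas by Dirac deltas, and partial derivatives $\partial_j$ by functional derivatives $\delta/\delta\theta(k')$. Writing the proposed symmetry as $w(k) = C(k)/P'(k)$ with $C(k) = \int dk'\, U(k,k') c(k')$ and
$$c(k') = \int dk'' \int^{\theta(k'')} d\xi \, \left[\delta(k'-k'') + \mathcal{K}(k',k'')\xi\right]\frac{1}{f(\xi,k'')},$$
the goal is to verify that $w(k)$ satisfies Tsar\"ev's condition in the form $\frac{1}{w(k')-w(k)}\frac{\delta w(k)}{\delta\theta(k')} = \frac{\delta}{\delta\theta(k')}\log P'(k)$ for $k \neq k'$, with the two-variable freedom in $f(\xi,k)$ supplying the uncountable parameterization.

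First I would establish the continuum version of the identity $\partial_j U_{ik} = \alpha_{ij}U_{jk}$. Differentiating the dressing equation \eqref{dress} with respect to $\theta(k'')$ and acting on the left with $U$ yields
$$\frac{\delta U(k,k')}{\delta\theta(k'')} = \alpha(k,k'')\,U(k'',k'),$$
with $\alpha$ as in \eqref{kern}; in particular $\frac{\delta P'(k)}{\delta\theta(k'')} = \alpha(k,k'')P'(k'')$, the analogue of $\partial_j A_i = \alpha_{ij}A_j$. Next I would compute $\frac{\delta c(k')}{\delta\theta(k'')}$. Because the upper limit of the inner integral is the field itself, the functional derivative localizes at the dummy label, producing
$$\frac{\delta c(k')}{\delta\theta(k'')} = \left[\delta(k'-k'') + \mathcal{K}(k',k'')\theta(k'')\right]\frac{1}{f(\theta(k''),k'')},$$
the exact continuum analogue of $\partial_k c_j = (\delta_{jk}+K_{jk}\theta^k)/f_k(\theta^k)$ used in Proposition \ref{hydrosym}.

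Combining the two ingredients via the product rule for $C(k)$ gives
$$\frac{\delta C(k)}{\delta\theta(k'')} = \alpha(k,k'')C(k'') + \int dk' \, U(k,k')\left[\delta(k'-k'')+\mathcal{K}(k',k'')\theta(k'')\right]\frac{1}{f(\theta(k''),k'')}.$$
The residual term simplifies using the continuum resolvent identity $U(k,k'') - \alpha(k,k'')\theta(k'') = \delta(k-k'')$, which is just $\int dk' \, U(k,k')[\delta(k'-k'') + \mathcal{K}(k',k'')\theta(k'')] = [U(1+T)](k,k'') = \delta(k-k'')$ written out, so that
$$\frac{\delta C(k)}{\delta\theta(k'')} = \alpha(k,k'')C(k'') + \frac{\delta(k-k'')}{f(\theta(k''),k'')}.$$
For $k \neq k''$ the diagonal term vanishes, leaving $\frac{\delta C(k)}{\delta\theta(k'')} = \alpha(k,k'')C(k'')$. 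From here the symmetry condition follows exactly as in \eqref{value}: applying the quotient rule to $w = C/P'$ and using $\frac{\delta P'(k)}{\delta\theta(k')} = \alpha(k,k')P'(k')$ yields
$$\frac{\delta w(k)}{\delta\theta(k')} = \frac{\alpha(k,k')P'(k')}{P'(k)}\left[w(k')-w(k)\right], \quad k\neq k',$$
whence $\frac{1}{w(k')-w(k)}\frac{\delta w(k)}{\delta\theta(k')} = \alpha(k,k')P'(k')/P'(k) = \frac{\delta}{\delta\theta(k')}\log P'(k)$, as required.

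The hard part will not be any single algebraic step, since each mirrors the finite-dimensional computation, but rather the rigorous status of the functional-derivative calculus underlying them. The crucial diagonal term $\delta(k-k'')/f$, which must be discarded in passing to $k\neq k''$, is distributional, and one must justify that $\frac{\delta C(k)}{\delta\theta(k'')} = \alpha(k,k'')C(k'')$ holds in a genuine pointwise off-diagonal sense on the target space $\mathcal{E} = L^2(\mathbb{R})$. Likewise, the derivative of the field-dependent upper limit of integration, the convergence of the Neumann series defining $U$ on the neighbourhood $\mathcal{A}$, and the interchange of functional differentiation with the $k'$-integral all require care. I would therefore present the argument as a formal verification, valid at the level of functional derivatives in exactly the sense adopted throughout Section \ref{idl}, and flag that a fully rigorous treatment hinges on the infinite-dimensional differential geometry of hydrodynamic-type brackets that, as noted above, remains incompletely developed.
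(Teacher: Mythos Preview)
Your proposal is correct and follows exactly the approach the paper intends: the proposition in Section \ref{idl} is presented without an explicit proof precisely because it is the direct continuum transcription of Proposition \ref{hydrosym}, with sums replaced by integrals and partial derivatives by functional derivatives. Your acknowledgement of the formal nature of the functional-derivative calculus is likewise in keeping with the caveats the paper itself raises at the start of Section \ref{idl}.
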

In turn, these symmetries give rise to an uncountable family of first integrals, as in Prop. \ref{intprop}, and one can check directly that these quantities are conserved. Based on these continuum analogues of our earlier results, we conjecture that the Bethe-Boltzmann equation is integrable in the same sense as its discretizations.
\begin{conj}
\label{conj1}
For two functionals $(I,J)$ of $\theta(x,k)$, define
\begin{equation}
\{I,J\} = \int dx \int dk \, dk' \frac{\delta I}{\delta \theta(x,k)} A(k,k') \frac{\delta J}{\delta \theta(x,k')},
\end{equation}
where the operator-valued kernel $A(k,k')$ is given by
\begin{align}
\label{ctsbracket}
\nonumber A(k,k') = &\frac{\delta(k-k')}{P'(k)^2}\left[\frac{d}{dx} - \int dk''\, \frac{\alpha(k,k'')P'(k'')}{P'(k)}\theta_x(x,k'')\right] + \frac{\alpha(k,k')}{P'(k)P'(k')}(\theta_x(x,k)-\theta_x(x,k'))\\
 - &\frac{\theta_x(x,k)}{P'(k)}\int dk''\,\alpha(k,k'')d_x^{-1}\alpha(k'',k')\frac{1}{P'(k')}\theta_x(x,k').
\end{align}
Then the Bethe-Boltzmann equation Eq. \eqref{BBC} is Hamiltonian with respect to the bracket of functionals \eqref{ctsbracket}.
\end{conj}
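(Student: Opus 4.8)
The plan is to establish the conjecture by verifying, at the level of functional derivatives, the two ingredients that underpin the finite-dimensional result of Section \ref{sec3}: first, that the kernel \eqref{ctsbracket} defines a genuine Poisson bracket, and second, that the Bethe-Boltzmann equation \eqref{BBC} is realized as the Hamiltonian flow of a hydrodynamic functional $H^{(1)}$. Both steps would proceed by transcribing the proofs of Theorem \ref{mainthm} and Proposition \ref{theham}, replacing partial derivatives $\partial_j$ by functional derivatives $\delta/\delta\theta(k)$, the Kronecker symbol $\delta_{ij}$ by $\delta(k-k')$, and the discrete completeness sum $\sum_m$ over the flow label by integration $\int dk''$ over a continuous label.

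For the Poisson property, I would appeal to a continuum version of Ferapontov's criterion, Theorem \ref{NonLocalH}. The relevant geometric data are the diagonal metric $g(k,k') = P'(k)^2\delta(k-k')$, whose square root furnishes Lam\'e coefficient $H(k) = P'(k)$ and rotation coefficient $\beta(k,k') = \alpha(k,k')$, together with the continuum WNS property established in Proposition \ref{ctsWNS}. The key step is to verify the continuum curvature decomposition
\[
R(k,k') = -\int dk''\,\frac{\alpha(k,k'')\alpha(k',k'')}{P'(k)P'(k')} = -\int dk''\,\chi^{(k'')}(k)\chi^{(k'')}(k'),
\]
where the affinors $\chi^{(k'')}(k) = \alpha(k,k'')/P'(k)$, now labelled by a \emph{continuous} parameter $k''$, are each a continuum WNS flow obtained by dressing the kernel slice $\mathcal{K}(\cdot,k'')$, exactly as in the proof of Theorem \ref{mainthm}. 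This is precisely the infinite-dimensional, continuously-indexed form of the hypothesis $R^{ij}_{ij} = \sum_m \epsilon_m \chi^{(m)}_i\chi^{(m)}_j$ of Theorem \ref{NonLocalH}, with all $\epsilon_m = -1$ and the discrete sum promoted to an integral; the term-by-term matching with \eqref{ctsbracket} then follows by inspection, the symmetry $\alpha(k,k')=\alpha(k',k)$ ensuring that the non-local tail is built from genuine squares $\chi^{(k'')}(k)\chi^{(k'')}(k')$.

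For the Hamiltonian realization, I would mirror Proposition \ref{theham}: take the continuum analogue of $H^{(1)}$, whose functional gradient is $\delta H^{(1)}/\delta\theta(k) = -P'(k)C(k)$ with $C(k)=\int dk'\,U(k,k')c(k')$ the dressed charge potential of the momentum flow, and substitute into the continuum version of the equations of motion \eqref{hambrack}. The crucial simplification is that the non-local term telescopes under $d_x^{-1}d_x = \mathrm{id}$ exactly as in the finite case: using $d_x C(k) = \int dk'\,\alpha(k,k')C(k')\theta_x(x,k') + a^{(1)}(k)\theta_x(x,k)$, the $d_x^{-1}$ acting on the dressed flux reconstitutes $C$ and cancels against the local terms, leaving the purely local flux $-v(k)\theta_x(x,k)$. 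The Casimir property of the charges $\mathcal{Q}^{(n)}$ and the algebra \eqref{falg} would follow by the identical telescoping argument applied to their gradients $P'(k)\,{Q^{(n)}}'(k)$, as in the first part of Proposition \ref{theham}.

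The hard part will be giving the continuum Ferapontov theorem a rigorous footing, i.e.\ showing that the verified curvature decomposition actually \emph{forces} the Jacobi identity in infinite dimensions. In finite dimensions Jacobi is a consequence of the Gauss--Codazzi equations and the flatness of the normal connection of the associated submanifold of $\mathbb{R}^{N+N}$; transcribing this argument requires a well-defined notion of the functional Riemann curvature and normal bundle for the metric $g(k,k')$ on the neighbourhood $\mathcal{A}\subset\mathcal{E}$, control of the convergence of the $k''$-integral and of the action of $d_x^{-1}$ on the relevant function spaces, and a verification that the \emph{continuously}-indexed completeness relation is adequate, since one can no longer invoke the finite rank of the curvature. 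These functional-analytic obstacles lie genuinely beyond the reach of the cited finite-dimensional theorems, which is why the statement is advanced only as a conjecture; the antisymmetry of \eqref{ctsbracket} and the Hamiltonian realization of the flow, by contrast, are expected to be formally straightforward.
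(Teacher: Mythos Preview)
The paper does not prove this statement: it is presented as a \emph{conjecture}, and the text immediately following it states explicitly that ``we do not prove this here, owing to the technical difficulties mentioned above''. There is therefore no paper proof against which to compare your attempt. What the paper \emph{does} establish separately, in Proposition~\ref{sensible}, is the Hamiltonian realization and Casimir part of your proposal (your second and third paragraphs), and it does so by exactly the telescoping argument you describe, remarking that ``the proof of this result proceeds just as for Proposition~\ref{theham}''. So that portion of your plan is correct and matches the paper.

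Your first step --- the Poisson property via a continuum Ferapontov criterion --- is precisely the part the paper declines to attempt, and you have correctly diagnosed why: the finite-dimensional Theorem~\ref{NonLocalH} relies on the Gauss--Codazzi machinery for a submanifold of a finite-dimensional pseudo-Euclidean space, and promoting the flow label to a continuous parameter $k''$ with an integral completeness relation takes you outside its hypotheses. Your honest final paragraph acknowledges this, and that acknowledgment is the right conclusion: the curvature-decomposition identity you write down is formally verifiable, but it does not by itself \emph{imply} Jacobi in the infinite-dimensional setting without the additional functional-analytic work you list. In short, your proposal is not a proof but a faithful outline of the heuristics behind the conjecture, together with a correct identification of the genuine gap --- which is consistent with the paper's own stance.
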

While we do not prove this here, owing to the technical difficulties mentioned above, this bracket seems to be the most natural continuum limit of the expression \eqref{discbrack} obtained above. For example, for a given functional $H[\theta]$, we have
\begin{align}
\nonumber \{\theta(x,k),H\} = &\frac{1}{P'(k)^2}\int dk' \left[\frac{\delta^2 H}{\delta \theta(k)\delta \theta(k')} -\frac{\alpha(k,k')P'(k')}{P'(k)} \frac{\delta H}{\delta \theta(k)} - \frac{\alpha(k,k')P'(k)}{P'(k')}\frac{\delta H}{\delta \theta(k')}\right] \theta_x(x,k') \\
&+ \frac{1}{P'(k)}\theta_x(x,k)\int dk' \left[\frac{\alpha(k,k')}{P(k')} \frac{\delta H}{\delta \theta(k')} - \int dk'' \ \alpha(k,k'') d_x^{-1} \alpha(k'',k')\theta_x(x,k') \frac{1}{P'(k')}\frac{\delta H}{\delta \theta(k')}\right],
\end{align}
and can show the following.
\begin{prop}
\label{sensible}
\begin{enumerate} 
\item The first integrals
\begin{equation}
\mathcal{Q}^{(0)}[\theta] = \int dx \int dk \, \rho(x,k), \quad \mathcal{Q}^{(n)}[\theta] = \int dx \int dk \, \rho(x,k)q^{(n)}(k), \quad n\in\mathbb{N},
\end{equation}
lie in the kernel of the bracket \eqref{ctsbracket}.
\item The first integrals $\{H^{(0)}, H^{(1)},\ldots\}$, given by
\begin{equation}
H^{(n)}[\theta] = - \pi \int dx \, \int dk \, \rho(x,k)\theta(x,k){q^{(n)}}'(k),
\end{equation}
generate the commuting flows $w^{(0)}, w^{(1)},\ldots$ at each point.
\item The Casimirs $\{\mathcal{Q}^{(n)}\}_{n=0}^{\infty}$ and Hamiltonians $\{H^{(n)}\}_{n=0}^{\infty}$ satisfy the algebra
\begin{equation}
\{\mathcal{Q}^{(m)},\mathcal{Q}^{(n)}\}=0, \quad \{H^{(m)},\mathcal{Q}^{(n)}\}=0, \quad \{H^{(m)},H^{(n)}\}=0, \quad m,n = 0,1,\ldots
\end{equation}
\end{enumerate}
\end{prop}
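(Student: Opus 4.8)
The plan is to transcribe the finite-dimensional arguments of Proposition \ref{theham} (and the remarks following it) into the integro-differential setting, under the dictionary $\partial_j \mapsto \delta/\delta\theta(k')$, $\sum_i \mapsto \int dk$, $\delta_{ij}\mapsto \delta(k-k')$, $\alpha_{ij}\mapsto\alpha(k,k')$, $A_i\mapsto P'(k)$, and $A_i^{(n)}\mapsto$ the appropriate dressed charge. The whole computation is algebraic and, crucially, requires neither the Jacobi identity nor Conjecture \ref{conj1}: it uses only the explicit bilinear form $\{\theta(x,k),H\}$ displayed just before the proposition, the Leibniz rule $\{I,J\}=\int dx\int dk\,(\delta I/\delta\theta)\{\theta,J\}$, and antisymmetry on the compactified line. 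As a preliminary, I would record the basic functional-derivative identities: differentiating \eqref{dress} gives $\delta U(k,m)/\delta\theta(k')=\alpha(k,k')U(k',m)$, whence $\delta P'(k)/\delta\theta(k')=\alpha(k,k')P'(k')$ and, for any dressed charge $D(k)=\int dm\,U(k,m)d(m)$, $\delta D(k)/\delta\theta(k')=\alpha(k,k')D(k')$. The resolvent identity $U\mathcal{K}=\mathcal{K}(1+\theta\mathcal{K})^{-1}$, with $\mathcal{K}$ symmetric, yields $\alpha(k,k')=\alpha(k',k)$, the continuum analogue of the crucial fact $\alpha_{ij}=\alpha_{ji}$; and chaining the dressing rule through $\partial_x$ gives the key relation $\partial_x D(k)=\int dk'\,\alpha(k,k')D(k')\theta_x(x,k')$, the continuum form of $d_x[A_m^{(n)}]=\sum_j\alpha_{mj}A_j^{(n)}\theta_x^j$.

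For part (1), I would compute $\delta\mathcal{Q}^{(n)}/\delta\theta(x,k')=\tfrac{1}{2\pi}P'(k')\,q^{(n),\mathrm{dr}}(k')$, with $q^{(n),\mathrm{dr}}(k)=\int dm\,U(k,m)q^{(n)}(m)$, mirroring $\partial_j\rho^{(n)}=A_jA_j^{(n)}$; this uses symmetry of $\alpha$ and the identity $U(1+\mathcal{K}\theta)=1$ exactly as in the proof of Proposition \ref{wnscharges}. Since the preceding proposition already identifies the $\mathcal{Q}^{(n)}$ as first integrals, their off-diagonal second functional derivatives obey the continuum first-integral condition and drop out of the $k'\neq k$ part of $\{\theta(x,k),\mathcal{Q}^{(n)}\}$, reducing it to the diagonal form analogous to \eqref{hambrack}. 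The surviving local piece, proportional to $\frac{1}{P'(k)}\int dk'\,\alpha(k,k')q^{(n),\mathrm{dr}}(k')\,\theta_x(x,k)$, is then cancelled verbatim by the non-local piece, because $d_x^{-1}$ applied to $\alpha(k'',k')q^{(n),\mathrm{dr}}(k')\theta_x(x,k')$ returns $q^{(n),\mathrm{dr}}(k'')$ by the chain relation above. This is the continuum transcription of $\sum_j\alpha_{ij}A_j^{(n)}-\sum_m\alpha_{im}d_x^{-1}d_x[A_m^{(n)}]=0$ in the proof of Proposition \ref{theham}(1).

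Part (2) proceeds identically: choosing the functional degree of freedom $1/f\propto{q^{(n)}}'$ in the continuum analogue of Proposition \ref{intprop} produces a first-integral density with $\delta H^{(n)}/\delta\theta(k)=-P'(k)C^{(n)}(k)$, where $C^{(n)}$ dresses to ${Q^{(n)}}'$, and now the diagonal contact term (the analogue of the leftover $a_i^{(1)}$ term in the finite proof) survives the local--nonlocal cancellation and reassembles into $\{\theta(x,k),H^{(n)}\}=-w^{(n)}(k)\theta_x(x,k)$, the $n$th commuting flow. Part (3) then follows formally from (1), (2), and the Leibniz rule. The vanishing of $\{\mathcal{Q}^{(m)},\mathcal{Q}^{(n)}\}$ and $\{H^{(m)},\mathcal{Q}^{(n)}\}$ is immediate once (1) places every $\mathcal{Q}^{(n)}$ in the kernel. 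For $\{H^{(m)},H^{(n)}\}$ I would write $\{H^{(m)},H^{(n)}\}=\int dx\int dk\,\frac{\delta H^{(m)}}{\delta\theta(x,k)}\{\theta(x,k),H^{(n)}\}=-\int dx\int dk\,\frac{\delta H^{(m)}}{\delta\theta}\,w^{(n)}(k)\,\theta_x$, which is the rate of change of $H^{(m)}$ along the $w^{(n)}$-flow; since each $H^{(m)}$ is a first integral of \eqref{BBC} whose density satisfies the functional first-integral condition, and all flows $w^{(n)}$ share the same rotation kernel $\frac{\delta}{\delta\theta(k')}\log P'(k)$, the functional form of Theorem \ref{firstint} guarantees $H^{(m)}$ is conserved along every such flow, so the integral vanishes on the compactified line.

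The main obstacle is not the algebra, which transfers line-for-line, but the analysis: all of the above manipulations are formal in $\mathcal{E}=L^2(\mathbb{R})$, and the delicate points are (i) the meaning and invertibility of $d_x^{-1}$ together with the associated boundary terms, which I would sidestep by working on the compactified line as in Proposition \ref{theham}, and (ii) justifying the interchange of the $k$- and $k'$-integrations with $\delta/\delta\theta$ and $d_x^{-1}$, as well as the convergence of the Neumann series for $U$ on $\mathcal{A}$. These are exactly the infinite-dimensional difficulties flagged before Proposition \ref{ctsWNS}, and they are the reason this proposition, like its neighbours, is offered as evidence for rather than a proof of Conjecture \ref{conj1}.
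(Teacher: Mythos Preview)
Your proposal is correct and is precisely the approach the paper indicates: the paper's entire proof consists of the remark ``The proof of this result proceeds just as for Proposition \ref{theham},'' and your write-up is a faithful continuum transcription of that computation, including the key identities $\delta P'(k)/\delta\theta(k')=\alpha(k,k')P'(k')$, the symmetry $\alpha(k,k')=\alpha(k',k)$, and the cancellation $\int dk'\,\alpha(k,k')D(k')-\int dk''\,\alpha(k,k'')d_x^{-1}\partial_x D(k'')=0$ that kills the Casimirs. Your handling of part (3) via the shared rotation kernel $\delta\log P'(k)/\delta\theta(k')$ and the functional first-integral condition is also the natural continuum reading of the remarks following Proposition \ref{theham}, and your caveats about $d_x^{-1}$ and the formal nature of the functional calculus match the paper's own disclaimers.
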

The proof of this result proceeds just as for Proposition \ref{theham}. It is striking that the algebra $[\mathbf{Q}^{(m)},\mathbf{Q}^{(n)}] = 0$ of conserved charges of the underlying quantum model appears to resurface twice over in the hydrodynamic brackets. We also note that whereas for $N$-point discretizations of the Bethe-Boltzmann theory, only $N$ of the Hamiltonians $H^{(n)}$ generate linearly independent flows, in the integro-differential limit there appear to be \emph{countably} many Hamiltonians in involution, which generate linearly independent flows on the target space. This relatively well-behaved subspace of the uncountably large space of hydrodynamic symmetries points towards an ``enriched'' notion of integrability in the integro-differential limit.

\section{Conclusion}

We have investigated in detail the question of integrability of the Bethe-Boltzmann equation, building upon the telegraphic observations of an earlier work \cite{BB6}. In particular, by proving rigorously various properties of a class of natural discretizations of this equation and extending some of these to the integro-differential limit, we have amassed a large body of evidence for its integrability. We have additionally demonstrated that the ``integrability class'' of the discretized Bethe-Boltzmann equation, as given by a reciprocal transformation, is precisely that of its non-interacting limit. This leads to a simple geometrical picture of quasiparticle dressing in the thermodynamic Bethe ansatz, which is possibly related to other recent developments in this area \cite{DS1}.

Several natural questions arise for further work. The most important of these is to prove that Eq. \eqref{ctsbracket} defines a Poisson bracket. To our knowledge, the closest approach to this bracket in the literature was obtained for the Vlasov equation in Ref. \cite{Vlasov}, and it seems likely that Eq. \eqref{ctsbracket} can be obtained by similar methods. However, to put this on a rigorous footing would require a ``hydrodynamic chain'' formalism for the Bethe-Boltzmann equation (c.f. \cite{Gibbons}), which is presently lacking. Alternatively, it might be possible to demonstrate integrability of the Bethe-Boltzmann equation indirectly, for example, by exhibiting a suitable infinite-dimensional reciprocal transformation from an associated linear system. This is because the results of Section \ref{idl} imply that the Bethe-Boltzmann equation lies in a class of infinite-dimensional linearly degenerate semi-Hamiltonian systems, generalizing those considered in Refs. \cite{Fera,FP}. As mentioned in the introduction, the kinetic equation for the classical hard-rod gas also lies in the class. It is plausible that such systems have appeared in wider physical settings related to kinetic theory, and we hope that the present work can motivate further research on this topic.

Another interesting question concerns the physical interpretation of the Hamiltonian and the hydrodynamic symmetries given above. To our knowledge, these quantities have not appeared in the literature on quantum integrable systems before and seem to imply an uncountable family of ``dynamical'' conservation laws, at least at the level of ballistic transport. It also seems significant that the Hamiltonians $H^{(n)}$, which generate the commuting flows $w^{(n)}(k) = {Q^{(n)}}'(k)/P'(k)$, lie in this non-trivial class of first integrals, unlike the conserved charges $\mathcal{Q}^{(n)}$ arising from quantum integrability. Meanwhile, the appearance of the $\mathcal{Q}^{(n)}$ as Casimirs of the hydrodynamic bracket suggests the intriguing possibility that this bracket could be used as a guide to incorporating new physical effects in the Bethe-Boltzmann equation.

Finally, it would be interesting to develop further the analogy between the integrable structures discovered in the El-Kamchatnov equation for the KdV soliton gas \cite{El3} and those obtained in the present work. For example, there might be non-trivial exact solutions to Eq. \eqref{crucialint} mirroring the exact solutions obtained in Ref. \cite{El3}. In general, we expect that the El-Kamchatnov equation should be integrable in the sense of the above results whenever the classical phase shifts $\Delta x(k,k')$ possess a well-defined thermodynamic Bethe ansatz, which seems to impose a non-trivial constraint on the underlying gas of classically integrable solitons.

Historically speaking, Poisson brackets of hydrodynamic type were discovered in the context of Whitham averaging of so-called ``soliton lattice'' solutions to classically integrable PDEs \cite{Whitham,Dub}. We find it rather satisfying that such Poisson brackets, rooted in the classical theory of integrability, re-emerge in the quasiparticle dynamics of quantum integrable systems.

\section{Acknowledgments}
We thank J. E. Moore for introducing us to the Bethe-Boltzmann equation and R. Vasseur and N. Y. Reshetikhin for encouraging us to pursue the study of its integrability. We are grateful to an anonymous referee for drawing Ref. \cite{Vlasov} to our attention. We would also like to thank the Institut d'{\'E}tudes Scientifique de Carg{\`e}se and the organizers of the workshop ``Exact Methods in Low-Dimensional Statistical Physics'' for their hospitality.
\bibliography{ClassIntBib}
\appendix


\clearpage

\section{Formulation for General Quantum Integrable Models}
In this Appendix, we formulate the results of the main text for arbitrary quantum integrable models with diagonal scattering.
\subsection{Hamiltonian Structure of Discretizations}
\label{GeneralDisc}
For a general integrable model, the main difference compared to the exposition above is that there are now generically $N_s \neq 1$ different species of quasiparticle to consider \cite{BB1,BB2,Takahashi}. In particular, this requires us to introduce a quasiparticle index. Thus, in the discretized case, the dressing equation now reads
\begin{equation}
\sum_{j,b} (\delta^{ab}_{ij} + T^{ab}_{ij}(\theta))Q'_{(b,j)}(\theta) = q'_{(a,i)},
\end{equation}
with $T^{ab}_{ij} = K^{ab}_{ij}(\theta)\theta^{(b,j)}$ for some kernel $K^{ab}_{ij}$ and
\begin{align}
\delta^{ab}_{ij} = \begin{cases} 1 & a=b,\ i=j \\ 0 & \rm{otherwise}\end{cases}.
\end{align}
The quasiparticle velocites now depend on the quasiparticle species, via
\begin{equation}
v_{(a,i)}(\theta) = E'_{(a,i)}(\theta)/P'_{(a,i)}(\theta), \quad a=1,2,\ldots,N_s,
\label{DBB3}
\end{equation}
and the discretized Bethe-Boltzmann equation becomes a system of $NN_s$ equations
\begin{equation}
\partial_t\theta^{(a,i)} + v_{(a,i)}(\theta)\partial_x\theta^{(a,i)} = 0, \quad i = 1,2,...,N,\ a=1,2,\ldots,N_s.
\label{DBB4}
\end{equation}
It is clear that structurally, these equations have precisely the form considered in Section \ref{sec3}, the only difference being that the index $i$ is replaced by pairs of indices $(a,i)$. For example, a natural choice of state-space metric is given by
\begin{equation}
g_{(a,i)(a,i)} = (P'_{(a,i)})^2,
\end{equation}
and the semi-Hamiltonian property arises from the identity
\begin{equation}
\Gamma^{(a,i)}_{(a,i)(b,j)} = \frac{\partial_{(b,j)}v_{(a,i)}}{v_{(b,j)}-v_{(a,i)}} = \partial_{(b,j)}\log{P'_{(a,i)}}.\end{equation}
However, there is now the difficulty that invertibility of the matrix of derivatives of bare charges, $q'$, must be considered on a case-by-case basis. A similar issue arises for the derivatives of dressed quasiparticle momenta, $P'_{(a,i)}$. While these are essentially artifacts of the discretization and make little difference in the continuum limit, they lead to a more restrictive version of Corollary \ref{maincorr} above.
\begin{corollary}
Suppose that the derivatives of dressed quasiparticle momenta with respect to rapidity, $P'_{(a,i)}(\lambda)$, are positive in some neighbourhood $\mathcal{S}$ of $\theta=0$. Then we have the following.
\begin{enumerate}
\item For any finite discretization length $N \geq 2$, the discretized Bethe-Boltzmann system, \eqref{DBB3} and \eqref{DBB4}, is Hamiltonian on $\mathcal{S}$, with respect to the non-local Poisson bracket
\begin{equation}
A^{(a,i)(b,j)} = g^{(a,i)(a,i)}\delta^{ab}_{ij}\frac{d}{dx} - g^{(a,i)(a,i)}\Gamma^{(b,j)}_{(a,i)(d,k)}\theta^{(d,k)}_x - \sum_{c=1}^{N_{s}}\sum_{m=1}^N\chi^{(c,m)}_{(a,i)}\theta^{(a,i)}_x d_x^{-1}\chi^{(c,m)}_{(b,j)}\theta^{(b,j)}_x,
\end{equation}
where the affinors $\{\chi^{(c,m)}_{(a,i)}\}$ are obtained by dressing the columns of the kernel, namely
\begin{equation}
\chi^{(c,m)}_{(a,i)}(\theta) = \frac{\sum_{b,k} U^{ab}_{ik}(\theta)K^{bc}_{km}}{P'_{(a,i)}(\theta)}.
\end{equation}
\item With respect to this bracket, the system \eqref{DBB4} can be written in Hamiltonian form as
\begin{equation}
\partial_t \theta^{(a,i)} = \{\theta^{(a,i)},H^{(1)}\},
\end{equation}
where
\begin{equation}
H^{(1)}[\theta] = -\frac{1}{2}\sum_{a,i} \int dx \, (\theta^{(a,i)})^2P'_{(a,i)}(\theta)e'_{(a,i)}
\end{equation}
\item The system \eqref{DBB4} has infinitely many hydrodynamic symmetries, given in terms of $NN_s$ functional degrees of freedom $f_{(c,k)}(\xi)$ by
\begin{equation}
w_{(a,i)}(\theta) = \frac{1}{P'_{(a,i)}(\theta)}\sum_{b,j,c,k}U^{ab}_{ij}(\theta)\int^{\theta^{(c,k)}} \frac{(\delta^{bc}_{jk} + K^{bc}_{jk}\xi)d\xi}{f_{(c,k)}(\xi)}.
\label{genflow2}
\end{equation}
Moreover, every smooth solution to this system is given locally by a solution to the generalized hodograph equations
\begin{equation}
w_{(a,i)}(\theta) = x-v_{(a,i)}(\theta)t,
\end{equation}
for some $w_{(a,i)}(\theta)$ of the form \eqref{genflow2}.
\end{enumerate}
\end{corollary}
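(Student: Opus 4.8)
The plan is to reduce the entire statement to the single-species results of Section \ref{sec3} by a relabeling of indices. I would introduce a composite multi-index $I=(a,i)$ ranging over the $NN_s$ pairs, and set $\theta^I=\theta^{(a,i)}$, $K_{IJ}=K^{ab}_{ij}$, $v_I=v_{(a,i)}$, and so on. The first task is to check that under this relabeling the system \eqref{DBB4} acquires exactly the form of Eqs. \eqref{curr}--\eqref{vels}. Concretely, I would verify that $K_{IJ}$ is a symmetric $NN_s$-by-$NN_s$ matrix --- this requires the kernel to satisfy $K^{ab}_{ij}=K^{ba}_{ji}$, which holds because the two-particle scattering data is symmetric under simultaneous exchange of rapidities and species --- and that the velocities \eqref{DBB3} take the dressed-ratio form \eqref{vels} with $\mathbf{a}$ the all-ones vector (so that $A_I=P'_{(a,i)}$) and $\mathbf{b}$ the vector of bare energy derivatives (so that $B_I=E'_{(a,i)}$). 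The standing hypothesis $P'_{(a,i)}>0$ on $\mathcal{S}$ is then precisely the positivity condition $A_I>0$ required in Proposition \ref{mainprop}.

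Once this identification is in place, the three parts follow by direct appeal to the results of Section \ref{sec3}, read with $i\mapsto(a,i)$. For part 1, Proposition \ref{mainprop} shows the system is WNS with metric $g_{II}=(P'_{(a,i)})^2$ and Christoffel symbols $\Gamma^I_{IJ}=\partial_J\log P'_{(a,i)}$; Theorem \ref{mainthm} then produces the non-local bracket, whose affinors are obtained by dressing the columns of $K$. In the composite notation the columns are indexed by $(c,m)$, and dressing gives $\chi^{(c,m)}_{(a,i)}=\sum_{b,k}U^{ab}_{ik}K^{bc}_{km}/P'_{(a,i)}$, as claimed. For part 2, Proposition \ref{theham} identifies the generator of the velocity flow; taking the bare charge vector $\mathbf{a}^{(1)}$ to be the energy derivatives $e'_{(a,i)}$ and using the simplified form of $H^{(1)}$ recorded after Proposition \ref{theham} yields $H^{(1)}=-\tfrac{1}{2}\sum_{a,i}\int dx\,(\theta^{(a,i)})^2 P'_{(a,i)}e'_{(a,i)}$. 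For part 3, Proposition \ref{hydrosym}, whose proof verifies Tsar\"ev's condition \eqref{flow} directly for an arbitrary charge $c_J(\theta)$ built from $NN_s$ free functions $f_{(c,k)}$, gives the stated family of symmetries \eqref{genflow2}; the hodograph representation is then immediate from Theorem \ref{hod}, which applies to any semi-Hamiltonian system.

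The only genuinely new point, and the step I would treat most carefully, concerns the constant charge matrix $q'$ whose columns are the bare charge derivatives. In the single-species case this matrix was automatically invertible --- its columns $k_i^{n-1}$ form a Vandermonde matrix for distinct nodes --- and it was this invertibility that powered the reciprocal-transformation picture of part 4 of Corollary \ref{maincorr}. For $N_s>1$ there is no such guarantee, and invertibility must be examined model by model; this is the reason the reciprocal-transformation statement is absent from the present corollary and why the hypotheses are stated in the more restrictive form. I would therefore emphasize that parts 1--3 require only the WNS property, the bracket construction of Theorem \ref{mainthm}, and the symmetry relation $\alpha_{IJ}=\alpha_{JI}$ --- all of which survive the relabeling without reference to $q'$ --- so that no invertibility assumption is needed for them. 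Confirming that $\alpha_{IJ}=\alpha_{JI}$, which was the ``crucial fact'' in the proof of Proposition \ref{wnscharges} and follows from the symmetry of $K_{IJ}$ together with the series expansion of $U$, is the last identity I would check before invoking the machinery of Section \ref{sec3}.
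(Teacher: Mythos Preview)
Your approach is essentially the same as the paper's: the appendix states explicitly that the multi-species equations ``have precisely the form considered in Section~\ref{sec3}, the only difference being that the index $i$ is replaced by pairs of indices $(a,i)$,'' and then notes that invertibility of $q'$ is no longer automatic, which is exactly why the reciprocal-transformation part is omitted --- precisely the point you single out. One small correction: for a general model the vector $\mathbf{a}$ in the identification with Eqs.~\eqref{curr}--\eqref{vels} should be the bare momentum derivatives $p'_{(a,i)}$ rather than the all-ones vector (the latter is special to Lieb--Liniger, where $p(k)=k$); this does not affect any of the structural arguments you invoke, since Section~\ref{sec3} allows an arbitrary $\mathbf{a}$ with $A_I>0$.
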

\subsection{The Integro-Differential Limit}
\label{GeneralCts}
Allowing for $N_s$ quasiparticle species, and noting that the rapidities $\lambda$ no longer necessarily coincide with the quasimomentum $k$, we obtain the following system of $N_s$ equations:
\begin{align}
\nonumber \partial_t\theta^a(x,t,\lambda) + v_a[\theta](\lambda)\partial_x\theta^a(x,t,\lambda) &= 0 \\
v_a[\theta](\lambda) &= \frac{\sum_b \int d\lambda'\,U_{ab}(\lambda,\lambda')e'_b(\lambda')}{\sum_b \int d\lambda'\,U_{ab}(\lambda,\lambda')p'_b(\lambda')},
\label{BBC2}
\end{align}
where the integral kernel $U_{ab}$ satisfies
\begin{equation}
U_{ab}(\lambda,\lambda') + \sum_c \int d\lambda'' K_{ac}(\lambda,\lambda'')\theta^c(\lambda'')U_{cb}(\lambda'',\lambda') = \delta(\lambda-\lambda')\delta_{ab},
\end{equation}
and it is useful to define the kernel $\alpha$ by
\begin{equation}
\alpha_{ab}(\lambda,\lambda') = -\sum_c \int d\lambda''\, U_{ac}(\lambda,\lambda'')\mathcal{K}_{cb}(\lambda'',\lambda').
\end{equation}
It is trivial to extend the results above.
\begin{prop}
The characteristic velocities of the system \eqref{BBC2} possess the continuum analogue of the WNS property:
\begin{enumerate}
\item $$\frac{\delta v_a(\lambda)}{\delta \theta^a(\lambda)} = 0, \quad \lambda \in \mathbb{R},\ a=1,2,\ldots,N_s.$$
\item $$\frac{\delta}{\delta\theta^c(\lambda'')}\left[\frac{1}{v_b(\lambda')-v_a(\lambda)}\frac{\delta v_a(\lambda)}{\delta \theta^b(\lambda')}\right] = \frac{\delta}{\delta\theta^b(\lambda')}\left[\frac{1}{v_c(\lambda'')-v_a(\lambda)}\frac{\delta v_a(\lambda)}{\delta \theta^c(\lambda'')}\right]$$ for all $\lambda\neq \lambda'\neq \lambda'' \neq \lambda$ and $a,\ b,\ c$.
\end{enumerate}
\end{prop}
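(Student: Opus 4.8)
The plan is to transcribe the finite-dimensional argument of Proposition \ref{mainprop} almost verbatim, replacing partial derivatives $\partial_j$ by functional derivatives $\delta/\delta\theta^b(\lambda')$, matrix indices by species–rapidity pairs $(a,\lambda)$, and matrix products by integral kernels, exactly as was done for the single-species case in Proposition \ref{ctsWNS}. The species index $a$ enters only as a passive label, so no structurally new phenomenon appears; the content is a bookkeeping extension of the one-species computation.

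First I would differentiate the dressing equation for \eqref{BBC2}. Writing $U=(1+T)^{-1}$ with $T_{ac}(\lambda,\lambda'')=K_{ac}(\lambda,\lambda'')\theta^c(\lambda'')$, the identity $\delta U=-U\,(\delta T)\,U$ yields
$$\frac{\delta U_{ab}(\lambda,\lambda')}{\delta\theta^d(\mu)} = \alpha_{ad}(\lambda,\mu)\,U_{db}(\mu,\lambda'),$$
the continuum analogue of $\partial_j U_{ik}=\alpha_{ij}U_{jk}$. Contracting against $p'_b$ and $e'_b$ and abbreviating $P'_a(\lambda)=\sum_b\int d\lambda'\,U_{ab}(\lambda,\lambda')p'_b(\lambda')$ and $E'_a(\lambda)=\sum_b\int d\lambda'\,U_{ab}(\lambda,\lambda')e'_b(\lambda')$, this gives
$$\frac{\delta P'_a(\lambda)}{\delta\theta^d(\mu)} = \alpha_{ad}(\lambda,\mu)P'_d(\mu), \qquad \frac{\delta E'_a(\lambda)}{\delta\theta^d(\mu)} = \alpha_{ad}(\lambda,\mu)E'_d(\mu),$$
mirroring $\partial_j A_i=\alpha_{ij}A_j$. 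The quotient rule applied to $v_a=E'_a/P'_a$ then produces
$$\frac{\delta v_a(\lambda)}{\delta\theta^d(\mu)} = \frac{\alpha_{ad}(\lambda,\mu)}{P'_a(\lambda)^2}\left[E'_d(\mu)P'_a(\lambda)-E'_a(\lambda)P'_d(\mu)\right],$$
and setting $d=a$, $\mu=\lambda$ collapses the bracket to zero, which establishes part 1.

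For part 2, I would record the off-diagonal velocity difference $v_b(\lambda')-v_a(\lambda)=[E'_b(\lambda')P'_a(\lambda)-E'_a(\lambda)P'_b(\lambda')]/[P'_a(\lambda)P'_b(\lambda')]$ and combine it with the preceding display to obtain the key identity
$$\frac{1}{v_b(\lambda')-v_a(\lambda)}\frac{\delta v_a(\lambda)}{\delta\theta^b(\lambda')} = \frac{\alpha_{ab}(\lambda,\lambda')P'_b(\lambda')}{P'_a(\lambda)} = \frac{\delta}{\delta\theta^b(\lambda')}\log P'_a(\lambda),$$
valid for $\lambda\neq\lambda'$ and all $a,b$ throughout the neighbourhood $\mathcal{A}$. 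The asserted integrability condition then reduces to the equality of the two iterated functional derivatives of $\log P'_a(\lambda)$ in the orders $(\lambda',\lambda'')$ and $(\lambda'',\lambda')$, i.e.\ to the symmetry of the second functional derivative; the restriction $\lambda\neq\lambda'\neq\lambda''\neq\lambda$ is precisely what removes the distributional contact terms at coincident rapidities that would otherwise obstruct this symmetry.

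The step I expect to be the main obstacle is this last one: justifying the interchange of mixed second functional derivatives rigorously rather than formally. In finite dimensions this is elementary Clairaut symmetry, but here one must confirm that $\log P'_a(\lambda)$ is twice G\^ateaux-differentiable on $\mathcal{A}$ with a continuous second derivative, so that a Schwarz-type theorem applies, and that the off-diagonal identity above—derived only for $\lambda\neq\lambda'$—may be differentiated again without generating boundary corrections at coincident arguments. I would aim to control these points using boundedness of $\mathcal{K}$ together with the convergent Neumann series for $U$ on $\mathcal{A}$ already assumed for \eqref{BBC2}; the residual analytic subtleties are of the same character as those flagged for Proposition \ref{ctsWNS} and are treated here only at the formal level.
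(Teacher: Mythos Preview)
Your proposal is correct and follows exactly the approach the paper intends: the paper gives no explicit proof here, merely remarking that ``it is trivial to extend the results above,'' i.e.\ to carry over the computation of Proposition~\ref{mainprop} (via Proposition~\ref{ctsWNS}) with functional derivatives and an additional species index. Your derivation of $\frac{\delta U_{ab}}{\delta\theta^d(\mu)}=\alpha_{ad}(\lambda,\mu)U_{db}(\mu,\lambda')$, the quotient-rule formula for $\delta v_a/\delta\theta^d$, and the reduction of part~2 to symmetry of $\frac{\delta^2}{\delta\theta^b(\lambda')\delta\theta^c(\lambda'')}\log P'_a(\lambda)$ are precisely the intended steps, and your caveat about the formal nature of the Schwarz-type interchange matches the paper's own acknowledgment of these analytic subtleties.
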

The charge densities $q^{(n)}(\lambda)$ now take on a quasiparticle index and give rise to dressed charge densities $Q_n(\lambda)$, whose derivatives with respect to $\lambda$ satisfy
\begin{equation}
{Q^{(n)}_a}'(\lambda) = \sum_b \int d\lambda' \, U_{ab}(\lambda,\lambda'){q^{(n)}_b}'(\lambda').
\end{equation}
\begin{prop}
\begin{enumerate}
\item The system \eqref{BBC2} has countably many WNS flows in involution, given by
\begin{equation}
w_a^{(n-1)}(\lambda) = \frac{{Q^{(n)}_a}'(\lambda)}{P_a'(\lambda)}, \quad n\in \mathbb{N}, \, a =1,2,\ldots,N_s.
\end{equation}
\item The system \eqref{BBC2} has a countable family of first integrals, with components
\begin{equation}
\mathcal{Q}^{(0)}_a = \int dx \int d\lambda \, \rho_a(x,\lambda), \quad \mathcal{Q}_a^{(n)} = \int dx \int d\lambda \, \rho_a(x,\lambda)q_a^{(n)}(\lambda), \quad n\in\mathbb{N}.
\end{equation}
\end{enumerate}
\end{prop}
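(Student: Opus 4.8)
The plan is to transcribe the finite-dimensional arguments of Propositions \ref{localwns} and \ref{wnscharges} into the integro-differential setting, with partial derivatives $\partial_j$ replaced throughout by functional derivatives $\delta/\delta\theta^c(\mu)$ and matrix products replaced by kernel compositions. Everything reduces to two structural identities for the multi-species dressing operator, which I would write as $U=(1+\mathcal{K}\Theta)^{-1}$ with $\Theta$ the diagonal multiplication operator $(\Theta f)^a(\lambda)=\theta^a(\lambda)f^a(\lambda)$. First, differentiating the dressing equation and using $\delta U=-U(\delta U^{-1})U$ gives $\delta U_{ab}(\lambda,\lambda')/\delta\theta^c(\mu)=\alpha_{ac}(\lambda,\mu)U_{cb}(\mu,\lambda')$, with $\alpha$ as in \eqref{kern}; applying this to the dressed charge and momentum derivatives yields the exact continuum analogue of $\partial_jA_i=\alpha_{ij}A_j$, namely $\delta {Q_a^{(n)}}'(\lambda)/\delta\theta^c(\mu)=\alpha_{ac}(\lambda,\mu){Q_c^{(n)}}'(\mu)$ and the same with ${Q^{(n)}}'\to P'$. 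Second, reading off the identity $U(1+\mathcal{K}\Theta)=1$ componentwise gives $U_{ab}(\lambda,\lambda')=\delta_{ab}\delta(\lambda-\lambda')+\alpha_{ab}(\lambda,\lambda')\theta^b(\lambda')$, while the resolvent identity $\mathcal{K}(1+\Theta\mathcal{K})=(1+\mathcal{K}\Theta)\mathcal{K}$ combined with symmetry of the TBA kernel $\mathcal{K}_{ab}(\lambda,\lambda')=\mathcal{K}_{ba}(\lambda',\lambda)$ yields the crucial symmetry $\alpha_{ac}(\lambda,\mu)=\alpha_{ca}(\mu,\lambda)$.

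For part (i), writing $w_a^{(n-1)}={Q_a^{(n)}}'/P_a'$, the quotient rule together with the first identity gives, for $(a,\lambda)\neq(b,\lambda')$,
\[
\frac{1}{w_b^{(n-1)}(\lambda')-w_a^{(n-1)}(\lambda)}\frac{\delta w_a^{(n-1)}(\lambda)}{\delta\theta^b(\lambda')}=\frac{\alpha_{ab}(\lambda,\lambda')\,P_b'(\lambda')}{P_a'(\lambda)}=\frac{\delta}{\delta\theta^b(\lambda')}\log P_a'(\lambda),
\]
exactly as in \eqref{value} and manifestly independent of $n$. Thus every $w^{(n-1)}$ solves the functional-derivative form of Tsar\"ev's system \eqref{flow} with the same rotation coefficients $\Gamma^{(a,\lambda)}_{(a,\lambda)(b,\lambda')}=\delta\log P_a'(\lambda)/\delta\theta^b(\lambda')$, so by the continuum analogue of Theorem \ref{flowthm} each defines a WNS flow and all of them commute, i.e. are in involution. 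Linear independence over $n\in\mathbb{N}$ follows from the linear independence of the bare charge derivatives ${q_b^{(n)}}'$ and invertibility of the dressing.

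For part (ii), I would show that the \emph{summed} density $h^{(n)}=\sum_a\int d\lambda\,\rho_a(x,\lambda)q_a^{(n)}(\lambda)$, whose components are the $\mathcal{Q}_a^{(n)}$, satisfies the first-integral criterion of Theorem \ref{firstint} written through functional derivatives. Using $\rho_a\propto\theta^aP_a'$, the symmetry of $\alpha$, and the identity $U_{ab}=\delta_{ab}\delta+\alpha_{ab}\theta^b$, the first variation collapses (just as in the line $\partial_j\rho^{(n)}=A_j\sum_iU_{ji}a^{(n)}_i=A_jA_j^{(n)}$ of Proposition \ref{wnscharges}) to $\delta h^{(n)}/\delta\theta^b(\lambda')=P_b'(\lambda'){Q_b^{(n)}}'(\lambda')$. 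Differentiating once more and substituting the expression for $\Gamma$ above, the three terms of Tsar\"ev's equation cancel identically for $(a,\lambda)\neq(b,\lambda')$, precisely as in the proof of Proposition \ref{wnscharges}; the relation $w_a^{(n-1)}=g^{(a,\lambda)(a,\lambda)}\,\delta h^{(n)}/\delta\theta^a(\lambda)$ then identifies these integrals as the potentials of the flows of part (i), and their identification with the particle number and the higher TBA charges is immediate from \eqref{LDA}.

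The genuine obstacle is not the algebra, which is formally verbatim the finite-dimensional case, but the analytic justification of running Tsar\"ev's theory in the infinite-dimensional target space $\mathcal{E}=L^2(\mathbb{R})^{N_s}$: one must ensure the Neumann series for $U$ converges on the neighbourhood $\mathcal{A}$ of the origin, that the functional derivatives above exist and commute with the $\lambda$-integrations, and — most seriously — that the consistency and commutativity guaranteed by Theorems \ref{flowthm} and \ref{firstint} survive the replacement of $\partial_j$ by $\delta/\delta\theta^c(\mu)$ at distinct rapidities $\lambda\neq\lambda'\neq\lambda''$. As flagged in Section \ref{idl}, these are exactly the points at which a fully rigorous treatment would require the infinite-dimensional hydrodynamic-chain formalism that is presently lacking; within the formal framework adopted here, the computation goes through as stated.
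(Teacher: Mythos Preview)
Your proposal is correct and follows precisely the route the paper indicates: the paper states this proposition without proof, prefacing it with ``It is trivial to extend the results above,'' and in the single-species case (Section~\ref{idl}) writes only ``From consideration of Prop.~\ref{WNS} and Prop.~\ref{wnscharges}, we deduce the following.'' Your transcription of the finite-dimensional computations of Propositions~\ref{mainprop}, \ref{localwns} and \ref{wnscharges} via functional derivatives --- in particular the identities $\delta U_{ab}/\delta\theta^c(\mu)=\alpha_{ac}U_{cb}$, the symmetry $\alpha_{ac}(\lambda,\mu)=\alpha_{ca}(\mu,\lambda)$, and the collapse $\delta h^{(n)}/\delta\theta^b(\lambda')=P_b'(\lambda'){Q_b^{(n)}}'(\lambda')$ --- is exactly the argument the paper has in mind, and your explicit flagging of the analytic caveats matches the paper's own disclaimers in Section~\ref{idl}.
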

We can again use the latter to solve for hydrodynamic symmetries of the model, obtaining the following:
\begin{prop}
The system \eqref{BBC2} has infinitely many hydrodynamic symmetries $w(k)$ solving the system
\begin{equation}
\frac{1}{w_b(\lambda')-w_a(\lambda)}\frac{\delta w_a(\lambda)}{\delta \theta^b(\lambda')} = \frac{\delta}{\delta \theta^b(\lambda')} \log P'_a(\lambda),
\end{equation}
parameterized by $N_s$ functions $f_a(\xi,\lambda')$ of two variables, in terms of which
\begin{equation}
w_a(\lambda) = \frac{1}{P_a'(\lambda)}\sum_{b,c} \int d\lambda' \, U_{ab}(\lambda,\lambda') \int d\lambda'' \, \int^{\theta^c(\lambda'')} d\xi \, [\delta(\lambda'-\lambda'')\delta_{bc}+\mathcal{K}_{bc}(\lambda',\lambda'')\xi] \frac{1}{f_c(\xi,\lambda'')}.
\label{inflows2}
\end{equation}
\end{prop}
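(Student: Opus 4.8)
The plan is to carry out the multi-species, integro-differential lift of the proof of Proposition \ref{hydrosym}, under the dictionary that replaces the index $i$ by the pair $(a,\lambda)$, sums $\sum_j$ by $\sum_b\int d\lambda'$, Kronecker deltas $\delta_{ij}$ by $\delta_{ab}\delta(\lambda-\lambda')$, and ordinary partial derivatives $\partial_j$ by functional derivatives $\delta/\delta\theta^b(\lambda')$. Every identity used in the finite-dimensional proof has a direct continuum counterpart, so the computation is essentially a transcription; the only genuine work is to establish the two ``dressing'' identities and to track the Dirac-delta contact terms that now stand in for the diagonal Kronecker deltas.

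First I would establish the continuum analogue of $\partial_j U_{ik}=\alpha_{ij}U_{jk}$. Differentiating the integral equation defining $U_{ab}(\lambda,\lambda')$ with respect to $\theta^c(\lambda'')$ and acting from the left with $U=(1+T)^{-1}$ yields
\[
\frac{\delta U_{ab}(\lambda,\lambda')}{\delta\theta^c(\lambda'')} = \alpha_{ac}(\lambda,\lambda'')\,U_{cb}(\lambda'',\lambda'),
\]
with $\alpha_{ab}$ the multi-species kernel defined above (generalizing \eqref{kern}). Applying this to the bare momentum $p'$, which carries no $\theta$-dependence, gives at once $\tfrac{\delta}{\delta\theta^c(\lambda'')}P'_a(\lambda)=\alpha_{ac}(\lambda,\lambda'')P'_c(\lambda'')$ for all $\lambda,\lambda''$, and hence $\tfrac{\delta}{\delta\theta^c(\lambda'')}\log P'_a(\lambda)=\alpha_{ac}(\lambda,\lambda'')P'_c(\lambda'')/P'_a(\lambda)$, which is exactly the right-hand side of the asserted symmetry relation.

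Next I would introduce the generalized charge $c_b(\lambda')$ as the double-integral kernel appearing in \eqref{inflows2}, so that $w_a(\lambda)=C_a(\lambda)/P'_a(\lambda)$ with $C_a(\lambda)=\sum_b\int d\lambda'\,U_{ab}(\lambda,\lambda')c_b(\lambda')$. The fundamental theorem of calculus applied to the upper limit $\theta^c(\lambda'')$ of the inner integral reproduces the continuum version of $\partial_k c_j$, namely $\tfrac{\delta c_b(\lambda')}{\delta\theta^c(\lambda'')}=[\delta_{bc}\delta(\lambda'-\lambda'')+\mathcal{K}_{bc}(\lambda',\lambda'')\theta^c(\lambda'')]/f_c(\theta^c(\lambda''),\lambda'')$. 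Differentiating $C_a$ by the product rule and inserting the two preceding formulas, the term from $\delta U/\delta\theta$ collapses to $\alpha_{ac}(\lambda,\lambda'')C_c(\lambda'')$, while the term from $\delta c/\delta\theta$ simplifies under the resolvent identity $U_{ac}(\lambda,\lambda'')-\alpha_{ac}(\lambda,\lambda'')\theta^c(\lambda'')=\delta_{ac}\delta(\lambda-\lambda'')$ (a rewriting of $U(1+T)=1$). This gives the clean analogue of $\partial_k C_i=\alpha_{ik}C_k+\delta_{ik}/f_i$, with the isolated contact term $\delta_{ac}\delta(\lambda-\lambda'')/f_c(\theta^c(\lambda''),\lambda'')$. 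For $\lambda\neq\lambda''$ the contact term vanishes, so combining $\delta C_a/\delta\theta^c=\alpha_{ac}C_c$ with $\delta P'_a/\delta\theta^c=\alpha_{ac}P'_c$ and factoring out $\alpha_{ac}(\lambda,\lambda'')P'_c(\lambda'')/P'_a(\lambda)$ produces
\[
\frac{\delta w_a(\lambda)}{\delta\theta^c(\lambda'')} = \frac{\alpha_{ac}(\lambda,\lambda'')P'_c(\lambda'')}{P'_a(\lambda)}\bigl(w_c(\lambda'')-w_a(\lambda)\bigr),
\]
which upon dividing by $w_c(\lambda'')-w_a(\lambda)$ is precisely the stated continuum form of Tsar{\"e}v's condition; the $N_s$ arbitrary functions $f_c(\xi,\lambda'')$ supply the claimed functional freedom.

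The main obstacle is not algebraic but interpretive, exactly as flagged in Section \ref{idl}: Tsar{\"e}v's consistency condition is now imposed pointwise in the rapidity variables at the level of functional derivatives, and the $\delta(\lambda-\lambda'')$ contact terms discarded in the last step must be read distributionally. A fully rigorous treatment would require controlling the functional-analytic setting on the neighbourhood where the dressing operator exists --- in particular convergence of the Neumann series for $U$, Fr{\'e}chet differentiability of $U$ as a map between the relevant function spaces, and a precise specification of the space in which the symmetry relation is meant to hold. I would therefore present the above as a formal verification, consistent with the status of the surrounding results in this subsection.
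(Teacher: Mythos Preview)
Your proposal is correct and follows exactly the route the paper intends: the paper does not give an explicit proof of this proposition but states that the results in the integro-differential limit ``are demonstrable in exactly the same way as their finite-dimensional analogues'' and, in the multi-species Appendix, that ``it is trivial to extend the results above.'' You have carried out precisely this transcription of the proof of Proposition~\ref{hydrosym}, correctly tracking the resolvent identity $U_{ac}-\alpha_{ac}\theta^c=\delta_{ac}\delta(\lambda-\lambda'')$ and the contact term, and your closing caveat about the distributional status of the computation is in line with the paper's own disclaimers in Section~\ref{idl}.
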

The putative Poisson bracket now reads
\begin{equation}
\{I,J\} = \sum_{a,b=1}^{N_s} \int dx \int d\lambda \ d\lambda' \frac{\delta I}{\delta \theta^a(x,\lambda)} A^{ab}(\lambda,\lambda') \frac{\delta J}{\delta \theta^b(x,\lambda')},
\end{equation}
with kernel $A^{ab}(\lambda,\lambda')$ given by
\begin{align}
\nonumber A^{ab}(\lambda,\lambda') = &\frac{1}{P'_a(\lambda)^2}\left[\frac{d}{dx} - \sum_{c=1}^{N_s} \int d\lambda'\, \frac{\alpha_{ac}(\lambda,\lambda''){P'}_c(\lambda'')}{P'_a(\lambda)}\theta^c_x(x,\lambda'')\right] + \frac{\alpha_{ab}(\lambda,\lambda')}{P'_a(\lambda)P'_b(\lambda')}(\theta^a_x(x,\lambda)-\theta^b_x(x,\lambda'))\\
 - \sum_{c=1}^{N_s} &\frac{1}{P'_a(\lambda)}\theta^a_x(x,\lambda)\int d\lambda''\,\alpha_{ac}(\lambda,\lambda'')d_x^{-1}\alpha_{cb}(\lambda'',\lambda')\frac{1}{P_b(\lambda')}\theta^b_x(x,\lambda').
\end{align}
Analogues of Conjecture \ref{conj1} and Proposition \ref{sensible} may be formulated as above.
\end{document}